\newcommand{\commentout}[1]{}
\newcommand{\nwc}{\newcommand}
\nwc{\red}{\color{red}}
\nwc{\blue}{\color{blue}}
\nwc{\nn}{\nonumber}
\nwc{\nwt}{\newtheorem}
\nwc{\FF}{\mathcal{F}}
\nwc{\PP}{\mathcal{P}}
\nwc{\xx}{\mathbf{x}}
\nwc{\CC}{\mathbb{C}}
\nwc{\ZZ}{\mathbb{Z}}
\nwc{\RR}{\mathbb{R}}
\nwc{\bk}{\mathbf{k}}
\nwc{\bz}{\mathbf{z}}
\nwc{\bt}{\mathbf{t}}
\nwc{\bom}{\boldsymbol\omega}
\nwc{\bn}{\mathbf{n}}
\nwc{\bN}{\mathbf{N}}
\nwc{\PO}{\mathcal{P}_{\rm o}}
\nwc{\PF}{\mathcal{P}_{\rm f}}
\nwc{\QO}{\mathcal{Q}_{\rm o}}
\nwc{\QF}{\mathcal{Q}_{\rm f}}
\nwc{\PT}{\mathcal{T}}
\nwc{\real}{\text{re}}
\nwc{\imag}{\text{im}}
\nwc{\ep}{\epsilon}
\nwc{\vep}{\varepsilon}
\nwc{\tvep}{\tilde{\vep}}
\nwc{\lt}{\left}
\nwc{\rt}{\right}
\nwc{\mf}{\mathbf}
\nwc{\mb}{\mathbf}
\nwc{\ml}{\mathcal}
\nwc{\bj}{{\mb j}}
\nwc{\bh}{{\mb h}}
\nwc{\bA}{{\mb A}}
\nwc{\IA}{\mathbb{A}} %algebraic
\nwc{\bi}{\mathbf i}
\nwc{\bo}{\mathbf o}
\nwc{\IS}{\mathbb{S}}
\nwc{\IC}{\mathbb{C}} %complex
\nwc{\ID}{\mathbb{D}} %Dedekind
\nwc{\IM}{\mathbb{M}} %Dedekind
\nwc{\IP}{\mathbb{P}} %Dedekind
\nwc{\bI}{\mathbf{I}} %Dedekind
\nwc{\IE}{\mathbb{E}} %Euklides
\nwc{\IF}{\mathbb{F}} %finite field
\nwc{\IG}{\mathbb{G}} %Gauss
\nwc{\IN}{\mathbb{N}} %natural
\nwc{\IQ}{\mathbb{Q}} %rational
\nwc{\IR}{\mathbb{R}} %real
\nwc{\IT}{\mathbb{T}} %torus
\nwc{\IZ}{\mathbb{Z}} %integers
\nwc{\IV}{\mathbb{V}}
\nwc{\IX}{\mathbb{X}}
\nwc{\IY}{\mathbb{Y}}
\nwc{\II}{\mathbb{I}}
\nwc{\cE}{{\ml E}}
\nwc{\cP}{{\ml P}}
\nwc{\cQ}{{\ml Q}}
\nwc{\cL}{{\ml L}}
\nwc{\cX}{{\ml X}}
\nwc{\cW}{{\ml W}}
\nwc{\cZ}{{\ml Z}}
\nwc{\cO}{{\ml O}}
\nwc{\cV}{{\ml V}}
\nwc{\cT}{{\ml T}}
\nwc{\crV}{{\ml L}_{(\delta,\rho)}}
\nwc{\cC}{{\ml C}}
\nwc{\cA}{{\ml A}}
\nwc{\cK}{{\ml K}}
\nwc{\cB}{{\ml B}}
\nwc{\cD}{{\ml D}}
\nwc{\cF}{{\ml F}}
\nwc{\cS}{{\ml S}}
\nwc{\cM}{{\ml M}}
\nwc{\cG}{{\ml G}}
\nwc{\cH}{{\ml H}}
\nwc{\bF}{{\mathbf F}}
\nwc{\bG}{{\mathbf G}}
\nwc{\br}{{\mb r}}
\nwc{\bp}{{\mb p}}
\nwc{\bq}{{\mb q}}
\nwc{\bR}{{\mb R}}
\nwc{\bM}{{\mb M}}
\nwc{\cbz}{\overline{\cB}_z}
\nwc{\supp}{{\hbox{\rm supp}}}
\nwc{\fR}{\mathfrak{R}}
\nwc{\bY}{\mathbf Y}
\nwc{\pft}{\cF^{-1}_2}
\nwc{\bU}{{\mb U}}
\nwc{\bPhi}{{\mb \Phi}}
\nwc{\bPsi}{{\mb \Psi}}
\nwc{\im}{{\rm i}}
\nwc{\om}{\omega}
\nwc{\bdhat}{\hat{\mb d}}
\nwc{\bw}{{\mathbf w}}
\nwc{\mbm}{{\mathbf m}}
\nwc{\lbr}{\textlbrackdbl}
\nwc{\rbr}{\textrbrackdbl}
\nwc{\vzero}{{\mathbf 0}}
\nwc{\cN}{{\mathcal N}}
\nwc{\rbra}{\textrbrackdbl}
\nwc{\lbra}{\textlbrackdbl}
\nwc{\conv}{\hbox{conv}}
\nwc{\rank}{\hbox{rank}}
\nwc{\beq}{\begin{eqnarray}}
\nwc{\beqn}{\begin{eqnarray*}}
\nwc{\eeqn}{\end{eqnarray*}}
\nwc{\eeq}{\end{eqnarray}}
\theoremstyle{remark}               % using normal font in the context
\newtheorem*{lemma}{\bf Lemma}
\newtheorem*{theorem}{\bf Theorem}
\renewcommand{\tilde}{\widetilde}
\renewcommand{\hat}{\widehat}
\newcommand{\e}{\mathrm{e}} % exponent
\renewcommand{\i}{\mathrm{i}} % imaginary
\newcommand{\diag}{\operatorname{diag}}
\renewcommand{\d}{\,\!\operatorname{d}\!} % derivative
\renewcommand{\@}{\partial}
\renewcommand{\vec}[1]{\boldsymbol{#1}}
\newcommand{\ind}{\operatorname{I}} % Indicator function
\newcommand{\Exp}{\mathbb{E}} % Expectation
\renewcommand{\Pr}{\IP}
\newcommand{\paren}[1]{\left({#1}{}_{}^{}\right)} %parenthesis A
\newcommand{\bracket}[1]{\left[{#1}{}_{}^{}\right]} %square brackets A
\newcommand{\abs}[1]{\left| {}_{}^{} {#1}{}_{}^{} \right|}
\newcommand{\norm}[1]{\left\|{#1}\right\|}
\newcommand{\set}[1]{\left\{#1\right\}} %
\nwc{\pdfi}{{f}}
\nwc{\pdfs}{{f^{\rm s}}}
\nwc{\pdfii}{{f_1^{\rm i}}}
\nwc{\pdfsi}{{f_1^{\rm s}}}
\nwc{\chis}{{\chi^{\rm s}}}
\nwc{\mbx}{\mathbf{X}}
\nwc{\bX}{\mathbf X}
\nwc{\bZ}{\mathbf Z}
\nwc{\bE}{\mathbf E}
\nwc{\chii}{{\chi^{\rm i}}}
\nwc{\bB}{\mathbf B}
\nwc{\bH}{\mb H}
\begin{document}
\title{Compressive Radar with Off-Grid  Targets: A Perturbation Approach}
\author{Albert Fannjiang$^{1*}$ and Hsiao-Chieh Tseng$^2$}
 \address{$^1$Department of Mathematics, University of California, Davis, CA 95616-8633, USA.}
\address{
$^2$Department  of Land, Air, \& Water Resources, University of California, Davis, CA 95616.}
\address{
$^*$Corresponding author: fannjiang@math.ucdavis.edu}
% \address{$^*$Corresponding author: fannjiang@math.ucdavis.edu}
% \address{$^{**}$Corresponding author: hctseng@math.ucdavis.edu}

\begin{abstract}
Compressed sensing (CS) schemes are proposed
for monostatic as well as synthetic aperture radar (SAR) imaging with chirped signals and 
Ultra-Narrowband (UNB) continuous waveforms.  In particular, a simple, perturbation 
method is developed to reduce the gridding error for off-grid targets. A coherence bound is obtained
for the resulting measurement matrix. A greedy pursuit algorithm, Support-Constrained Orthogonal Matching Pursuit (SCOMP), 
is proposed to take advantage of the support constraint in the perturbation  formulation
and proved to have the capacity of determining  the off-grid targets to the grid accuracy under favorable conditions.
Alternatively, the Locally Optimized Thresholding (LOT) is proposed to
enhance the performance of the CS method, Basis Pursuit (BP). 
For the advantages of higher signal-to-noise ratio and 
signal-to-interference ratio, it is proposed that Spotlight SAR imaging be
implemented with CS techniques and multi-frequency UNB waveforms. 
%The resulting algorithm BPLOT has a comparable numerical performance to  SCOMP. 
%SCOMP along with performance guarantee is extended to Spotlight SAR with off-grid targets.
%The effectiveness  of SCOMP in the tomographic setting is limited by the use of narrow-band, high frequencies as probe which amplifies the gridding error. 
Numerical simulations show promising results of the proposed approach and
algorithms.  \end{abstract}
\maketitle

  \section{Introduction}
Advances in  compressed sensing (CS) and radar processing have provided tremendous impetus to each other. On the one hand, 
the two CS themes of sparse reconstruction and low-coherence, pseudo-randomized data acquisition are longstanding
concepts in radar processing.
On the other hand,  CS contributes provable performance
guarantees for sparse recovery algorithms and
informs refinement of these algorithms. These and other
important issues relevant to CS radar are thoroughly reviewed in \cite{Ender10, PEPC10} (see also the references therein).

Target sparsity, a main theme in  CS, arises naturally in radar processing. According to the geometrical theory of diffraction \cite{Kel},  the
scattering response of a target at radio frequencies can often be approximated as a sum of responses from individual reflectors.
These scattering centers provide a concise, yet physically
relevant, representation of the target  \cite{GPG}.   A spiky reconstruction of reflectivity may thus be highly valuable for automatic
target recognition. More generally radar images are compressible 
by means of 
either parametric models of physical scattering behaviors
or transform coding \cite{PEPC10}. 

In the present work,
we focus on the case of off-grid point targets which do not sit on a regular grid.
A main drawback  of the standard CS framework is the reliance on a underlying well-resolved  grid \cite{BS07, HS09}. In reality, the dominant scattering centers can not be
assumed to be positioned exactly at the imaging  grid 
points. Indeed,  the standard CS methods  break
down if the effects of off-grid targets are not accounted for \cite{Ender10,PEPC10}.
The problem is,  to reduce gridding error, the grid has to be
refined, giving rise to high coherence of the measurement matrix
which is detrimental  to standard CS methods  \cite{FL1,FL2,Asilomar12}. 
\commentout{
As pointed out in \cite{PEPC10}, ``The suppression of
image side-lobes by $\ell^1$ or greedy algorithms may invite a
qualitative claim of ``super-resolution"; however, existing
CS results are silent regarding resolution and agnostic
regarding bias and variance of parameter estimates in the
underlying continuous parameter space. Indeed, super-resolution
implies that mutual coherence must be large.
}

Can CS approach be extended to the case of  arbitrarily located targets? Several approaches
have been proposed to address this critical question \cite{CEN11,DB11,FL1, FL2}.
% and all of them focus on algorithmicimprovements. 
In this paper we propose a simple, alternative approach,
based on improved  measurement matrices 
as well as improvement in reconstruction algorithms, including
 a greedy pursuit algorithm, called Support-Constrained Orthogonal Matching Pursuit (SCOMP),  to take advantage the support
constraint arising in the new formulation  (Section \ref{sec:fixed}). 
We obtain coherence bounds for the measurement matrices with
the linear chirp (Lemma \ref{lem1}). We prove that
the greedy algorithm can determine  the targets to the grid accuracy under
favorable conditions and obtain an error bound for the target amplitude
recovery (Theorem \ref{thm:omp}). 

In Section \ref{sec:sar} we consider   the Spotlight mode of
Synthetic Aperture Radar (SAR). We extend the approach for
off-grid targets to  Spotlight SAR and propose sparse sampling schemes based on multi-frequency 
Ultra-Narrowband (UNB) waveforms  (Section \ref{sec:narrow}). We extend  the performance guarantee for SCOMP to Spotlight SAR imaging
(Theorem \ref{thm22}).  
%In all cases, we show that CS can achieve the standard resolution limit  with sparse sampling. 
Finally we present numerical experiments demonstrating 
the effectiveness of our approach (Section \ref{sec:num}) and draw conclusion 
(Section \ref{sec7}).  

  \section{Monostatic signal model}
 Let us begin by reviewing  the signal model for a mono-static radar with co-located transmit and receive antennas. A complex waveform $f$ with the carrier frequency $\omega_0$ is transmitted. 
Let $r$ and  $v$ denote the range and the radial velocity,
respectively. 
  We parameterize the complex scene by the reflectivity function $\rho(\tau,u)$ 
  where the delay $\tau= 2r/c_0$ is the round-trip propagation time
  and    $
    u={2v \omega_0}/{c_0}$ is
     the Doppler shift. 
Under the far-field and  {narrow-band approximations} \cite{CB08}, 
the scattered signal is given by 
\beq
    y(t) &=\iint
    x(\tau,u)
    f(t-\tau)
    \e^{-2\pi\i u t}
    \d  u \d \tau +w(t)
    \label{00}
 \eeq
 where 
     \[
     x(\tau, u)=\rho(\tau, u)\e^{-\pi\i u \tau }
     \]
     and $w(t)$  represents the measurement noise. % the circular white complex Gaussian baseband noise. 
     
     In the present work, we focus on the case of  immobile targets,  
  $\rho(\tau, u) = \rho(\tau)\delta( u)$. Eq. (\ref{00}) becomes
      \begin{equation}
    y(t) = \int_{-\infty}^{\infty}
     \rho(\tau) f(t - \tau) \d \tau +w(t).
     \label{SM_conv}
     \end{equation}
    % which is valid even without the narrow-band assumption. 
 
 For the transmitted signal, let
  $ \ind_T$ be the indicator function of duration  $[0,T]$
  of transmission. 
By far the most commonly used waveform is the linear frequency-modulated  chirp 
\beq
f_{\text{LC}}(t) =  
    \exp\bracket{2\pi\i\paren{\frac{\alpha_1}{2} t^2 + \omega_0 t}} \ind_T(t)
    \eeq
    owing to the simplicity  in implementation. 
    The bandwidth of linear chirp is
    $B= \alpha_1 T$.

   With discrete targets located at $\{\tau^*_k: k=1,\cdots,s\}$ and sampling times $\{t_j:j=1,\cdots,m\}$, the signal model is given by
  \beq
  \label{3'}   y(t_j) &=&  \sum^s_{k=1} \rho_k f_{\rm LC}(t_j - \tau^*_k) +w(t_j)\\
  &=& \sum^s_{k=1} \rho_{k} 
   \exp\bracket{2\pi\i\paren{\frac{\alpha_1}{2} (t_j-\tau^*_k)^2 + \omega_0 (t_j-\tau^*_k)}} +w(t_j).\nn
   \eeq
  \subsection{On-grid targets}  
 
 Suppose that the targets are located exactly on the grid points of spacing $\Delta\tau$, i.e. each $\tau_k^*$ is an integer multiple of $\Delta\tau$ . Then it
 is natural to extend $\{\rho_k\}$ to the entire imaging grid,  with value zero when a target is absent, and turn (\ref{3'}) into a linear inversion problem as follows.  
 Let $\tau_k = k\Delta\tau$,
 $k=1,\ldots,n$ and 
    \[
  \bar{t}_j = t_j/T \in [0,1].
  \]
be the normalized sampling times.

 We have from (\ref{3'}) that 
      \beq
 y(t_j)  &= &  f_{\text{LC}}(t_j) \sum_{k=1}^n\rho_{k} f_{\text{LC}}(-\tau_k) 
   \exp\bracket{-2\pi\i \alpha_1 \tau_k t_j}+w(t_j),\quad j=1,...,m\label{7''}    \label{SM_conv_pt}
  \eeq
{where $n$ is the total number of grid points in the range and $m$ is the number of observed data.} In the absence of a target at a grid point $\tau_k$, the corresponding target amplitude
$\rho_k=0$. 

 The main point of CS is to recover the targets, $\{\rho_k,\tau_k\}_{k=1}^n$, from $\{y(t_j)\}_{j=1}^m$  with $m$ much smaller than $n$. 

 Suppose that  the bandwidth $B$ satisfies  \beq
 Q= B\Delta \tau=\alpha_1 T\Delta\tau\in\mathbb{N}\label{B}
  \label{33}
  \eeq
  where   $Q$ is the resolution-time-bandwidth product. 
Then with 
\beq
Y_j &=& {y(t_j)}/{   f_{\text{LC}}(t_j) } \label{6}\\
E_j&=&w(t_j)/f_{\rm LC}(t_j)\label{6'}\\
X_k &=& \rho_{k} f_{\text{LC}}(-\tau_k) \label{7}\\
  F_{jk} &= &\exp\bracket{-2\pi\i \alpha_1 \tau_k t_j}
= \exp\bracket{-2\pi\i Q k \bar{t}_j}\label{8}
\eeq
 we can write the signal model (\ref{SM_conv_pt}) as the  linear system 
 \beq
 Y=\bF X+E.\label{10'}
 \eeq 

A main thrust of CS is the performance guarantee for
the Basis Pursuit (BP):
\beqn
\hat X=\hbox{arg}\min \|Z\|_1,\quad \|\bF Z-Y\|_2\leq\epsilon
\eeqn
under the assumption of the restricted isometry property (RIP):
\beqn
a(1-\delta_k)\|Z\|_2\leq\|\bF Z\|_2\leq a(1+\delta_k)\|Z\|_2
\eeqn
for some constant $a>0$ and all $k$-sparse $Z$ where $\delta_k$ is the $k$-th order
restricted isometry constant. 
More precisely, we have the following statement for $Q=1$ 
 \cite{Candes08,Rauhut08}. 

  \begin{proposition}
 Let $ \bar{t}_j\in[0,1]$, $j=1,2,\ldots,m$ be independent uniform random variables. If
   \beq
   \label{60}
   \frac{m}{\ln m} \geq Cs\ln^2 s\ln n\ln\frac{1}{\beta}\ , \quad \beta\in(0,1)
   \eeq
   for some universal constant $C$ and sparsity level $s$, then the random partial 
   Fourier measurement matrix $[\exp{(-2\pi\i k\bar{t}_j)}]$, $k=1,\ldots,n$,  satisfy the RIP  with $\delta_{2s}<\sqrt{2}-1$ and the BP solution $\hat X$ satisfies
   $$\norm{\hat{X}-X}_2\leq C_0 \frac{1}{\sqrt{s}}\norm{X^{(s)}-X}_1 + C_1\norm{E}_2
   \ , \quad \norm{\hat{X}-X}_1\leq  C_0 \norm{X^{(s)}-X}_1 + C_1\norm{E}_2$$
   for some constants, $C_1$, with probability at least $1-\beta$.
   Here $X^{(s)}$ is the best $s$-sparse approximation of $X$. 
   \label{thm1}
  \end{proposition}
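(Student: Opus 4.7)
The plan is to decouple the probabilistic and deterministic content of the statement. The RIP conclusion is a random-matrix fact about partial Fourier ensembles, while the reconstruction bounds for basis pursuit follow from a now-standard deterministic argument of Cand\`es once $\delta_{2s}<\sqrt{2}-1$ is in hand. So I would first prove that under the stated sample complexity the normalized matrix $\tilde{\bF}=m^{-1/2}[\exp(-2\pi\i k\bar t_j)]$ satisfies RIP of order $2s$ with constant strictly below $\sqrt{2}-1$ on an event of probability at least $1-\beta$, and then quote the deterministic BP error bounds.

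For the RIP step, I would fix a support $S\subset\{1,\dots,n\}$ with $|S|=2s$ and study the Gram matrix $\tilde{\bF}_S^*\tilde{\bF}_S - \bI$, whose operator norm is the quantity controlling $\delta_{2s}$ on this support. Writing this matrix as $m^{-1}\sum_j \bZ_j$ with independent zero-mean random self-adjoint matrices $\bZ_j$ (formed from the rows of $\tilde{\bF}_S$ minus identity), I would apply either a matrix Bernstein inequality or, preferably, a Rudelson--Vershynin style entropy/chaining argument with Dudley's integral to obtain a tail bound of the form $\Pr(\|\tilde{\bF}_S^*\tilde{\bF}_S-\bI\|>\delta)\le 2\exp(-cm\delta^2/(s\ln^2 s\ln n))$ when $m$ meets the hypothesized growth rate. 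A union bound over the $\binom{n}{2s}\le n^{2s}$ choices of $S$ then absorbs an additional $\ln n$ factor, and tuning $\delta<\sqrt2-1$ together with the sample complexity $m/\ln m\gtrsim s\ln^2 s\ln n\ln(1/\beta)$ yields the RIP on a set of probability at least $1-\beta$.

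For the reconstruction step, I would invoke the deterministic theorem of Cand\`es: if $\delta_{2s}(\tilde{\bF})<\sqrt2-1$ then any feasible minimizer $\hat X$ of $\min\|Z\|_1$ subject to $\|\tilde{\bF}Z-\tilde Y\|_2\le\epsilon$ satisfies
\begin{equation*}
\|\hat X-X\|_2\le C_0 s^{-1/2}\|X^{(s)}-X\|_1+C_1\epsilon,\qquad \|\hat X-X\|_1\le C_0'\|X^{(s)}-X\|_1+C_1'\sqrt{s}\,\epsilon,
\end{equation*}
the proof of which is the standard tube/cone decomposition combining the $\ell_1$ feasibility of $X$ with a shelling argument on the complement of the best-$s$ support. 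Taking $\epsilon=\|E\|_2$ and noting that the model (\ref{10'}) transfers verbatim to the normalized system converts these into the bounds displayed in the proposition, with the $\sqrt{s}$ factor in the $\ell_1$ bound being absorbed into $C_1$ after relabeling constants.

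The main obstacle is the RIP step and specifically achieving the $\ln^2 s\ln n$ scaling rather than a cruder $s$-dependent bound. A naive covering-number argument over unit-norm $s$-sparse vectors together with a Bernstein tail bound for fixed $x$ typically gives $m\gtrsim s^2\operatorname{polylog}$, which is insufficient. Getting down to (essentially) linear scaling in $s$ requires the Rudelson--Vershynin symmetrization-plus-Dudley approach, or equivalently Rauhut's extension via generic chaining, where the bounded orthonormality of the Fourier system and the dual-Sudakov bound on the entropy of the unit ball of $\ell_1^s$ inside $\ell_2^s$ do the essential work. Once this technical machinery is in place, the remainder of the proof is bookkeeping of constants.
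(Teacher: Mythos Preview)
The paper does not actually prove this proposition; it is stated as a known result quoted from Cand\`es \cite{Candes08} (for the deterministic BP error bounds under RIP) and Rauhut \cite{Rauhut08} (for the RIP of random partial Fourier matrices at the stated sample complexity). Your proposal correctly identifies this two-part structure and accurately sketches the content of those references, including the key technical point that the Rudelson--Vershynin symmetrization-plus-chaining argument (rather than a naive union bound) is needed to reach the near-linear-in-$s$ scaling. So your approach is not ``different'' from the paper's --- there is simply no proof in the paper to compare against, and what you outline is essentially the argument in the cited works.
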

 The assumption of independent uniform random variables underlies 
 the  important role of random sampling in CS. Random sampling also induces 
 incoherence (see Lemma \ref{lem1} below).  Depending on the nature of measurement matrix  certain random measurements tend to yield the best performance in reconstruction 
 with sparse sampling. 
   
  According to the above result, the (normalized) sampling times
  should be chosen randomly and uniformly in $[0,1]$ and the number
  of time samples $m$ on the order of the target sparsity $s$, up to a logarithmic factor.  
  Proposition \ref{thm1} is useful  as long as the point targets are located exactly
  on the grid points which is an unrealistic assumption. 
  \subsection{Off-grid targets}  \label{sec:fixed}
  In practice, the time delays $\{{\tau}^*_k\}$ do not sit exactly on the grid. The mismatch between the actual signal and
  the signal model creates the gridding  error leading  to
  poor performance  of the standard CS methods  \cite{CSPC11, FL1,FL2}. 
  
  To remedy  this problem and   reduce the gridding error,  we modify  the  signal model as follows.   

  Let $ {\tau}_k = (k+\xi_k)\Delta\tau $, with
  $\abs{\xi_k}<1/2$, which is meant to capture the actual target time delays. 
 We modify  (\ref{7''}) to obtain 
  \beq
   &&y(t_j) =   f_{\text{LC}}(t_j) \sum_k \rho_{k} f_{\text{LC}}(-k\Delta\tau-\Delta\tau\xi_k)
   \exp{(-2\pi\i Q\bar t_j k)}\exp{(-2\pi\i Q\xi_k \bar t_j)}+w(t_j).   \label{SM_Range_pt}
     \eeq
For small $B\max |\xi_k|$  we can write 
  \beq
  \label{second}
 \e^{-2\pi\i Q\xi_k \bar{t}_j} 
  = \e^{-\pi i Q \xi_k} \Big(1 - 2\pi\i Q \xi_k (\bar{t}_j-1/2)+ \mathcal{O}\big( Q^2\abs{\xi_k}^2 \big)\Big).
  \eeq
Let 
  \beq
  \label{sig}
    \sigma=\Big(m^{-1}\sum_l|\bar{t}_l-1/2|^2\Big)^{1/2}
  \eeq
  be the time sample variation. With 
\beq 
   Y_j &= &{y(t_j)}/{   f_{\text{LC}}(t_j) }\label{8'} \\
   X_k &=& \rho_{k} f_{\text{LC}}(-k\Delta\tau-\xi_k\Delta\tau) e^{-\pi i Q\xi_k}
  \\
  X'_k &=& -2\pi \i {  \sigma}{Q} \xi_k X_k,
 \label{9}\\   
   F_{jk} &=& \exp\bracket{-2\pi\i Q k \bar{t}_j}
   \ , \quad 
   G_{jk} = F_{jk}(\bar{t}_j-1/2){  \sigma^{-1}}\label{10}
\eeq
 the linear system takes the form 
  \begin{equation*}Y_j = \sum_k \big( F_{jk}X_k + G_{jk}X'_k \big)+E_j
  \end{equation*}
{ or equivalently } 
\beq
  {Y} = \begin{bmatrix} \bF & \bG \end{bmatrix} \begin{bmatrix} X \\ X'\end{bmatrix}+E \label{CS_offgridfix}\label{7'}
  \eeq
  where the error term 
  \beq
  \label{error}
  E_j=w(t_j)/f_{\rm LC}(t_j)+   \sum_k 
  F_{jk}X_k \mathcal{O}\big( Q^2|\xi_k|^2 \big)     
  \eeq
  contains not only the measurement noise but also
  the gridding error due to neglect of the second order term in (\ref{second}). 
  
  From (\ref{9}) we see that the magnitude of $X'$ is directly proportional to $Q$. 
  Moreover, increasing $Q$ also increases the error in the approximation (\ref{second})
  and hence the gridding error for the system (\ref{7'}). 
  
  After $X$ and $X'$ are solved from the system, we  can estimate $\{\xi_k\}$ and $\{\rho_k\}$, respectively, by
  \[
  \hat \xi_k={i\Delta\tau X_k'\over 2\pi{\sigma} Q X_k}
  \]
   and 
  \[
 \hat  \rho_k= {X_k \e^{\pi i Q\hat \xi_k}\over f_{\text{LC}}(-k\Delta\tau-\Delta \tau \hat \xi_k) }.
  \] 
  
  It is generally difficult to establish RIP for
  matrices other than random partial Fourier matrices and
  random matrices of independently and identically distributed (i.i.d.) entries. 
  An alternative notion is the mutual coherence. The mutual coherence $\mu$ of a matrix $\bA$ is defined by
  the maximum normalized inner product between columns of $\bA$:
  \beq
  \mu(\bA)=\max_{i\neq j}{|A_i^*A_j|\over \|A_i\|_2\|A_j\|_2}.
  \eeq
In CS one seeks low level of mutual coherence in the measurement matrix.  
  
  The following lemma states a coherence bound for
  the system  (\ref{7'}). 
   \begin{lemma}  Let $ \bar{t}_j\in[0,1]$, $j=1,2,\ldots,m$ be independent uniform random variables. Suppose 
  $2n<\delta \exp\bracket{K^2/2}$ where $\delta$ and $K$ are two arbitrary numbers. Then the mutual coherence $\mu$ of the
  combined sensing matrix $\bA=[\bF \,\, \bG]$ satisfies  
  \beq
  \label{19}
  \mu \leq  C \bracket{ \frac{\sqrt{2}K}{\sqrt{m}} + \frac{1}{2\pi Q} }
  \eeq
  for some universal constant $C$, with probability greater than $(1-\delta)^2-4e^{-m/18}$.\label{lem1}
  \end{lemma}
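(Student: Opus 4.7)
The plan is to control each off-diagonal entry in the Gram matrix of $\bA = [\bF \,\, \bG]$ and combine the bounds via a union bound. First I observe that every column of $\bA$ has deterministic norm $\sqrt{m}$: for $\bF$ this is immediate from $|F_{jk}|=1$, and for $\bG$ the definition (\ref{sig}) of $\sigma$ was chosen precisely so that $\|G_{\cdot k}\|_2^2 = \sigma^{-2}\sum_j(\bar t_j - 1/2)^2 = m$. Hence $\mu(\bA) = \max_{i \neq j} m^{-1}|\langle A_i, A_j\rangle|$, and I split the analysis according to whether both columns lie in $\bF$, both lie in $\bG$, or one lies in each block.

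In each case the inner product is a sum $S_{ij} = \sum_{l=1}^m Z_l$ of i.i.d.\ bounded functions of $\bar t_l$, and I first compute $\mathbb{E}[Z_1]$ via elementary Fourier integrals on $[0,1]$ against $e^{-2\pi\i Q(j-i)\bar t}$. For two columns of $\bF$ the expectation vanishes. For one column of each, a single integration by parts of $(\bar t - 1/2)e^{-2\pi\i Q(j-i)\bar t}$ gives $|\mathbb{E}[Z_1]| = 1/(2\pi Q|j-i|)$, which after multiplication by $\sigma^{-1}\approx\sqrt{12}$ produces the dominant bias $\sim 1/(2\pi Q)$ appearing in (\ref{19}); the subcase $i=j$ has $Z_l = (\bar t_l - 1/2)\sigma^{-1}$ with zero mean. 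For two columns of $\bG$ a double integration by parts yields the subdominant $O(1/(Q|j-i|)^2)$ bias. These three expectation computations are what produce the $1/(2\pi Q)$ contribution to the stated coherence bound.

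For the random deviation from the mean I apply Hoeffding's inequality for bounded complex variables: since $|\bar t_l - 1/2|\le 1/2$, each $Z_l$ is bounded by a universal constant times $\sigma^{-1}$, so $\Pr\bigl(|S_{ij} - m\,\mathbb{E}[Z_1]| \ge K\sqrt{2m}\bigr) \le 4\exp(-K^2/2)$. Normalizing by $m$ yields the concentration contribution $\sqrt{2}K/\sqrt m$ in (\ref{19}). A union bound over the pairs, combined with the hypothesis $2n<\delta\exp(K^2/2)$, controls the overall failure probability; the factor $(1-\delta)^2$ in the statement arises from applying this argument separately to two groupings (the same-block pairs and the cross-block pairs), each contributing a $(1-\delta)$ event.

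Finally, since $\sigma$ appears in the denominators of both the $\bG$ column norms and the $F$-$G$ expectations, I must keep $\sigma$ bounded below (and above) by fixed constants on a high-probability event. Since $(\bar t_l - 1/2)^2 \in [0,1/4]$ with mean $1/12$, Hoeffding applied to $\sigma^2 = m^{-1}\sum_l(\bar t_l - 1/2)^2$ yields $\Pr(|\sigma^2 - 1/12| \ge c) \le 2\exp(-\Theta(mc^2))$, and a suitable choice of $c$ produces the $4e^{-m/18}$ correction in the stated probability. The main obstacle I expect is the bookkeeping of universal constants: the $\sigma^{-1}$ and $\sigma^{-2}$ factors coming from $\bG$ interact with both the bias and the concentration bounds, and one must verify that their product with a small deviation of $\sigma$ from $1/\sqrt{12}$ can be absorbed into the single universal constant $C$ in (\ref{19}).
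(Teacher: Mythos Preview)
Your approach matches the paper's: compute the bias of each of the three types of inner product via elementary Fourier integrals, control the random deviation by Hoeffding plus a union bound, and separately control $\sigma$ to get the $4e^{-m/18}$ term. Two points need correction.

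First, the claim that the summands $Z_l$ are i.i.d.\ fails once you include the factor $\sigma^{-1}$ (or $\sigma^{-2}$) coming from the $\bG$ columns, since $\sigma$ depends on \emph{all} of $\bar t_1,\dots,\bar t_m$. The paper sidesteps this by applying Hoeffding to the \emph{unnormalized} inner products $\sum_j a(\bar t_j)\,e^{2\pi i Q(k-k')\bar t_j}$ with $a(t)\in\{1,\,t-\tfrac12,\,(t-\tfrac12)^2\}$, whose summands genuinely are i.i.d., and only afterwards divides by the column norms; that final division is what forces the lower bound on $\sum_j(\bar t_j-\tfrac12)^2=m\sigma^2$ and produces the $4e^{-m/18}$ correction. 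Your last paragraph effectively lands on the same fix (bounding $\sigma$ away from $0$), but the earlier presentation should not call the $Z_l$ i.i.d.\ until $\sigma$ has been factored out.

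Second, the square in $(1-\delta)^2$ does not come from ``same-block versus cross-block'' groupings. In the paper it arises from applying Hoeffding separately to the real part $S_m$ and the imaginary part $T_m$ of the inner product and then intersecting the two good events. Relatedly, the union bound is taken over the at most $n-1$ distinct values of $k-k'$ (since the inner product depends only on this difference), not over all $O(n^2)$ column pairs; this is why the hypothesis $2n<\delta\,e^{K^2/2}$ suffices.
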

   The proof of the lemma is given
 in appendix \ref{app1}. 
 This lemma says that to reduce the mutual coherence of
 the sensing matrix one should increase the number of data and $Q$.  
 The $Q$-dependent second term on the right hand side of (\ref{19}) is
 due to the presence of the perturbation matrix $\bG$ in the signal model (\ref{7'}) while
 the mutual coherence of the primary matrix  $\bF$ is the first $\cO(m^{-1/2})$ term.

 \commentout{
We then obtain an error bound from
 Lemma \ref{lem1}  in conjunction with 
 the following result (Theorem 3.1 of \cite{DET06}).
       \begin{proposition} Suppose the data is noisy and the $\ell^2$-norm
      of the noise is less than $\epsilon$. 
 If the sparsity of $X$ satisfies 
 \beq
 \label{20}
 s<{1\over 4} (1+{1\over \mu})%\left(8 C \bracket{ \frac{\sqrt{2}K}{\sqrt{m}} + \frac{1}{2\pi Q} }\right)^{-1}
 \eeq
    then the BP solution $\hat X$
    satisfies the error bound
    \begin{equation}
     \norm{\hat X -X}^2_2\leq \frac{4\epsilon^2}{1-(4s-1)\mu} .     
    \end{equation}
\label{thm2}
  \end{proposition}
}

\subsection{  Support-constrained OMP}
Let $\supp(X)$ denote the support set of $X$ which is the set of
index $j$ with $X_j\neq 0$. 
 Note that  the support constraint 
  \beq
  \label{sc}
  \supp(X')\subseteq\supp(X)
  \eeq
   can be utilized in the greedy pursuit such as Orthogonal Matching Pursuit (OMP) as follows. A common stage for any greedy pursuit is to choose the index corresponding to 
  the column(s) of the maximum coherence with the residual vector. Since
  ${X},{X}'$ have the same sparse structure, one may utilize the a priori information: choose the
  $k$-th columns of $\bF$ and $\bG$, and test the size of the projected vector from ${Y}$ on the span
  of the two columns.
 
 {
    \begin{center}
   \begin{tabular}[width=4in]{||l||}
   \hline
   \centerline{{\bf Algorithm 1.}\quad  Support-Constrained OMP (SCOMP)} \\ \hline
   Input: $\bF,\bG, Y, \|E\|_2$\\
 Initialization:  $X^0 = 0, R^0 = Y$ and $\cS^0=\emptyset$ \\ 
Iteration:\\
\quad 1) $i_{\rm max} = \hbox{arg}\max_{i}\Big(|F^*_{i}R^{k-1}|+|G^*_{i}R^{k-1} |\Big)$\\
 \quad      2) $\cS^k= \cS^{k-1} \cup \{i_{\rm max}\}$ \\
  \quad  3) $(X^k, X^{'k}) = \hbox{arg} \min\|
     \bF Z+\bG Z'-Y\|_{2}$ s.t. \hbox{supp}($Z'$) $\subseteq$  \hbox{\supp}($Z$) $\subseteq S^k$ \\
  \quad   4) $R^k = Y- \bF X^k-\bG X^{'k}$\\
\quad  5)  Stop if $\|R^k\|_{2}\leq \|E\|_2$.\\
 Output: $\hat X=X^k, \hat X'=X^{'k}$. \\
 \hline
   \end{tabular}%\label{tab2}
\end{center}
\bigskip

We have the following performance guarantee for SCOMP.
\begin{theorem} Suppose that the columns of $\bA$  have same 
  2-norm. 
 Let  $\hbox{supp} (X)=\{J_1,\ldots, J_s\}$
and $$X_{\rm max}=|X_{J_1}|+|X'_{J_1}|\geq |X_{J_2}|+|X'_{J_2}|\geq \cdots\geq |X_{J_s}|+|X'_{J_s}|=X_{\rm min}. $$. 
Suppose \beq
\label{snr2}
(4s-1)\mu + \frac{4\|E\|_2}{X_{\rm min}} < 1
\eeq
and let $\hat X$ and $\hat X'$ be the SCOMP estimates. Then 
\beq
\label{exact}
\supp(\hat X)=\supp(X)
\eeq
and
\beq
\label{err:omp}
\|\hat X-X\|^2_{2}+\|\hat X'-X'\|_2^2\leq {{2}\|E\|_2^2\over {1-\mu (2s-1)}}.
\eeq
\label{thm:omp}\label{thm2}
\end{theorem}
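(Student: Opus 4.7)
My plan is to adapt the classical coherence-based analysis of OMP, in the style of Tropp and of Donoho--Elad--Temlyakov, to SCOMP. I view $\bA := [\bF\ \bG]$ as an $m\times 2n$ dictionary and $\bar X := [X;X']$ as the input signal; by the paired support constraint, $\bar X$ lives on $2s$ columns of $\bA$ indexed by $T := \supp(X)$. Intuitively, SCOMP is OMP on this $2s$-sparse problem in which each step activates a pair $(F_i,G_i)$ and ranks pairs by the combined score $|F_i^*R^{k-1}|+|G_i^*R^{k-1}|$. This doubling of the per-step column count is what produces the $(4s-1)\mu$ factor in~(\ref{snr2}) (versus the classical $(2s-1)\mu$ of OMP), and the two inner products appearing in the selection rule explain the coefficient $4$ on the noise term.

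To establish (\ref{exact}), I would induct on the iteration~$k$. Suppose $\cS^{k-1}\subseteq T$ and $|\cS^{k-1}|<s$. Since $\bar X^{k-1}$ is supported on $\cS^{k-1}$ and the LS step enforces $\bA_{\cS^{k-1}}^*R^{k-1}=0$, we can write $R^{k-1}=\bA_T\bar\Delta+E$ with $\bar\Delta := \bar X-\bar X^{k-1}$. Unit-norm columns (rescale if necessary, using the equal-norm hypothesis) and the coherence bound $|A_i^*A_j|\leq \mu$ for $i\neq j$ give, by the triangle inequality,
\[
|F_i^*R^{k-1}|+|G_i^*R^{k-1}|\leq 2\mu\|\bar\Delta\|_1+2\|E\|_2,\qquad i\in T^c.
\]
Picking $i^*\in T\setminus\cS^{k-1}$ that maximizes $|\Delta_{i^*}|+|\Delta'_{i^*}|$ (which equals $|X_{i^*}|+|X'_{i^*}|\geq X_{\min}$) and carefully expanding the two inner products produces the matched lower bound
\[
|F_{i^*}^*R^{k-1}|+|G_{i^*}^*R^{k-1}|\geq (1+\mu)(|\Delta_{i^*}|+|\Delta'_{i^*}|)-2\mu\|\bar\Delta\|_1-2\|E\|_2.
\]
Comparing the two estimates, and using a Neumann-series bound to control the $\cS^{k-1}$-portion of $\bar\Delta$ in terms of its $T\setminus\cS^{k-1}$-portion, reduces the inductive step to exactly the hypothesis (\ref{snr2}). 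The same chain of inequalities also forces $\|R^{k-1}\|_2>\|E\|_2$ whenever $|\cS^{k-1}|<s$, so the algorithm does not halt prematurely, and after $s$ iterations $\cS^s=T$.

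For (\ref{err:omp}), once $\cS^s=T$ the LS step gives $u := \bar X^s_T-\bar X_T = (\bA_T^*\bA_T)^{-1}\bA_T^*E$. Gershgorin applied to the $2s\times 2s$ Gram matrix $\bA_T^*\bA_T$ (diagonal $1$, off-diagonals of modulus $\leq\mu$) yields $\lambda_{\min}(\bA_T^*\bA_T)\geq 1-(2s-1)\mu$. Writing $\bA_T u=E-R^s$, using the stopping criterion $\|R^s\|_2\leq\|E\|_2$ together with $\bA_T^*R^s=0$, and bounding the resulting $\|\bA_T u\|_2^2$ by $2\|E\|_2^2$ gives
\[
\|\hat X-X\|_2^2+\|\hat X'-X'\|_2^2 = \|u\|_2^2\leq\frac{\|\bA_T u\|_2^2}{1-(2s-1)\mu}\leq\frac{2\|E\|_2^2}{1-(2s-1)\mu},
\]
which is exactly (\ref{err:omp}).

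The main technical obstacle is the bookkeeping inside the induction: $\|\bar\Delta\|_1$ contains the least-squares-adjusted entries on $\cS^{k-1}$, and bounding these in terms of the entries on $T\setminus\cS^{k-1}$ requires a Neumann-series estimate for $(\bA_{\cS^{k-1}}^*\bA_{\cS^{k-1}})^{-1}$. Ensuring that the final constants come out at exactly $(4s-1)$ and $4$, rather than something looser like $(6s-1)$ or $8$, is what makes this step delicate. A secondary subtlety is to verify that the support constraint $\supp(Z')\subseteq\supp(Z)$ in Step~3 is not actually binding throughout the induction, so that the LS subproblem reduces to the unconstrained one and the above calculation goes through unchanged.
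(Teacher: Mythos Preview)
Your overall strategy---induction on the iteration count, coherence bounds to compare the selection score on $T$ versus $T^c$, and a Gershgorin/singular-value bound for the final error estimate---is exactly what the paper does. The error bound~(\ref{err:omp}) is derived in the paper precisely as you outline: from $\|\bA_T u\|_2^2\le 2\|E\|_2^2$ together with the bound $\lambda_{\min}(\bA_T^*\bA_T)\ge 1-(2s-1)\mu$ (the paper quotes this as Lemma~2.2 of Donoho--Elad--Temlyakov rather than reproving it via Gershgorin).

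The one substantive difference is in the inductive step, and here the paper sidesteps the Neumann-series bookkeeping you propose. Rather than choosing the comparison index $i^*$ as the maximizer of $|\Delta_i|+|\Delta'_i|$ over $T\setminus\cS^{k-1}$ (which forces you to control the $\cS^{k-1}$-portion of $\|\bar\Delta\|_1$ separately), the paper observes that the residual is again of the form $\bA_T\bar Z+E$ with $\bar Z$ supported on $T$ and the \emph{same} noise $E$, and then simply reruns the step-1 calculation with the \emph{new} maximum $Z_{\max}=\max_{i\in T}(|Z_i|+|Z'_i|)$. With this choice one has $\|\bar Z\|_1\le s\,Z_{\max}$ directly, so the lower bound becomes $(1-(2s-1)\mu)Z_{\max}-2\|E\|_2$ and the upper bound $2s\mu Z_{\max}+2\|E\|_2$, yielding the exact constants $(4s-1)$ and $4$ with no further work. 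Since some index in $T\setminus\cS^{k-1}$ still carries its original entries, $Z_{\max}\ge X_{\min}$, and the hypothesis~(\ref{snr2}) closes the induction. Your Neumann-series route is valid in principle but, as you yourself flag, risks losing the sharp constants; the paper's device of taking the maximum over all of $T$ rather than over $T\setminus\cS^{k-1}$ is the trick that avoids it. Your secondary worry about the constraint $\supp(Z')\subseteq\supp(Z)$ being binding is not addressed in the paper and is harmless for the argument, since the least-squares step is effectively unconstrained on $\cS^k\times\cS^k$.
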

The proof is given in Appendix \ref{app:omp}. 

\begin{remark}\label{rmk1}
Since $E$ contains the griding error, the error bound (\ref{err:omp}) may be
too crude to be useful. To improve the accuracy of $\hat X$ and $\hat X'$  we can perform nonlinear least squares (NLS) 
on (\ref{SM_Range_pt}) subject to the exact recovery of the target support (\ref{exact}).
In other words, we solve for
\beq
&&\arg\min\sum_j\Big|f(t_j)-  f_{\text{LC}}(t_j) \sum_{k\in \hbox{\tiny \rm supp} (\hat X)} \rho_{k} f_{\text{LC}}(-k\Delta\tau-\Delta\tau\xi_{k})
   \e^{-2\pi\i Q\bar t_j k}\e^{-2\pi\i Q \xi_{k} \bar t_j}\Big|^2\label{nls}
   \eeq
in  the set of all $\{\rho_{k}:k\in \supp (\hat X)\}\subset \IC^s$ and $\{\xi_{k}: k\in \supp (\hat X)\}\subset (-0.5,0.5)^s$. 
It is natural to use 
 the SCOMP output $\hat X, \hat X'$ 
as the initial guess for iterative methods (e.g. the Gauss-Newton method or gradient methods) for (\ref{nls}). 
\end{remark}
}

\begin{remark}\label{rmk3}
Lemma \ref{lem1} and Theorem \ref{thm:omp} together suggest  that to enhance
the performance of SCOMP one should increase $Q$. On the other hand,
larger $Q$ also tends to correspond to a larger gridding error for the system (\ref{7'}).  As we shall see
in Section \ref{sec:num}, $Q=1$ yields the best result. 
As the $Q$-dependence of the coherence estimate (\ref{19}) is due to the perturbation matrix $\bG$,  
we speculate that the actual  performance of SCOMP has more to do with
the mutual coherence of the primary matrix $\bF$ which is $Q$ independent and decays
like $m^{-1/2}$. 

\end{remark}

{ Before ending this section, we note that the gridding error term in (\ref{error})
has the appearance of the matrix perturbation problems studied 
in \cite{HS2,ZLG11}. The analogy, however, is superficial as
$\xi_k$ in (\ref{error}) are part of the unknown and hence
the gridding error is {\em cubic}, not linear,  in the unknown.  
%Moreover the conditions imposed in \cite{HS2} are difficult, if not impossible, to verify for our setting.
}

 \commentout{ 
The sparsity constraint (\ref{20}) can be relaxed if the targets are
randomly distributed in distance (time delay) as stated in   the following proposition \cite{CP09}.
    \begin{proposition}
Suppose  
 the data noises  are additive  Gaussian noise of
      variance $\sigma^2$. 
      Assume that the targets are  uniformly randomly distributed
      and $s$-sparse such that
      \beq
      \label{21}
       s \leq \frac{c_1 n m}{\norm{\bA}_2^2 \log n}. 
      \eeq
  Assume also \beq\label{23}
\mu \leq \frac{c_2}{\log n}.
\eeq
      If 
   \beq
   \label{22}
      \min_{k\in S}\abs{X_k} > 8\sigma\sqrt{2\log n}
  \eeq
      then the solution $\hat{X}$ of the LASSO
      \beq
      \label{lasso}
       \min_{Z}\ \lambda\sigma\norm{Z}_1 + \frac{1}{2}\norm{Y-\bA Z}_2^2 ,\quad \lambda = 2\sqrt{2\log n}      
      \eeq
    obeys $
       \supp(\hat{X}) = \supp(X)       
$
      with probability at least $1-\mathcal{O}(n^{-1})$.\label{prop2}
    \end{proposition}
    Once the targets are determined to the grid accuracy  the target strengths can be computed by a simple least-squares step to obtain a nearly
    optimal reconstruction. 
    
On the one hand, (\ref{21}) demands little on the mutual coherence. On the other hand, in order for (\ref{21}) to improve over (\ref{20}) we need a sufficiently strong spectral norm bound. We prove
the following spectral norm bound  in appendix \ref{app2}. 

\begin{lemma} The sensing matrix in  (\ref{10}) satisfies
the spectral norm bound 
\[
\norm{\bA}_2^2 \leq 2n
\] 
with probability greater than 
\[
\left(1-{C (m-1)\over n}\right)^{m(m-1)} 
\]
where $C$ is an absolute constant. 
\label{lem2}
\end{lemma}
Lemmas \ref{lem1}, \ref{lem2} and Proposition \ref{prop2} imply that
LASSO can achieve nearly optimal reconstruction
of randomly distributed targets  whose number is on the order of the
data number $m$, modulo a logarithmic factor. 
 This is stated in the following theorem.
 
  \begin{theorem}
  Suppose  
 the data noises  are additive  Gaussian noise of
      variance $\sigma^2$. 
      Assume that the targets are  uniformly randomly distributed
      and $s$-sparse such that 
      \[
      s\leq {c_1m\over 2\log n}.
      \]
      Suppose that
      \[
      \frac{\sqrt{2}K}{\sqrt{m}} + \frac{1}{2\pi Q} \leq {c_2\over C\log n}.
      \]
      If (\ref{22}) holds, then the LASSO (\ref{lasso})  recovers the target support
      exactly     with probability at least $1-\mathcal{O}(n^{-1})$.
      \label{new-thm2}
  \end{theorem}
  }

    \section{Spotlight SAR}
    \label{sec:sar} \label{sec:narrow}
   %In this section, we consider the Spotlight SAR with ultra-wideband (UWB) signal.
  % UWB refers to signal waveforms of instantaneous  fractional bandwidth greater than 25\%  with respect to a center frequency  \cite{UWB}.
   
   In this section, we consider the Spotlight SAR   for a stationary scene,  
represented by the reflectivity  $\rho(\br)$.
For simplicity of the presentation, we focus on the case of two dimensions $\br=(r_1,r_2)$. 
The adaption to three dimensions is straightforward. 

 In standard radar processing, the received signal, upon receive, is typically  deramped by mixing the echo with the
reference transmitted chirp \cite{Jak}.  Under the start-stop approximation and a far-field
assumption the deramp processing produces  samples of the Fourier transform of the Radon
projection, orthogonal to the
radar look direction, of the scene reflectivity multiplied by a quadratic phase term. Furthermore, if  the time-bandwidth product $TB=\alpha_1T^2$  is significantly  larger than the total number $n$ of resolution cells, 
 the quadratic phase term can be neglected and the deramped
 signal can be written simply as \cite{MOJ}
   \beq
 && y(\nu,\theta)=\cF[\rho](\nu\cos\theta,\nu\sin\theta) +w(\nu,\theta) \label{45}
\eeq 
where $\cF$ is the 2-d Fourier transform, $\theta$  the look angle, $\tau_0$ the round-trip travel time to the scene center, $w$ the measurement noise
and 
 \beq
 %&& y(\nu(t),\theta)=\cF[\rho](\nu(t)\cos\theta,\nu(t)\sin\theta) +w(\nu(t),\theta),\quad
 \nu(t)={2\over c_0}\lt(\omega_0+\alpha_1 (t-\tau_0)\rt)
 \label{45'}
\eeq 
 the spatial frequency.
For a sufficiently small scene,  $t$ is effectively limited to $[\tau_0,\tau_0+T]$ and hence
$\nu(t)$  is restricted to 
\beq
\nu\in [\nu_0,\nu_*],\quad \nu_0={2\omega_0/ c_0},\ \  \nu_*=\nu_0+{2\alpha_1 T/c_0}. 
\label{50'}\label{59}
 \eeq
 
 Alternatively, the SAR tomography  (\ref{45}) can be implemented by multi-frequency, Ultra-Narrowband (UNB) continuous waveforms \cite{cis-siso}. A multi-frequency UNB SAR has many practical advantages such as 1)  relatively simple, low cost transmitters are deployed,  2) SNR is increased
 as reduced bandwidth results in less unwanted thermal noise, 3) UNB signals provide 
 relief when the available electromagnetic spectrum is eroded by other civilian and military
 radar applications. For UNB SAR, the spatial frequency   $\nu$ in (\ref{45}) 
 is related to the carrier frequency $\omega$ of continuous waveform by
 $\nu=2\omega/c_0$.  UNB multi-frequency SAR is particularly appealing from the point
 of view of compressed sensing as the associated multiple spatial frequencies can be
 viewed as sparse sampling of the continuous range $[\nu_0,\nu_*]$  of spatial frequencies.

 Let the imaging domain be the finite square lattice
\beq
\label{pix}
\cL=\lt\{\ell(p_1, p_2): p_1,p_2=1,...,\sqrt{n}\rt\}.
\eeq
The total number of  cells
 $n$ is  a perfect square. For the off-grid  targets represented 
 by
 \[
 \rho(\br)=\sum_{\bp\in \IZ^2} \rho_\bp \delta (\br-\ell\bp-\ell \bh_\bp), \quad \bh_\bp=(h_{1\bp}, h_{2\bp}),\,\, |h_{1\bp}|, |h_{2\bp}|< 1/2
 \]
 the signal model (\ref{45}) becomes
 \beqn
 y(\nu,\theta)&=&\sum_{\bp\in \IZ^2}\rho_\bp \exp{[-2\pi \i \ell \nu \bdhat\cdot
 (\bp+\bh_\bp)]} +w(\nu,\theta) \eeqn
   where $\bdhat=(\cos\theta,\sin\theta)$ denotes
the direction of look.
Following  the same perturbation technique
\[
\e^{-2\pi \i \ell \nu \bdhat\cdot
 (\bp+\bh_\bp)}=\e^{-2\pi \i \ell \nu \bdhat\cdot
 \bp}\Big(1-2\pi\i \ell\nu \bdhat \cdot\bh_\bp+\cO( |\ell\nu \bdhat \cdot\bh_\bp|^2)\Big)
 \]
 we consider the signal model
 \beq
 y(\nu,\theta)
 &=&\sum_{\bp\in \IZ^2}\rho_\bp \e^{-2\pi \i \ell \nu \bdhat\cdot
 \bp}(1-2\pi\i \ell\nu \bdhat \cdot\bh_\bp)+e(\nu,\theta)
 \eeq
 where the error term 
 \beq
 \label{55}
   e(\nu,\theta)=w(\nu,\theta)+\sum_\bp \rho_\bp \e^{-2\pi \i \ell \nu \bdhat\cdot
 \bp}\cO( |\ell\nu \bdhat \cdot\bh_\bp|^2)
 \eeq
 includes the measurement noise $w$
 and the gridding error. 
 
 We shall distinguish two regimes: the Fully Diversified  Multi-Frequency (FDMF) SAR with
 $\nu_0=0$ and the Partially Diversified  Multi-Frequency (PDMF) SAR with $\nu_0>0$. 
 
 First we describe a general sampling scheme applicable to both regimes.\\

{\bf SAR scheme A:} We independently  select
$\theta_k, k=1,\ldots,m_1$ according to a probability density function $\phi$ on  $[0,2\pi]$ 
and then, for each $\theta_k$, independently select
$\nu_{kl},l=1,\ldots, m_2,$ according to a probability density
function $g$ on $ [\nu_0,\nu_*]$. The simplest case is with 
$\phi=1/(2\pi), g=1/(\nu_*-\nu_0)$.\\ 

Let 
\beq
\label{sig2}
 \sigma_1=\Big({1\over m}\sum_j\sum_k\nu_{kj}^2\cos^2\theta_k\Big)^{1/2},\quad
\sigma_2=\Big({1\over m}\sum_j\sum_k\nu_{kj}^2\sin^2\theta_k\Big)^{1/2},\,\, m=m_1m_2
\eeq
be the sample variations of spatial frequency.
 Define the primary and secondary target vectors by
 \[
 X_l=\rho_\bp,\quad X'_l=-2\pi \i \ell h_{1\bp} {  \sigma_1} X_l,\quad X''_l=-2\pi \i \ell h_{2\bp} {  \sigma_2 } X_l, \quad l=(p_2-1)\sqrt{n}+p_1. \]
The signal model takes
 the form 
 \beq
 \label{53}
 Y=\bF X+\bG X'+{\mb H} X''+E
 \eeq
 subject to the support constraint
 \beq
 \label{sc2}
 \supp(X')\subseteq \supp(X),\quad \supp(X'')\subseteq \supp (X)
 \eeq 
where the measurement matrix is given by
 \beq
 \label{2dmatrix}
 && F_{il}=e^{-2\pi \i \ell \nu_{kj}\bdhat_k\cdot
 \bp},\quad G_{il}=e^{-2\pi \i \ell \nu_{kj} \bdhat_k\cdot
 \bp}\nu_{kj}\cos{\theta_k} {  \sigma^{-1}_1},\quad H_{il}=e^{-2\pi \i \ell \nu_{kj} \bdhat_k\cdot
 \bp}\nu_{kj}\sin{\theta_k}{  \sigma_2^{-1}}
 \eeq
with  $i=j+(k-1)m_2$.

In the extreme case, we select the spatial frequencies $\nu_l, l=1,...,m_2$ independently
of $\theta_k, k=1,...,m_1.$ The number of degrees of diversity is $m_1+m_2$ now  instead of $m=m_1m_2$ as for SAR scheme A.

% \subsection{PDMF SAR}
% \label{sec:narrow}
\commentout{
First let us consider the high frequency, narrow band regime
 \beq
 \label{hfnb}
 \nu_0\gg 1/\ell,\quad  \nu_0\gg 2\alpha_1T/c_0
 \eeq
 where $\ell$ is the spacing of the imaging grid (see below).
 }
 \commentout{
 We will first discuss the ideal case of full spatial frequency band
 \beq
 \nu\in [0,\nu_*]
  \label{59}
 \eeq
 and consider the partial bandwidth case in Section \ref{sec5.3}. 
 We will return to the narrow band case (\ref{50'}) with
 \[
 \nu_0=2\om_0/c_0\gg 2\alpha_1T/c_0
 \]
 in Section \ref{sec5.4}. 
 }

%In particular we assume that look angles $\theta_k$ are i.i.d. with the probability density function $\phi(\theta)$.  

\commentout{
 Note also that in the narrow-band limit 
 \beq
 \label{nb}
 (\nu_*-\nu_0)/\nu_0\ll 1, 
 \eeq
 there is little frequency diversity and 
 we have the approximation 
  \[
G_{il}\approx e^{-2\pi \i \ell \nu_{kj} \bdhat_k\cdot
 \bp}{m_1^{1/2}\cos{\theta_k}\over (\sum_k\cos^2\theta_k)^{1/2}},\quad H_{il}\approx e^{-2\pi \i \ell \nu_{kj} \bdhat_k\cdot
 \bp} {m_1^{1/2}\sin{\theta_k}\over (\sum_k\sin^2\theta_k)^{1/2}}
 \]
}

\subsection{FDMF SAR}\label{sec:wide}
For  PDMF SAR, we can also use the specialized scheme:\\

{\bf SAR scheme  B:} For $k=1,...,m$ we select  $\nu_k$ and $\theta_k$ together by
solving 
 \beq
 \ell\nu_k\cos\theta_k=Qa_k,\quad \ell\nu_k\sin\theta_k=Qb_k,
 \label{48}
 \eeq
for a fixed $Q\in\IN$ where  $(a_k,b_k), k=1,\ldots,m,$ are i.i.d.
 uniform random variables on $[-1/2,1/2]^2$.\\
 
  Eq. 
(\ref{48}) always has a solution
 in (\ref{59}) under the condition
 \beq
 \nu_0=0,\quad \ell\nu_*\geq Q/\sqrt{2}.
 \eeq
 On the other hand,  for PDMF SAR, $\nu_0\neq 0$ and eq. (\ref{48}) may not have a solution for sufficiently small $a_k$ and $b_k$.

% which  is half of the standard resolution limit for passive radars
% owing to the two-way travel of
%the active radar signal. 

With (\ref{48}), the measurement matrix is given by 
\beq
 \label{2dmatrix'}
 &&F_{kl}=e^{-2\pi \i Q (a_k,b_k)\cdot 
 \bp},\quad G_{kl}=e^{-2\pi \i Q (a_k,b_k)\cdot 
 \bp} a_k{  \sigma^{-1}_1},\quad H_{kl}=e^{-2\pi \i Q(a_k,b_k)\cdot 
 \bp} b_k{  \sigma_2^{-1}}
 \eeq
 with $l=(p_2-1)\sqrt{n}+p_1$ and 
 \beq
\label{sig2'}
 \sigma_1=\Big({1\over m}\sum_{k=1}^ma_k^2\Big)^{1/2},\quad
\sigma_2=\Big({1\over m}\sum_{k=1}^mb_k^2\Big)^{1/2}.
\eeq
\commentout{
In this setting, the primary and auxiliary  targets  are given by 
 \[
 X_l=\rho_\bp,\quad X'_l=-2\pi \i \ell h_{1\bp} {  \sigma_1} X_l,\quad X''_l=-2\pi \i \ell h_{2\bp} {  \sigma_2 } X_l.
 \]
 }
 %The coherence bound given in Lemma \ref{lem1} holds true
% with a slight adjustment of the constants. Thus
 %this system has the  performance guarantee 
 % stated in Theorem \ref{thm2}. 
The measurement matrix  $\bA=[\bF\ \bG \ \bH]$ with (\ref{2dmatrix'}) is a two-dimensional version of (\ref{10}) and hence  
satisfies the coherence bound analogous to
Lemma \ref{lem1}.

\commentout{%coherence bound

  Now we state the coherence bound on the system (\ref{53}). 
 
  \begin{lemma}\label{lemm4}
Assume that look angles $\theta_k\in [0,2\pi]$ are
i.i.d. with the probability density function $\phi(\theta)$.

Suppose
\beq
\label{m-2}
n\leq {\delta\over 8} \e^{K^2/2},\quad \delta, K>0.
\eeq
Then   the sensing matrix satisfies the coherence bound
\beq
\label{mut}
\mu< 
\bar\mu+{\sqrt{2} K\over \sqrt{m_1}}%{2K^2 \over\sqrt{np}}+
%\chi^{\rm s}{\sqrt{2} K\over \sqrt{p}}+
%\chi^{\rm i}{\sqrt{2}K\over \sqrt{n}}+\chi^{\rm i}\chi^{\rm s}
\eeq
 with probability greater than $(1-\delta)^2$
 where in general $\chii$ (resp. $\chis$) satisfies the bound
 \beq
 \label{21-3}
&\bar\mu\leq {c_\gamma}{ {(1+\nu_0\ell)}^{-1/2}} \|\phi\|_{\gamma,\infty}
\label{21-4}
 \eeq
 where $\|\cdot\|_{\gamma,\infty}$ is the H\"older norm
 of order $\gamma>1/2$ and the constant $c_\gamma$  depends only on $\gamma$.
 \end{lemma}
Lemma \ref{lemm4} is analogous to the single frequency coherence bound
given in \cite{cis-simo}.

 Note that the above coherence bound does not require
full, circular view of the scene, but the smoothness of
the sampling density function $g$ which depends
indirectly on the size of the support of $g$. 

To see how spatial frequency sampling  can improve
the coherence bound, consider the case of circular SAR with
randomly and uniformly distributed looks on $[0,2\pi]$. 
Expanding $\phi$ in the Fourier series
\[
\phi(\theta)=\sum_{l} c_le^{\i l\theta}
\]
and denoting the angle between $\bp'$ and $\bp$ by $\measuredangle{\bp'\bp}$
we can bound $\bar\mu$ (cf. (\ref{39-3})) by 
\beq
\label{56}
&&\max_{\bp\neq \bp'}\max_{j}\Big|\sum_l c_l \e^{\i l \measuredangle{\bp'\bp}} \int^{2\pi }_0a_j(\theta) e^{\i l\theta}
e^{2\pi\i \ell \nu_j |\bp-\bp'|\cos\theta }d\theta\Big|\nn
\eeq
where
\[
a_{j}(\theta)\in \{1,\ \nu_j\cos\theta/\sigma_1,\ \nu_j\sin\theta/\sigma_2,\ \nu_j^2\cos^2\theta/\sigma_1^2,\ 
  \nu_j^2\sin^2\theta/\sigma_2^2,\ \nu_j^2\cos\theta\sin\theta/(\sigma_1\sigma_2)\}
    \]
which involves  $J_l(2\pi\ell\nu_j |\bp-\bp'|)$, the Bessel function of order $l$, and their derivatives. 

On the one hand, the worst case bound is  
\beq
\label{asym}
J_l(2\pi \ell \nu_j |\bp-\bp'|) < {c\over \sqrt{\ell \nu_0}},\quad
\forall \bp\neq \bp', \quad\forall j,\quad \forall l
\eeq
for some $c>0$ in consistence with (\ref{21-3}). 
On the other hand, the large argument asymptotic for $J_l$ 
(corresponding to the regime of high frequency, $\nu_0\ell\gg 1$)
\beq
\label{304}
J_l(z)=\sqrt{2\over \pi z}\lt\{ \cos{(z-l\pi/2- \pi/4)}
+\cO(|z|^{-1})\rt\},\quad z\gg 1
\eeq
suggests a similar reduction of $\bar\mu$
under (\ref{303}) by judicious choice of
spatial frequencies. 
If for example the standard resolution criterion
\beq
\label{303}
\ell\alpha_1T/c_0\geq 1/2
\eeq
is satisfied, then the argument $2\pi\ell\nu_j|\bp-\bp'|$ ranges over
at least one period, with $\nu_j$ in (\ref{50'}) 
and $\bp,\bp'\in \cL$,  and 
one can deliberately select a sequence
of spatial frequencies to minimize (\ref{56})
 in view of
the sinusoidal nature of the leading asymptotic in (\ref{asym}).
Ideally  the leading order term in (\ref{asym}) 
may 
be cancelled  out yielding the improved bound
\[
\bar\mu=\cO((\ell\nu_0)^{-1}).
\]
%Likewise, the fluctuation part  of $\mu$, which is bounded by
%the second term of the right hand side of (\ref{mut}),
%can also be reduced by an irregular sampling
%of spatial frequencies. 
%The $|z|^{-1/2}$ decay of the asymptotic (\ref{304})
%indicates the spatial frequencies should weigh
%(slightly) more heavily on large spatial frequencies.

}

    \commentout{% random convolution approach
For immobile targets, the signal model \eqref{SM_conv} 
takes the form of  convolution. Assume the reflectivity function 
  $\rho(t)$ has a $s$-sparse coefficient representation $\vec{x}=\set{x_j}$, $j=1,2,\ldots,n$, under an ONB $\Phi = \set{\varphi_j}$; the sparse coefficients can be recovered
  by random convolution \cite{Romberg09}.
  
  Let 
  $$f(t) = \sum_{j=1}^n \hat{f}_j\e^{2\pi\i t\omega_j}\cdot\ind_{[0,T]}(t).$$
  where $\omega_j = (j-1)\Delta \omega$ be the frequency samples, 
  $B = 2\pi n\Delta\omega$ the bandwidth of $f$, and $\hat{f}_j$, $j=1,2,\ldots,n$ are defined as follows
  \cite{Romberg09}:
  \begin{itemize}
   \item $j=1$ ($\omega=0$): Define the DC component $\hat{f}_1 \sim \pm 1$ with equal probability.
   \item $j=2,3,\ldots,\frac{n}{2}$ ($0<2\pi\omega<\frac{B}{2}$): 
    $\hat{f}_j = \e^{2\pi\i x_j}$ where $x_j\sim\text{unif}[0,1)$.
   \item $j=\frac{n}{2}+1$ ($2\pi\omega=\frac{B}{2}$):  $\hat{f}_{\frac{n}{2}+1} \sim \pm 1$ with equal probability.
   \item $j=\frac{n}{2}+2,\ldots,n$  ($\frac{B}{2}<2\pi\omega<B$): $\hat{f}_j = \hat{f}_{n-j+2}^\ast$.
  \end{itemize}
  The sensing scheme can be written as a underdetermined system
  $\vec{y} = A \vec{x}$ where
  $A=R_m \frac{1}{\sqrt{n}}F^\ast\Sigma F \Phi \in\mathbb{C}^{m\times n}$ is the sensing matrix, with
  $R_m:\mathbb{C}^{n\times n}\to \mathbb{C}^{m\times n}$ the projection operator that select
  $m$ different rows from a matrix; $\Sigma = \diag(\hat{f}_1,\ldots,\hat{f}_n)\in\mathbb{C}^{n\times n}$, 
  and $F\in\mathbb{C}^{n\times n}$, $F_{j,\ell} = \e^{-2\pi\i(j-1)(\ell-1)/n}$ is
  the discrete Fourier matrix,
  $j,\ell=1,2,\ldots,n$. 
  \begin{proposition}
   \cite{Romberg09} Consider the noiseless case.  If
   $$m\geq C\cdot(s\log n + \log^3 n)$$ where $C$ is a universal constant, then the Basis Pursuit 
   recovers $X$ exactly with probability exceeding $1-\mathcal{O}(n^{-1})$.
  \end{proposition}

  Within the setting, we design an wave envelope with duration $T$ that covering all possible time delay region 
  of the targets. Given the sampling interval $\Delta t = T/n$, it yields that $\Delta\omega = \frac{2\pi n}{T}$
  w.r.t.\ the time-frequency relation of discrete Fourier transform, and
  bandwidth $B = \frac{(2\pi n)^2}{T}$. The operator $R_m$ randomly picks $m$ points from $n$ time-grids.
  }

 \subsection{ 2D SCOMP}
{SCOMP with the two-dimensional support constraint is given as follows.
     \begin{center}
   \begin{tabular}[width=4in]{||l||}
   \hline
   \centerline{{\bf Algorithm 2.}\quad  2D SCOMP } \\ \hline
   Input: $\bF,\bG, \bH, Y, \|E\|_2$\\
 Initialization:  $X^0 = 0, R^0 = Y$ and $\cS^0=\emptyset$ \\ 
Iteration:\\
\quad 1) $i_{\rm max} = \hbox{arg}\max_{l}\Big(|F^*_{l}R^{k-1}|+|G^*_{l}R^{k-1} |+|H^*_{l}R^{k-1} |\Big)$\\
 \quad      2) $\cS^k= \cS^{k-1} \cup \{i_{\rm max}\}$ \\
  \quad  3) $(X^k, X^{'k}, X^{''k}) = \hbox{arg} \min\|
     \bF Z+\bG Z'+\bH Z''-Y\|_{2}$,\\
      \hspace{4cm} s.t. $\supp(Z'), \supp(Z'')\subseteq  \supp(Z) \subseteq S^k$ \\
  \quad   4) $R^k = Y- \bF X^k-\bG X^{'k} -\bH X^{''k}$\\
\quad  5)  Stop if $\|R^k\|_{2}\leq \|E\|_2$.\\
 Output: $\hat X=X^k, \hat X'=X^{'k}, \hat X''=X^{''k}$. \\
 \hline
   \end{tabular}%\label{tab2}
\end{center}
\bigskip
From $\hat X, \hat X', \hat X''$, we can recover the off-grid perturbation by
\[
\hat \bh_\bp={i\over 2\pi \ell X_l} \Big({X_l'\over \sigma_1}, {X_l''\over \sigma_2}\Big), \quad l=(p_2-1)\sqrt{n}+p_1.
\]
 
 Performance guarantee similar
 to Theorem \ref{thm2} follows the same line of argument 
 given in Appendix \ref{app:omp}. 

 \begin{theorem}Suppose  that the columns of $\bA=[\bF\ \bG\ \bH]$  have same 
  2-norm. 
 Let  $\hbox{supp} (X)=\{J_1,\ldots, J_s\}$
and 
\beqn
X_{\rm max}=|X_{J_1}|+|X'_{J_1}|+|X''_{J_1}|&\geq& |X_{J_2}|+|X'_{J_2}|+|X''_{J_2}|\geq \cdots \\
&\geq &|X_{J_s}|+|X'_{J_s}|+|X''_{J_s}|=X_{\rm min}. 
\eeqn
Suppose \beq
\label{snr3}
(6s-1)\mu + \frac{6\|E\|_2}{X_{\rm min}} < 1
\eeq
and let $\hat X$ and $\hat X'$ be the output of Algorithm 2. Then 
\[
\supp(\hat X)=\supp(X)
\]
and
\beq
\label{err:omp3}
\|\hat X-X\|^2_{2}+\|\hat X'-X'\|_2^2+\|\hat X''-X''\|_2^2 \leq {{3}\|E\|_2^2\over {1-\mu (3s-1)}}.
\eeq
\label{thm22}
\end{theorem}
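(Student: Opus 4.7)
The proof follows the template of Theorem \ref{thm:omp} (given in Appendix \ref{app:omp}) extended from a two-matrix setting $[\bF\ \bG]$ to the three-matrix setting $[\bF\ \bG\ \bH]$. The argument splits into two parts: (i) proving exact support recovery after $s$ greedy iterations, and (ii) deriving the $\ell^2$ error bound from a least-squares analysis restricted to the recovered support.

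For part (i), I would argue by induction on the iteration index $k$. Suppose $\cS^{k-1}\subseteq\supp(X)$; write the residual in the form $R^{k-1}=\bA(U,U',U'')^{\T}+E$ where $(U,U',U'')$ is the discrepancy between the current estimate and $(X,X',X'')$, with support in $\supp(X)\setminus\cS^{k-1}$ thanks to the least-squares projection in Step 3. Let $W_j:=|U_j|+|U'_j|+|U''_j|$ and $\Psi_l(R):=|F_l^*R|+|G_l^*R|+|H_l^*R|$. Assuming columns of $\bA$ are normalized to unit 2-norm (WLOG), the triangle inequality and the coherence bound $\mu$ give, for $l\notin\supp(X)$,
\[
\Psi_l(R^{k-1})\ \leq\ 3\mu\sum_{j\in\supp(X)\setminus\cS^{k-1}}W_j+3\|E\|_2,
\]
while for $l^*\in\supp(X)\setminus\cS^{k-1}$ maximizing $W_{l^*}$, separating the diagonal contributions (the three $|F_{l^*}^*F_{l^*}|,|G_{l^*}^*G_{l^*}|,|H_{l^*}^*H_{l^*}|$) from the off-diagonal ones,
\[
\Psi_{l^*}(R^{k-1})\ \geq\ (1-2\mu)W_{l^*}-3\mu\!\!\sum_{j\in\supp(X)\setminus(\cS^{k-1}\cup\{l^*\})}\!\!W_j-3\|E\|_2.
\]
Enforcing $\Psi_{l^*}>\Psi_l$ for every $l\notin\supp(X)$, using $W_j\leq W_{l^*}$ and $|\supp(X)\setminus\cS^{k-1}|\leq s$, collapses to the sufficient condition $(6s-1)\mu+6\|E\|_2/W_{l^*}<1$. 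Since $W_{l^*}$ at any iteration dominates the residual target masses and is bounded below by $X_{\min}$ (the hypothesis on the ordered magnitudes of $|X_{J_i}|+|X'_{J_i}|+|X''_{J_i}|$ propagates through the iterations by the monotonicity of the least-squares fit), the assumption (\ref{snr3}) guarantees $i_{\max}\in\supp(X)$ and completes the induction. After $s$ iterations, $\cS^s=\supp(X)$.

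For part (ii), let $\bA_S$ denote the $m\times 3s$ submatrix of $[\bF\ \bG\ \bH]$ with columns indexed by $\supp(X)$ in each block, and let $V=(\hat X_S,\hat X'_S,\hat X''_S)$ and $V_0=(X_S,X'_S,X''_S)$. The Gram matrix $\bA_S^*\bA_S$ has equal diagonal and off-diagonal entries bounded in magnitude by $\mu$, so Gershgorin gives the lower spectral bound $\lambda_{\min}(\bA_S^*\bA_S)\geq 1-(3s-1)\mu$. The stopping criterion $\|R^k\|_2\leq\|E\|_2$ together with the identity $Y=\bA_SV_0+E$ yields $\|\bA_S(V-V_0)\|_2\leq\|R^k\|_2+\|E\|_2$, which when combined with the Gershgorin bound and the elementary inequality $(a+b)^2\leq 2(a^2+b^2)$ to split the contribution among the three component error norms $\|\hat X-X\|_2^2,\|\hat X'-X'\|_2^2,\|\hat X''-X''\|_2^2$ yields the stated bound $3\|E\|_2^2/(1-\mu(3s-1))$.

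The main obstacle will be the bookkeeping in part (i): the triple inner product criterion $\Psi_l$ introduces three coherence contributions from every column in the active support and three noise contributions, which are what inflate the classical $(s-1)\mu$ and $\|E\|_2/X_{\min}$ in OMP to $(6s-1)\mu$ and $6\|E\|_2/X_{\min}$ here. Ensuring that the monotonicity argument for $W_{l^*}\geq X_{\min}$ carries through the iterations (rather than comparing $W_{l^*}$ only to the initial minimum) requires a careful analogue of the ordering used in the proof of Theorem \ref{thm:omp}, but the argument is identical in structure to the two-matrix case.
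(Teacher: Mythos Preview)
Your approach mirrors the paper's exactly: the paper does not give a separate proof of Theorem~\ref{thm22} but states that it ``follows the same line of argument'' as Theorem~\ref{thm:omp} in Appendix~\ref{app:omp}, i.e., induction on the iteration index using coherence bounds on the selection criterion, followed by the singular-value lower bound (Proposition~\ref{prop4}/Gershgorin) for the error estimate. So the overall strategy is correct and matches the paper.

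There is one inaccuracy worth fixing in part~(i). You assert that the discrepancy $(U,U',U'')=(X-\hat X,\,X'-\hat X',\,X''-\hat X'')$ has support in $\supp(X)\setminus\cS^{k-1}$ ``thanks to the least-squares projection.'' This is false: the least-squares step only forces $F_j^*R^{k-1}=G_j^*R^{k-1}=H_j^*R^{k-1}=0$ for $j\in\cS^{k-1}$; it does \emph{not} force $U_j=0$ there (in general $\hat X_j\neq X_j$ for $j\in\cS^{k-1}$). The correct statement, as in the paper's Proposition~\ref{prop22}, is that $\supp(U)\subseteq\supp(X)$ (with $s$ terms), and the upper bound for $\Psi_l$, $l\notin\supp(X)$, must sum over all of $\supp(X)$. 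Fortunately this does not change the constant: the sum still has at most $s$ terms, each bounded by $W_{\max}:=\max_{j\in\supp(X)}W_j$, and the lower bound $\Psi_{l^*}\geq(1-(3s-1)\mu)W_{\max}-3\|E\|_2$ holds at the maximizer $l^*$ of $W_j$. The key observation you should state explicitly is that for every $j\in\supp(X)\setminus\cS^{k-1}$ one has $U_j=X_j$ (since $\hat X_j=0$), hence $W_{\max}\geq X_{\min}$; this is what lets the hypothesis~(\ref{snr3}) close the induction, not any ``monotonicity of the least-squares fit.'' Also note that orthogonality ($\Psi_j=0$ for $j\in\cS^{k-1}$) combined with the positive lower bound forces $l^*\notin\cS^{k-1}$, so the selected index is new.

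In part~(ii), your sentence about using $(a+b)^2\leq 2(a^2+b^2)$ ``to split the contribution among the three component error norms'' does not actually produce the constant $3$; the triangle/parallelogram inequalities give $\|\bA_S(V-V_0)\|_2^2\leq 4\|E\|_2^2$, not $3\|E\|_2^2$. The paper's Appendix~\ref{app:omp} obtains the constant $2$ (resp.\ $3$ here) directly from the string $\|\bA_S(V-V_0)\|_2^2\leq 2\|Y-\bA_SV_0\|_2^2+2\|Y-\bA_SV\|_2^2$ together with the least-squares optimality $\|Y-\bA_SV\|_2\leq\|Y-\bA_SV_0\|_2=\|E\|_2$; you should follow that computation rather than invent a ``three-way split.''
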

\begin{remark}
Analogous to Remark \ref{rmk1}, we can improve the accuracy of recovery
by performing the nonlinear least squares
\beq
\label{nls2}
\arg \min \sum_{k, j}\Big| y(\nu_{kj},\theta_k)-\sum_{\bp\in \hbox{\tiny \rm supp} (\hat X)}\rho_\bp \e^{-2\pi \i \ell \nu_{kj} \bdhat_k\cdot
 (\bp+\bh_\bp)} \Big|^2
\eeq
in the set of all $\{\rho_\bp: \bp\in \supp (\hat X)\}\subset \IC^s$ and
$\{\bh_\bp:\bp\in \supp (\hat X)\} \subset (-0.5,0.5)^{2s}$ with
the SCOMP estimates as initial guess. 
\end{remark}
}

 The greedy algorithm for the 3-dimensional setting and its performance guarantee
 can be  analogously formulated. For the sake of brevity, we will not pursue them here.  

  \section{  Numerical experiments}
  \label{sec:num}
 
  In the following simulations, we use $s=10$ complex-valued targets with
 random  amplitudes
 \[ 1 + (c_1+ i c_2)/\sqrt{8}
 \]
 where $c_1,c_2$ are standard normal random variables.
 We set the grid spacing $\Delta\tau, \ell =1$ 
and let off-grid perturbations $\{\xi_k\}$ be  i.i.d. uniform random variables in $[-0.4, 0.4]$. 
%For the linear chirp probe, we set the base frequency $\omega_0=0$. 
 In all our simulations, we add $1\%$ external noise
to the data and so the signal-to-noise ratio (SNR) is 100. 
  
  The  BP estimates (solved with YALL1 \cite{YALL1}) tend to be ``bushy" and require ``pruning." To take
  advantage of the prior knowledge of sparsity and the support constraint (\ref{sc})
  we apply the technique of Locally Optimized Thresholding (LOT) to the BP estimates $\hat X, \hat X'$
  as follows. In addition to pruning (i.e. thresholding), LOT also locally adjusts
 the reconstruction to minimize the residual subject to the support constraint. 
    \begin{center}
   \begin{tabular}[width=5in]{||l||}
   \hline 
   \centerline{{\bf Algorithm 3.}\quad Locally Optimized Thresholding (LOT)}  \\ \hline
    Input: $\hat X, \hat  X'$, $\bA=[\bF\ \bG], Y, s=\hbox{\rm target sparsity}$.\\
Iteration:  Set $\cS^0=\emptyset$. For $n=1,2,...,s$\\
\quad 1) $i_n= \hbox{arg}\,\,\max_j \big( |\hat X_j|+|\hat X_j'|(2\pi\sigma Q)^{-1}\big),\ \hbox{\rm s.t.} \  j\not\in \cS^{n-1} $.\\
 \quad 2)  $\cS^n=\cS^{n-1}\cup\{i_n\}$.\\
 
Output: 
$(\tilde X, \tilde X') = \hbox{arg}\,\,\min_{z}\|\bF Z+\bG Z'-Y\|_2,$ \
s.t. $ \supp (Z')\subseteq \cS^s,  \supp (Z)\subseteq \cS^s. $\\
     \hline
    %$\hat\mbx= \mbx_{S^s}$ where $\mbx_{S^s}$ is the
    %Hadamard 
    %product of $\mbx$ with the indicator function of $S^s$.\\ .\\
   \end{tabular}
\end{center}
\bigskip

\begin{remark}
As in (\ref{nls}) we can improve the accuracy of the LOT estimates by performing the nonlinear least squares
\beq
\min\sum_j\Big|f(t_j)-  f_{\text{LC}}(t_j) \sum_{k\in \cS^s} \rho_{k} f_{\text{LC}}(-k\Delta\tau-\Delta\tau\xi_{k})
   \e^{-2\pi\i Q\bar  t_j k}\e^{-2\pi\i Q \xi_{k} \bar t_j}\Big|^2\label{nls3}
   \eeq
   in  the set of all $\{\rho_{k}:k\in \cS^s\}\subset \IC^s$ and $\{\xi_{k}: k\in \cS^s\}\subset (-0.5,0.5)^s$ with 
the LOT estimates as  the initial guess
for iterative methods (e.g. the Gauss-Newton method or gradient methods) for (\ref{nls}). 
\end{remark}

      \begin{figure}[t]\centering
  \subfigure[BP]{ \includegraphics[width=8cm]{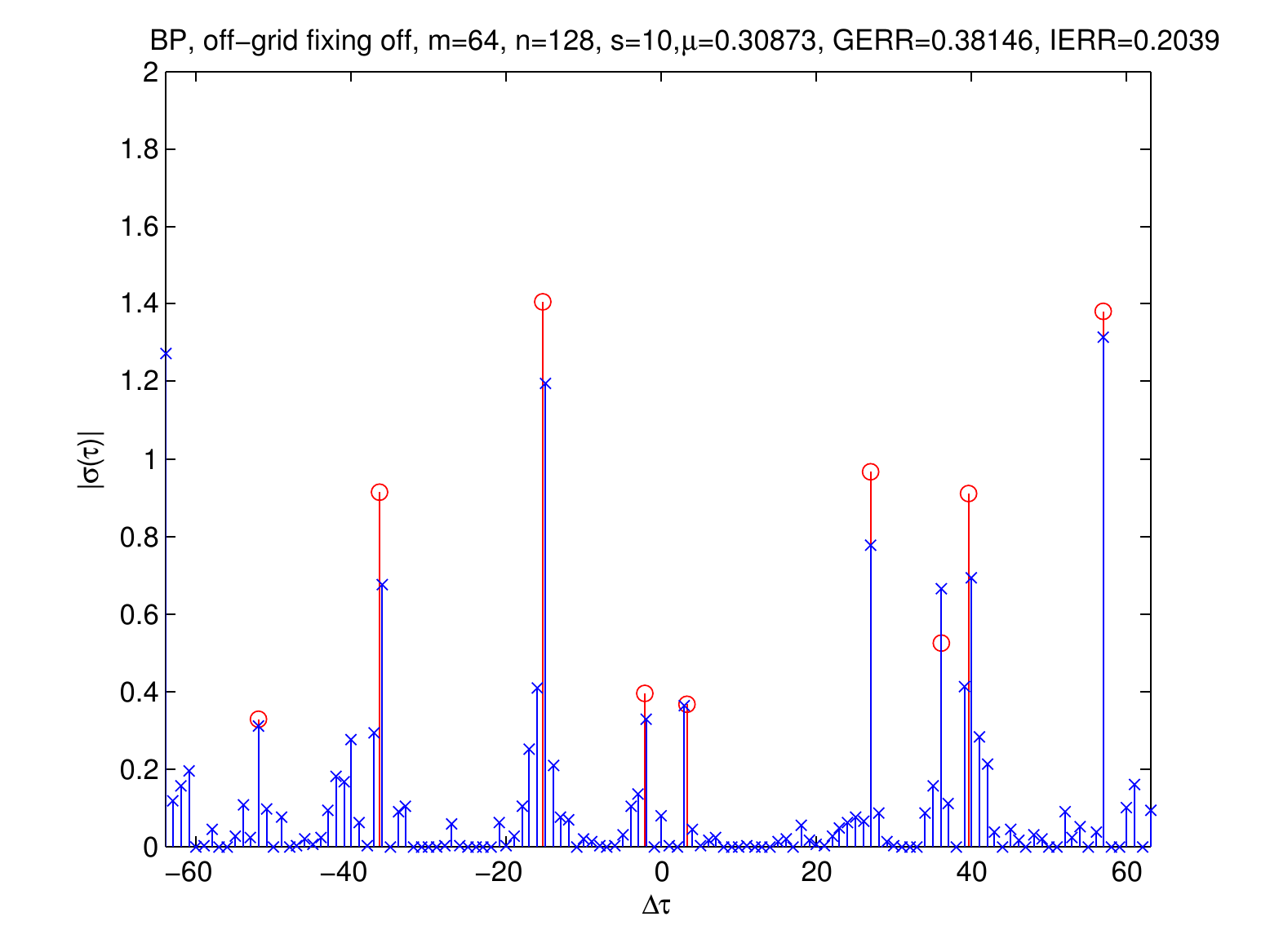}}
   \hspace{-1cm}
       \subfigure[OMP]{        \includegraphics[width=8cm]{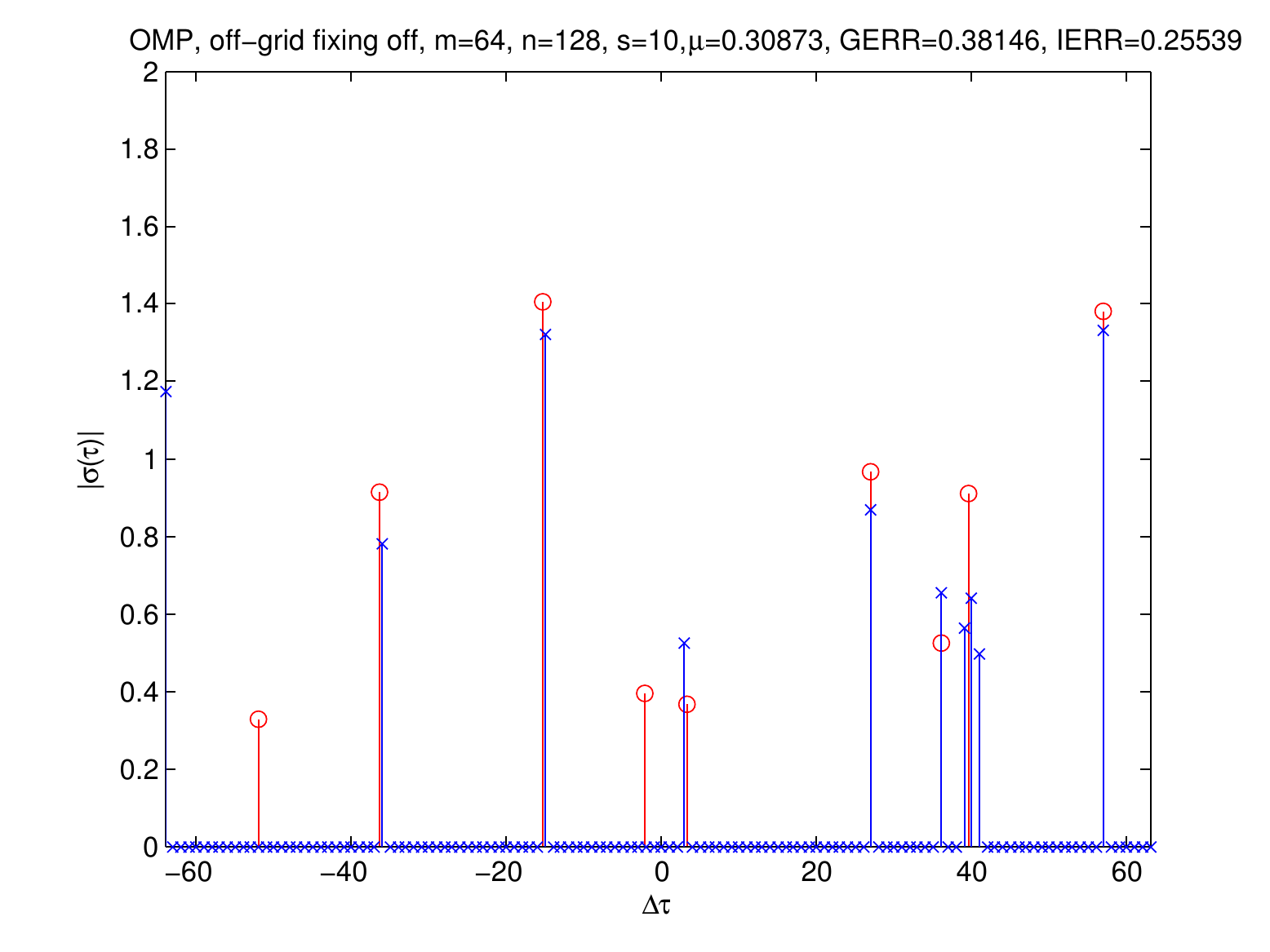} 
}
     \subfigure[BPLOT]{    \includegraphics[width=8cm]{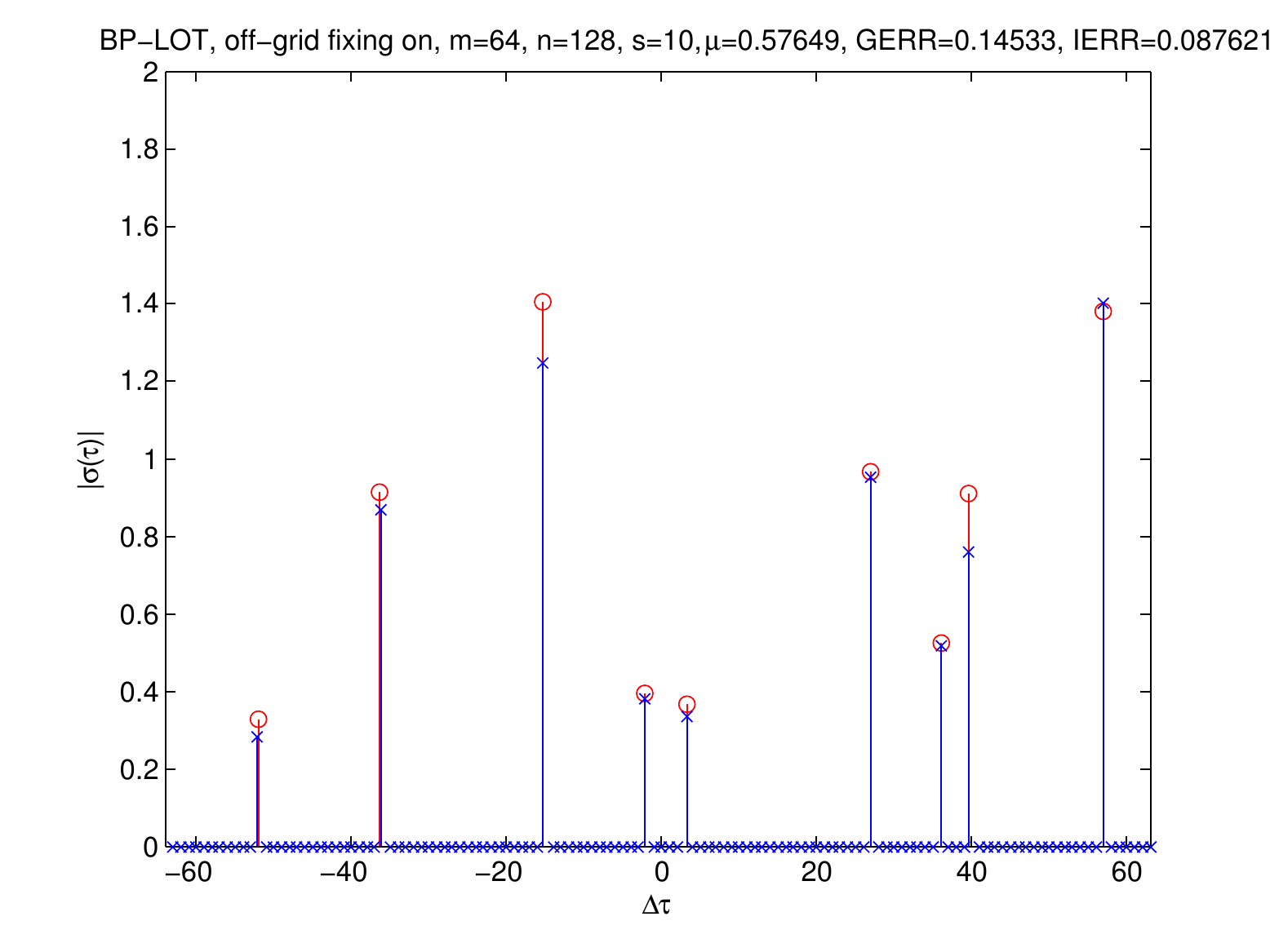}}
\hspace{-1cm}
\subfigure[SCOMP]{  \includegraphics[width=8cm]{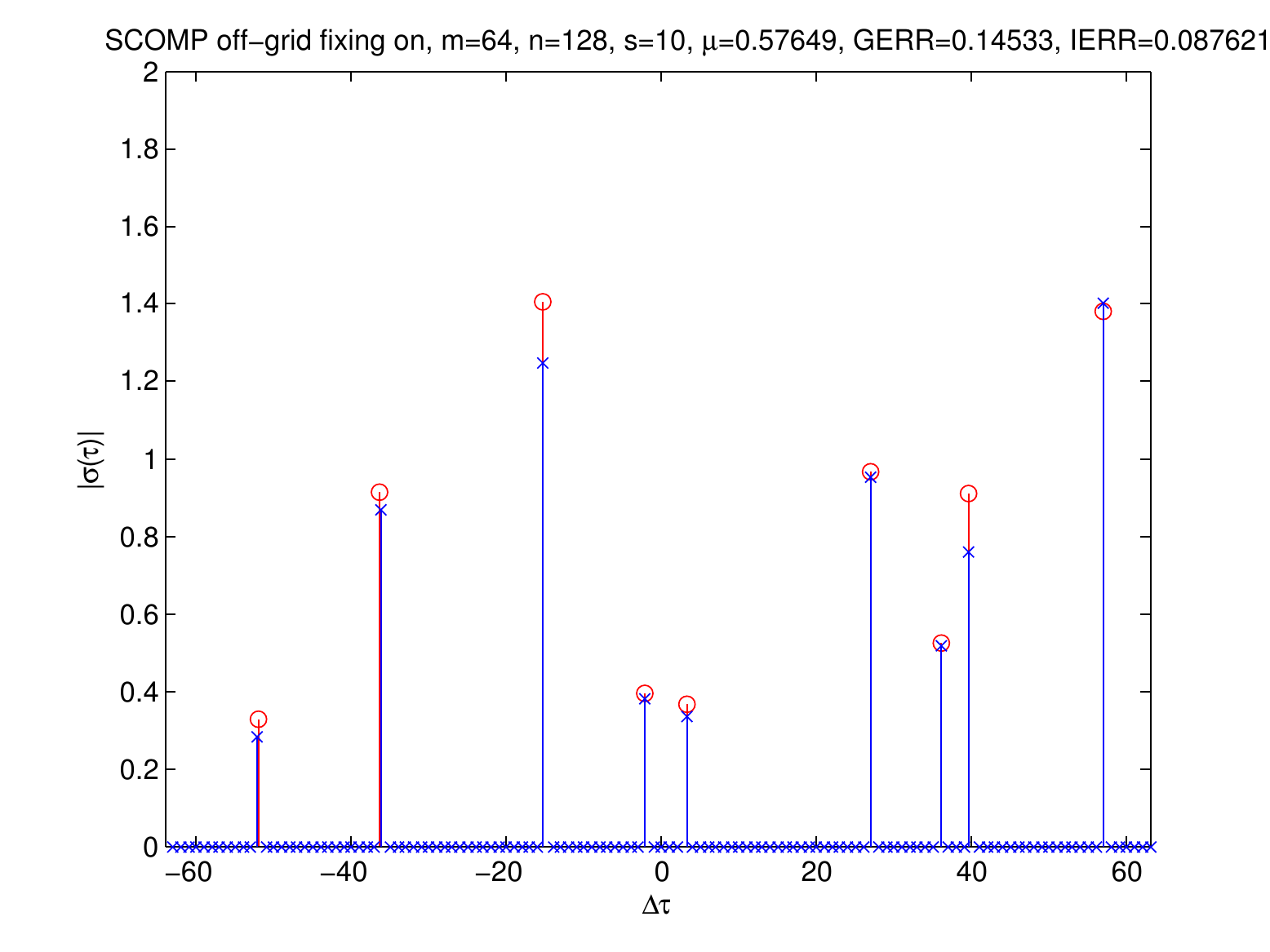} }
%\subfigure[SCOMP-NLS]{  \includegraphics[width=8cm]{new-figs/fig_LC_fixpt_NLS}}
   \caption{Radar ranging  (blue crosses) with $Q=1$ of off-grid targets (red circles) by  (a) BP,  (b) OMP with eq. (\ref{10'}) and (c) BPLOT, (d) SCOMP
with eq. (\ref{7'}).  
   %The right plot was obtained by applying LOT (i.e. BP-LOT) to the BP reconstruction in the middle plot. 
    }
   \label{fig-new1-1}
  \end{figure}

The idea of LOT is similar to  that  of the Band-excluded Locally Optimized Thresholding (BLOT)
proposed in \cite{FL2,Asilomar12} except without the band-exclusion step which is not needed here since  the grid is well resolved.  For brevity, we shall denote the combined
algorithm of BP followed by LOT
as BPLOT. 

Successful recovery for OMP, SCOMP, BPLOT is defined as the recovery of target support
to the grid accuracy, i.e. 
$\supp(\hat X)=\supp(X)$.  For BP, a recovery is counted as successful if 
$\supp(\hat X^s)=\supp(X)$ where $\hat X^s$ is  the best  $s$-sparse
approximation 
of the BP recovery $\hat X$ (i.e. thresholded BP).
We distinguish two versions of thresholded BP: the grid-corrected version
and the uncorrected version (``fixON BP" and ``fixOff BP", respectively, in the legend of  Fig. \ref{fig-new12-2}, \ref{fig4-2}, \ref{fig6}(b) and \ref{fig5-2}(b)).

When  recovery is successful, we
measure the degree of success by the  (relative) recovery error  $\|X-\hat X\|_2/\|X\|_2$ in the case of OMP, SCOMP, BPLOT
and by $\|X-\hat X^s\|_2/\|X\|_2$ in the case of BP. Note that for the system (\ref{7'}) 
\[
\|X-\hat X\|_2/\|X\|_2= \sum_k\Big| {f_{\rm LC}(-k\Delta \tau- \xi_k\Delta\tau)\over f_{\rm LC} 
(-k\Delta\tau-\hat\xi_k\Delta\tau)} e^{\pi i Q(\hat \xi_k -\xi_k)} \rho_k -\hat\rho_k\Big|^2/\|\rho\|_2
\]
while for the system (\ref{53}) $\|X-\hat X\|_2/\|X\|_2=\|\hat\rho-\rho\|_2/\|\rho\|_2.$ 

 %\commentout{ 
      \begin{figure}[t]\centering
  \subfigure[BP]{ \includegraphics[width=8cm]{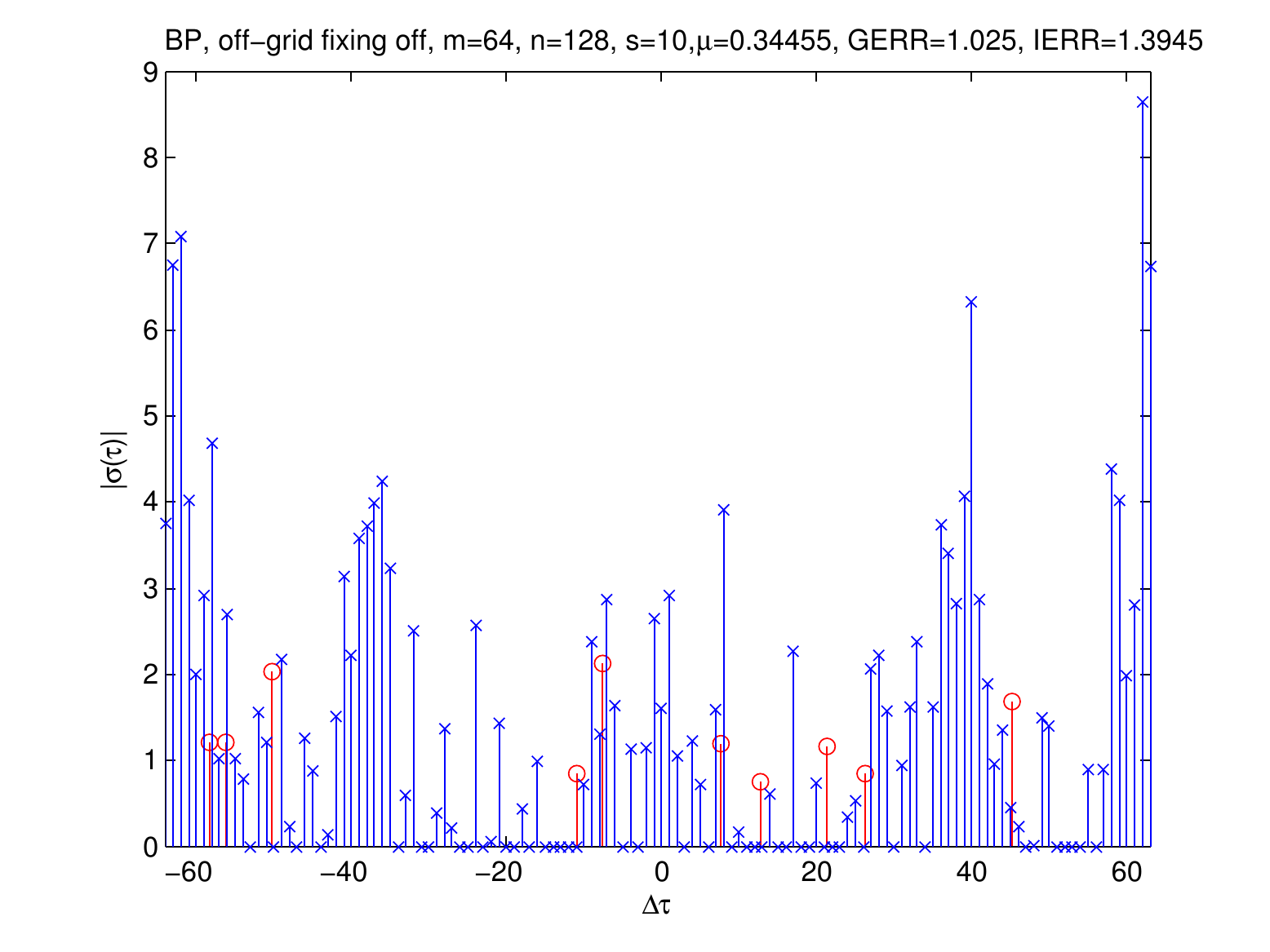}}
   \hspace{-1cm}
       \subfigure[OMP]{        \includegraphics[width=8cm]{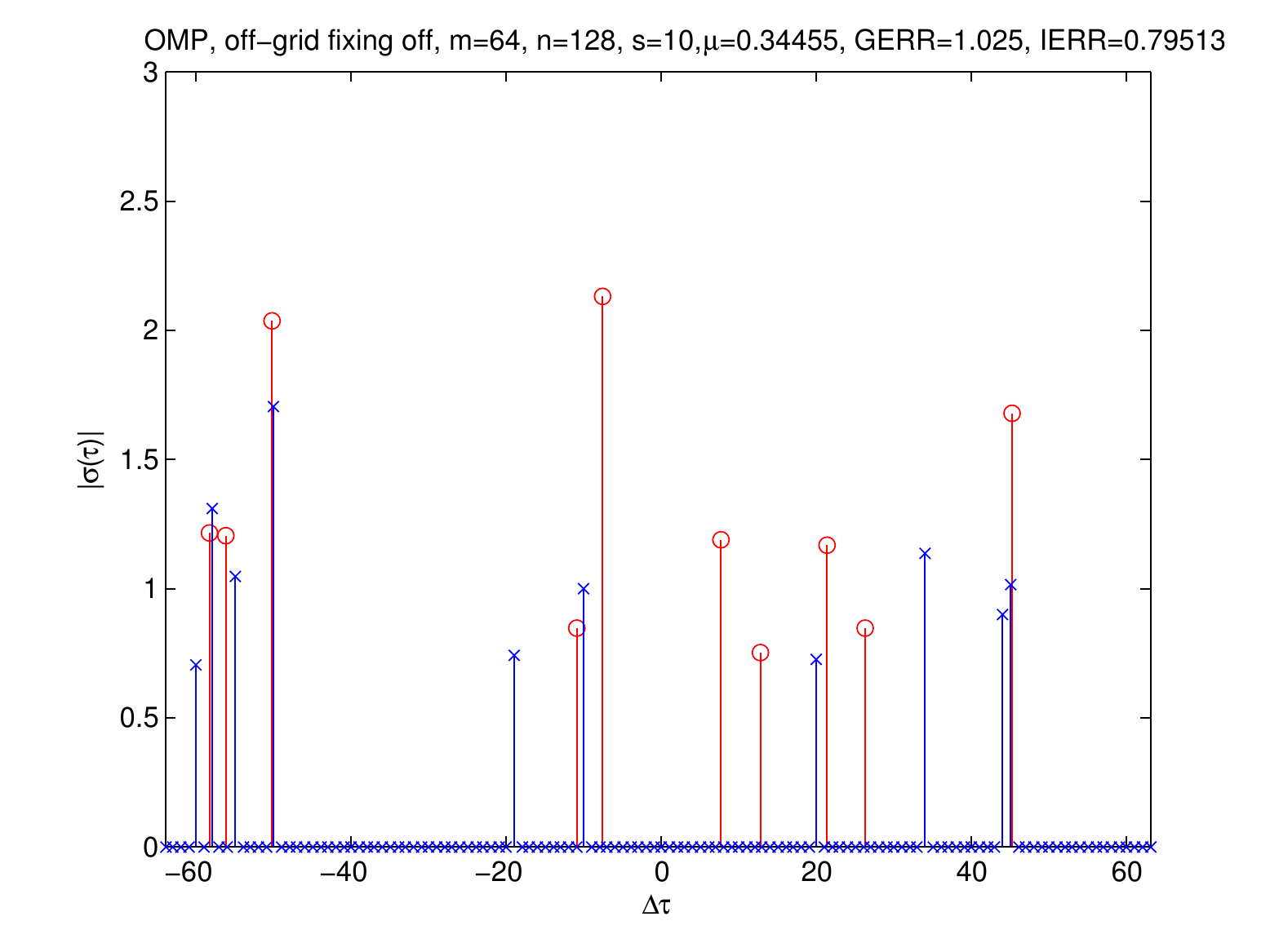} 
}
     \subfigure[BPLOT]{    \includegraphics[width=8cm]{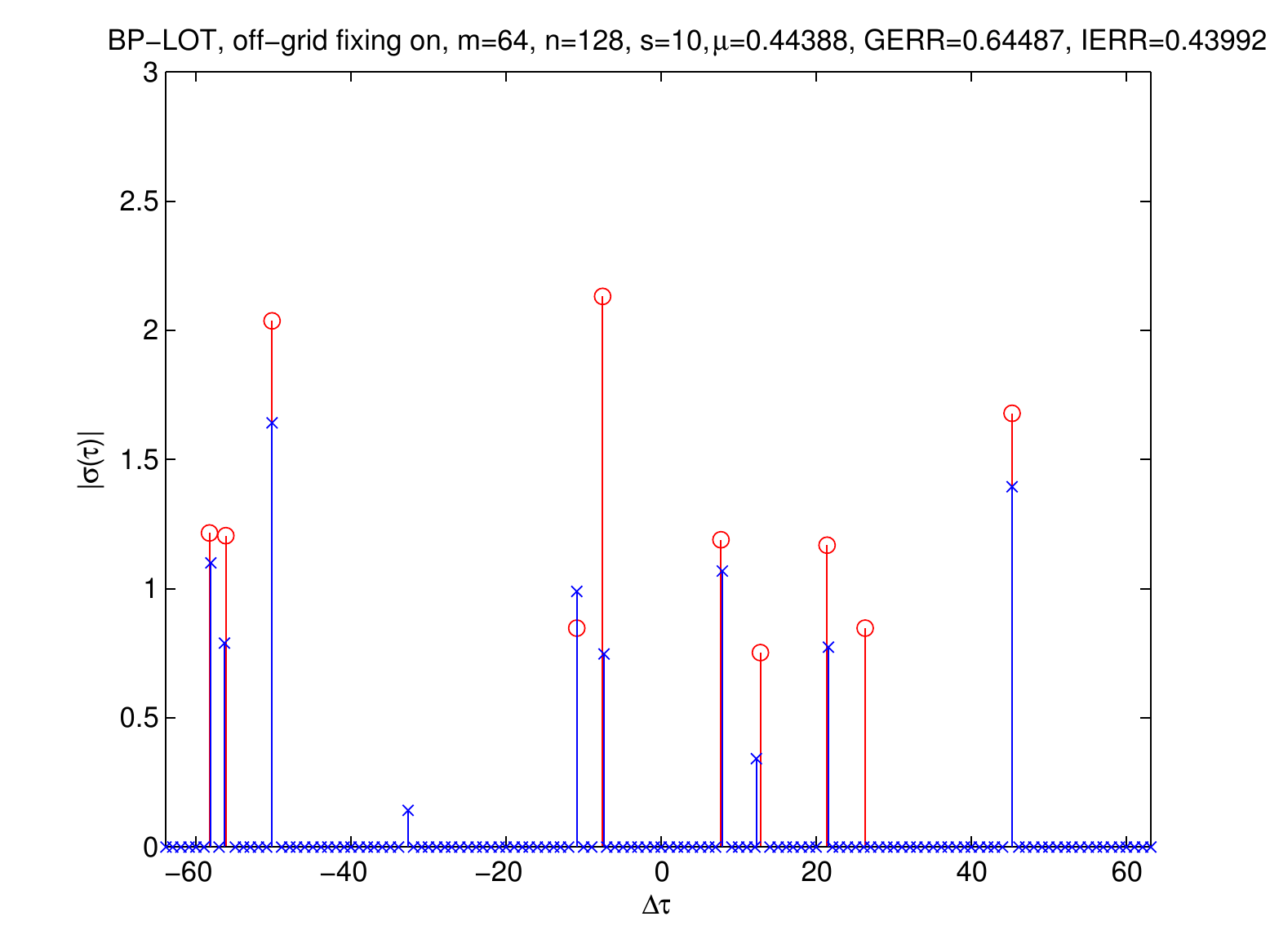}}
\hspace{-1cm}
\subfigure[SCOMP]{  \includegraphics[width=8cm]{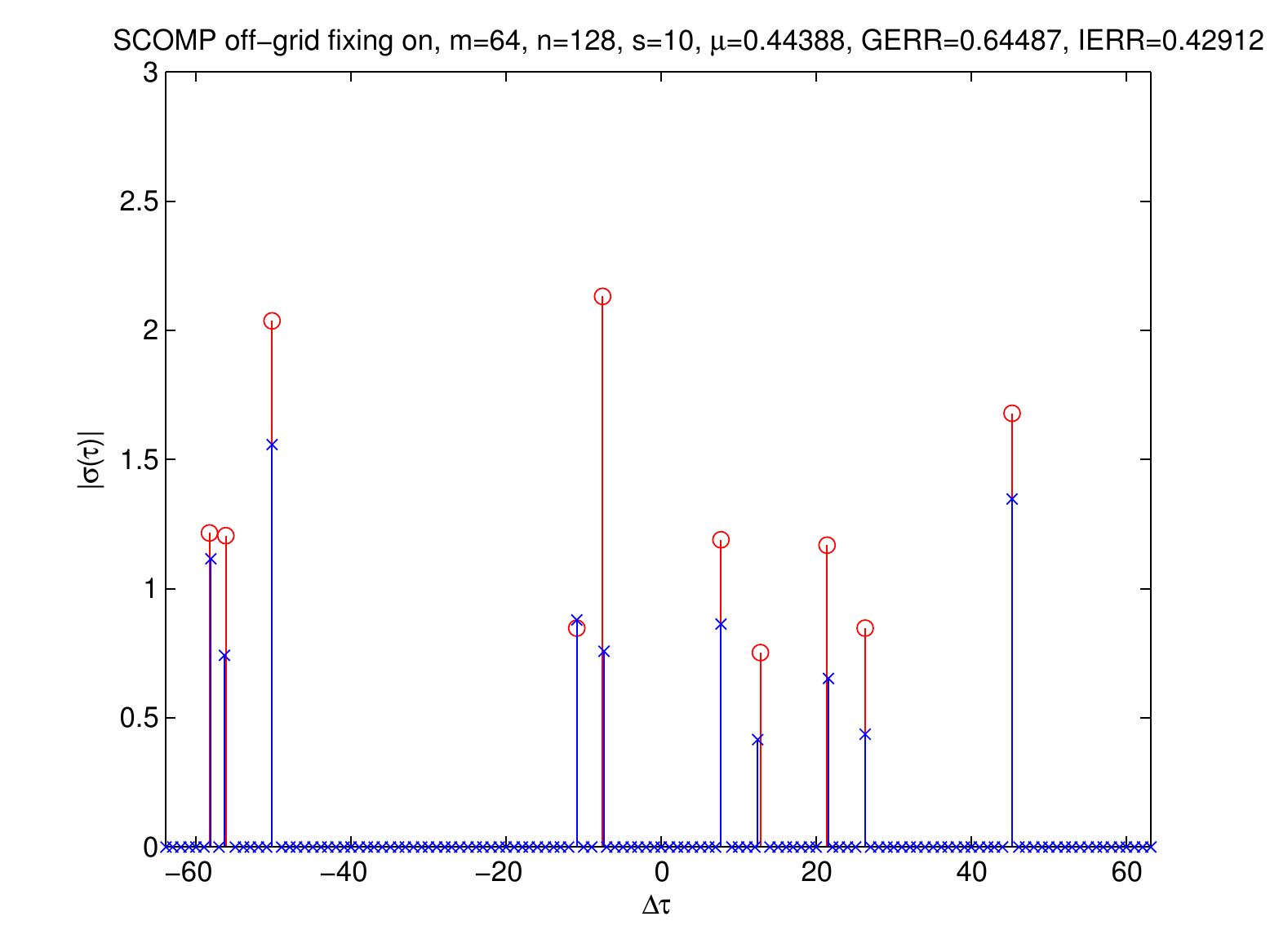} }
%\subfigure[SCOMP-NLS]{  \includegraphics[width=8cm]{new-figs/fig_LC_fixpt_NLS}}
   \caption{Radar ranging with $Q=2$ by  (a) BP,  (b) OMP without grid correction  and (c) BPLOT, (d) SCOMP
with grid correction.  
   %The right plot was obtained by applying LOT (i.e. BP-LOT) to the BP reconstruction in the middle plot. 
    }
   \label{fig-new1-2}
  \end{figure}
%}

For radar ranging \eqref{SM_Range_pt},  we set the parameters $m=64, n=128$ and $Q=1, 2$  (Fig.\ref{fig-new1-1}-\ref{fig-new12-2}). 
  The gridding error for the formulation (\ref{10'}) is a whopping $38.1\%$ for $Q=1$ and $103\% $ for $Q=2$  while
  that for (\ref{7'}) is  $14.5\%$ for $Q=1$ and 64.5\% for $Q=2$ which still seem large. But surprisingly  BPLOT (Fig.\ref{fig-new1-1}(c)) and
  SCOMP (Fig.\ref{fig-new1-1}(d) \&  \ref{fig-new1-2}(d)) can locate the targets  to the grid accuracy, producing error of $8.8\%$ for Fig.\ref{fig-new1-1}(c) \& (d) and $43\%$ \ref{fig-new1-2}(d). Note that the second target from the right is missed by BPLOT in Fig.\ref{fig-new1-2}(c). 
 By contrast   BP (Fig.\ref{fig-new1-1} (a) \& \ref{fig-new1-2} (a))
  and OMP (Fig.\ref{fig-new1-1}(b) \& \ref{fig-new1-2} (b)) poorly locate  the targets for both $Q=1\&2$.   
  
  Fig.\ref{fig-new12-1} shows how the NLS technique can further improve
  the performance of SCOMP.  The error for SCOMP-NLS is
  2.2\% and 31.2\%, respectively, for $Q=1$ (Fig.\ref{fig-new12-1}(a)) and  $Q=2$ 
  (Fig.\ref{fig-new12-1}(b)). The worsening performance as $Q$ increases from 1 to 2 is probably 
  due to the increasing  gridding error. 
  
Fig.\ref{fig-new12-2} shows   the success rate  computed out of 100 independent trials as a function of the target sparsity 
  with the support recovery to the grid accuracy  as the criterion for success. For each trial,
  the target support, amplitudes, time samples and external noise are independently chosen with
  the same sparsity.

\commentout{%FERR
Since we only seek within the grid accuracy for support recovery, we
compute the relative error by first convolving the error with 
the compactly supported bell-shaped function

\beq
&&g_a(x) =\Big\{\begin{matrix}
 \exp{[- a^2/( a^2 - x^2)]} / (0.4439938162a), & |x|<a\\
             0 , &\hbox{else}
             \end{matrix} 
             \eeq
   which satisfies 
$\int g_a(x) d x =1$.  We set $a=0.4$. 
}

For $Q=1$ (Fig.\ref{fig-new12-2}(a)) BPLOT has the best performance
while for $Q=2$ (Fig.\ref{fig-new12-2}(b))  SCOMP is the best performer.  
Both BPLOT and SCOMP outperform 
 both BP and  OMP without grid correction.

    \begin{figure}[t]\centering
  \subfigure[SCOMP-NLS with $Q=1$]{  \includegraphics[width=8cm]{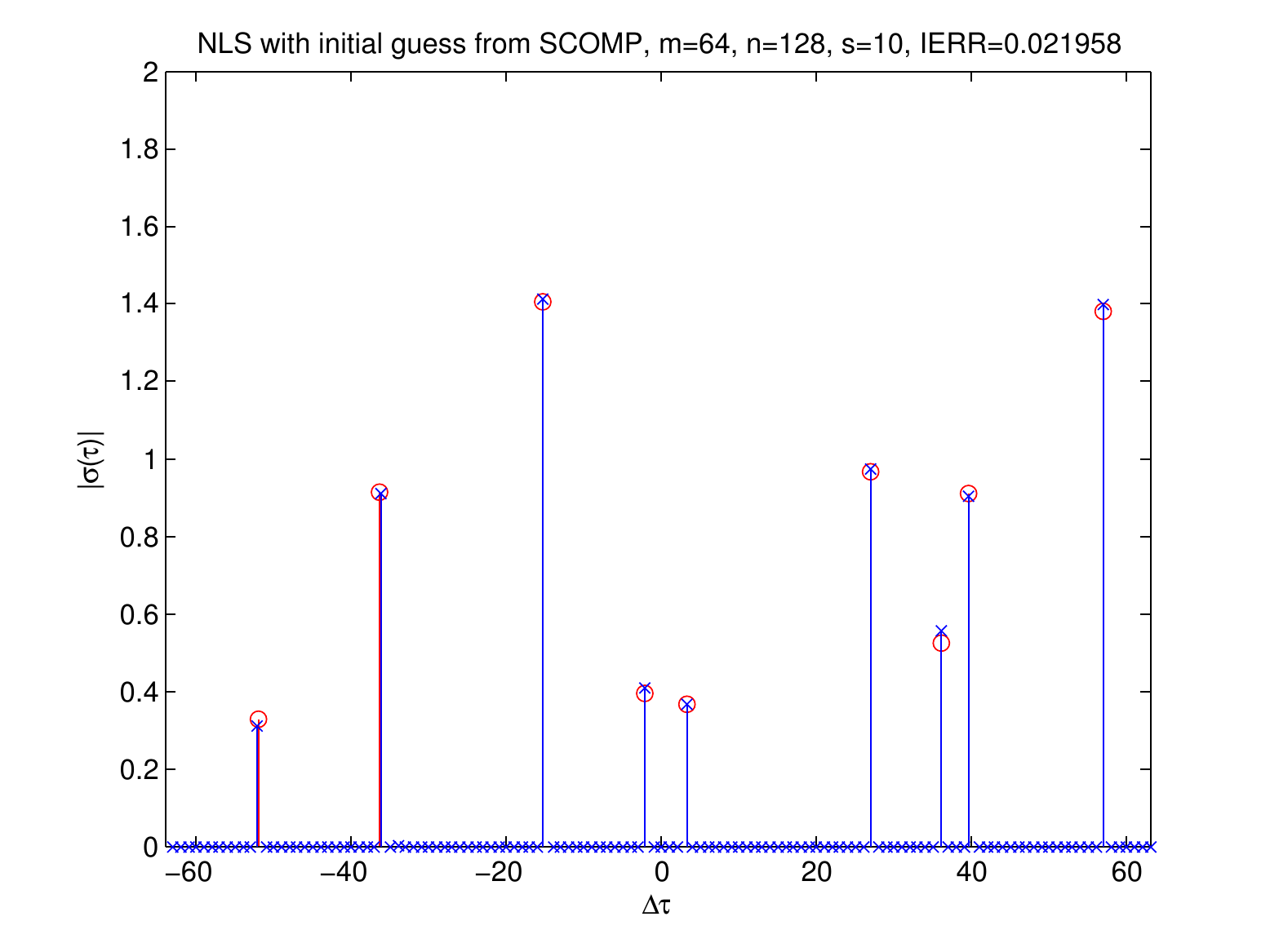}}
    \subfigure[SCOMP-NLS with  $Q=2$]{  \includegraphics[width=8cm]{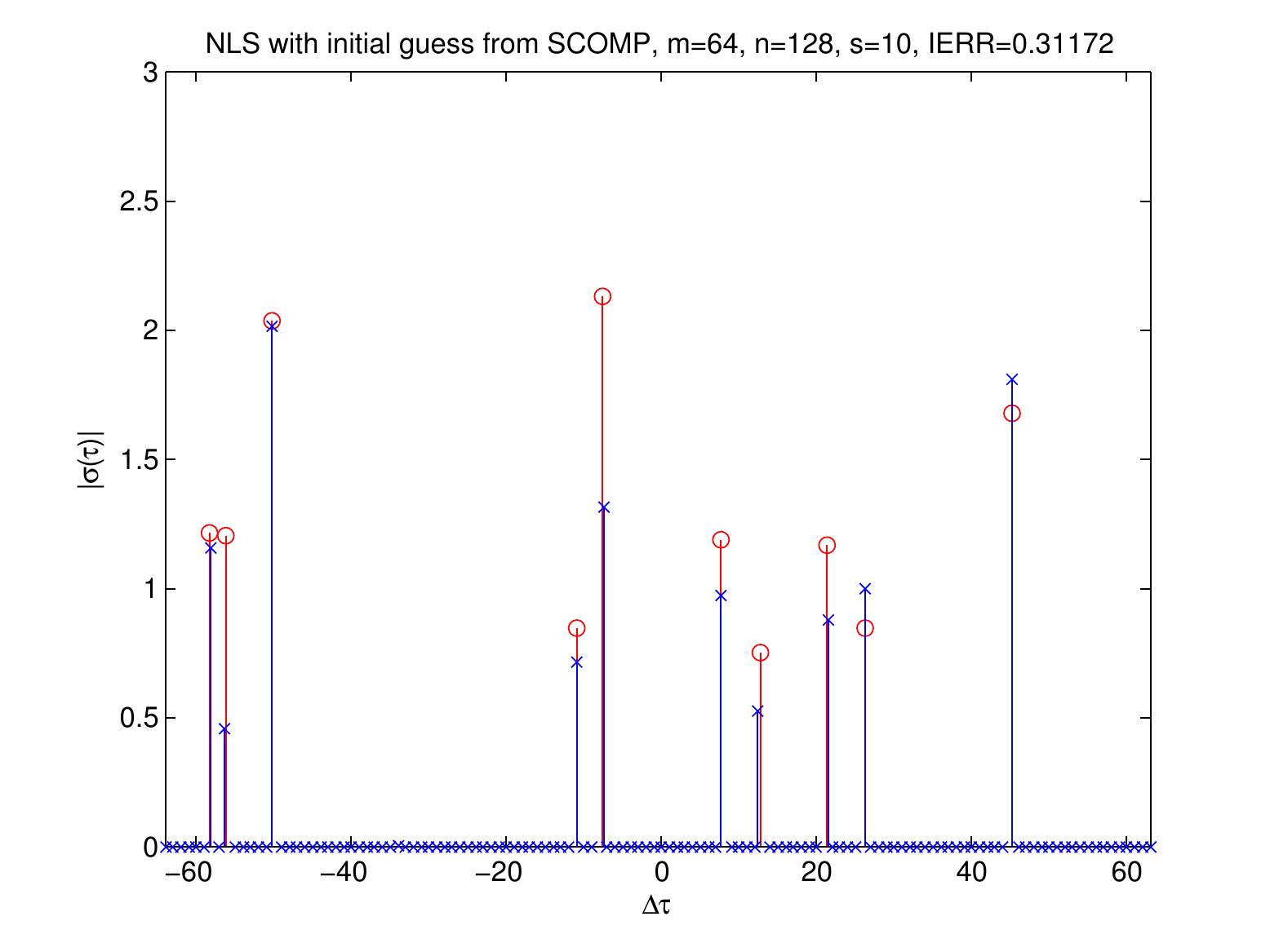}}
  \caption{SCOMP-NLS  produces ranging error of (a) $2.2\%$ with $Q=1$ and (b) $31.2\%$ with $Q=2$.}
  \label{fig-new12-1}
  \end{figure}
  \begin{figure}[h!]\centering
  \subfigure[$Q=1$]{\includegraphics[width=8cm]{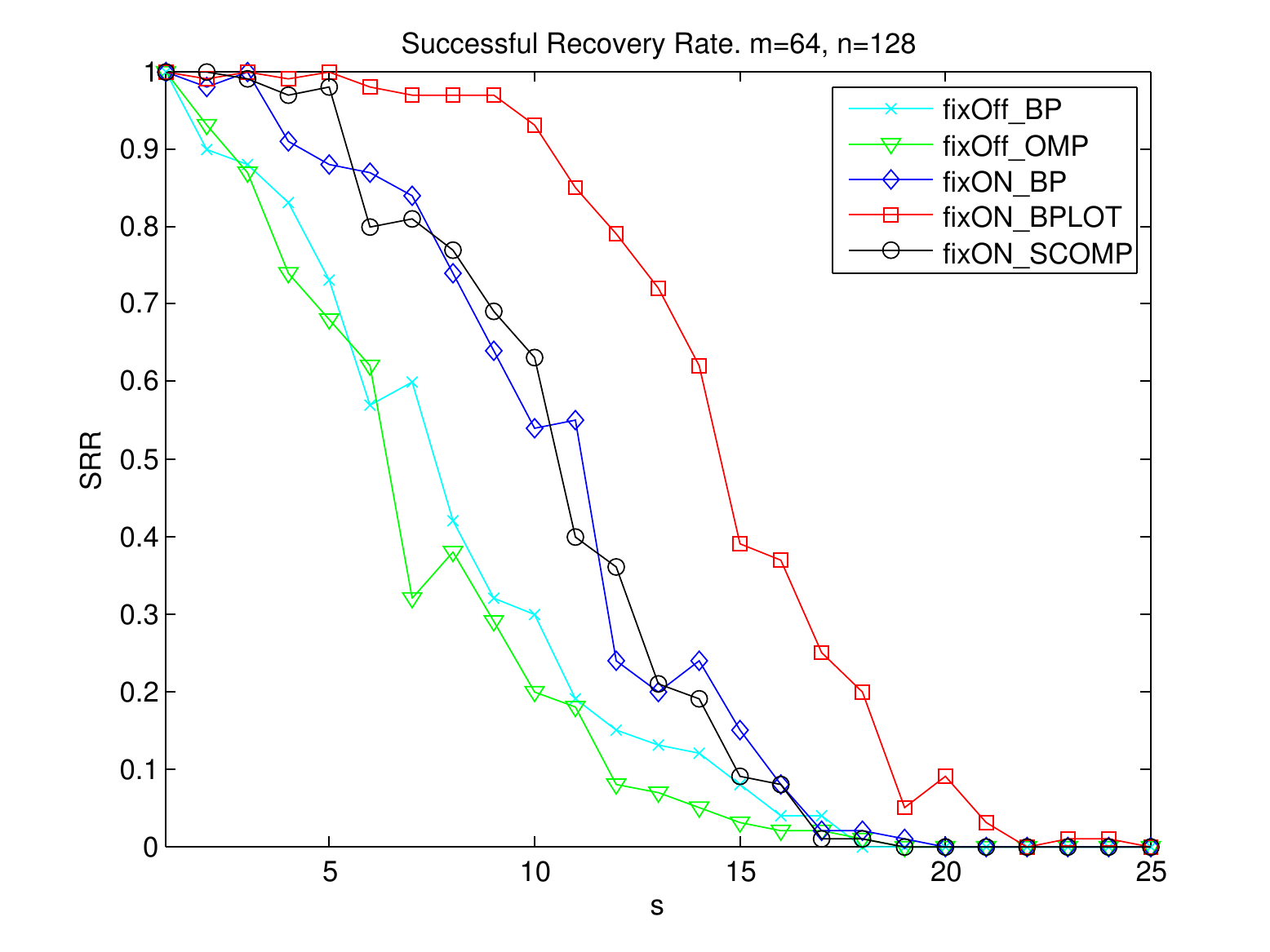}}
    \subfigure[$Q=2$]{\includegraphics[width=8cm]{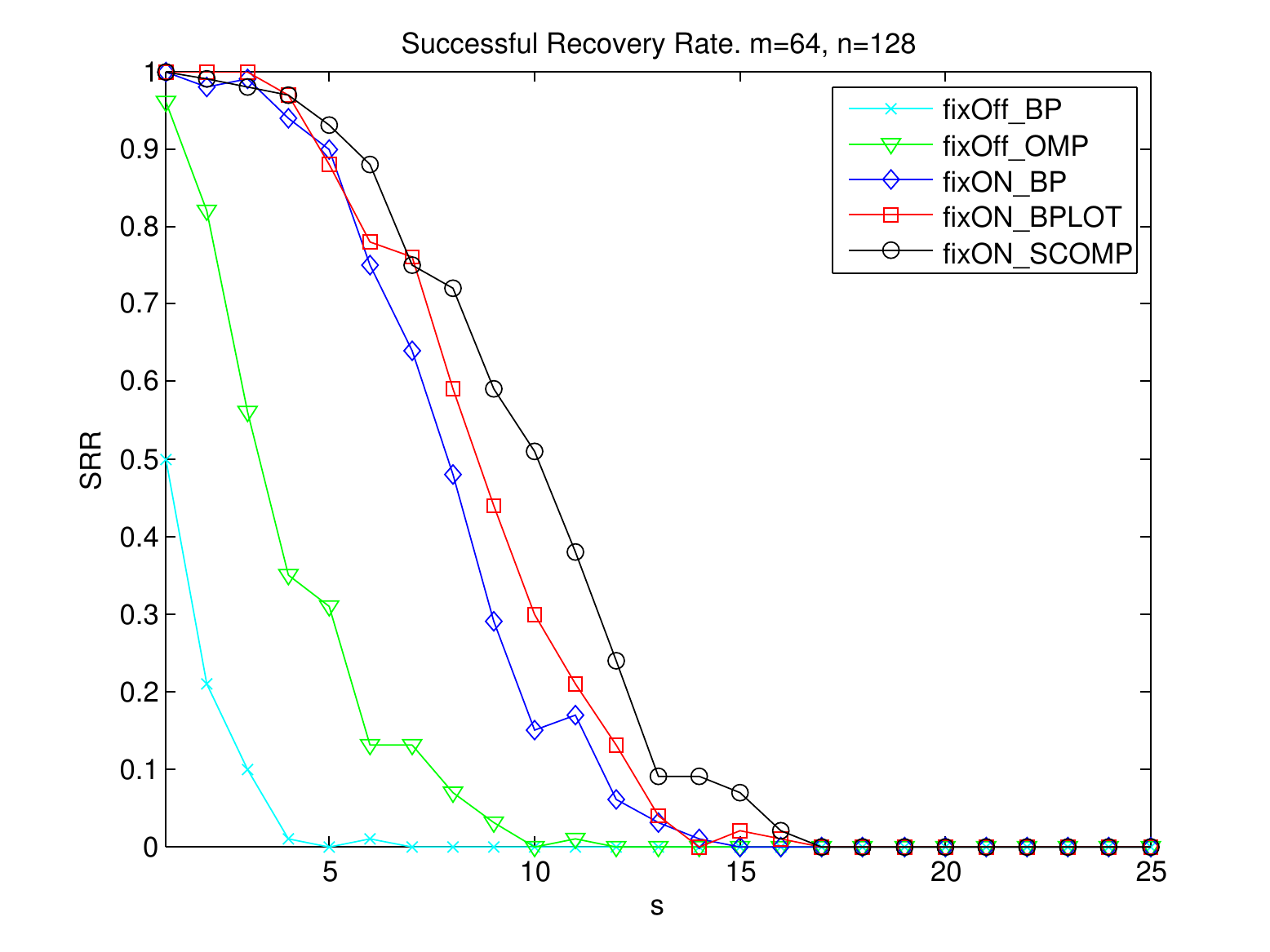}}
  \caption{Success rate of  ranging  versus sparsity  with (a) $Q=1$ and (b)  $Q=2$.
   In the legend, ``fixOff" means recovery {\em without} grid correction 
and ``fixOn" means recovery {\em with} grid correction.  }
  \label{fig-new12-2}
  \end{figure}

%  Two definitions of  ``success'' in recovery are used: the first one is defined by the relative mean square error (MSE) less than $0.1$ and  the second one  is defined by exact recovery of the 
%target  support. 
\commentout{We add complex circularly Gaussian white noise
to the data: $\vec{e}=\set{e_{j,R}+\i e_{j,I}}$, $j=1,2,\ldots,m$,
     $e_{j,R},e_{j,I} = (1\%)\cdot\sigma_{\mathrm{N}}\frac{\norm{\vec{y}}_2}{\sqrt{m}}$ where 
     $\sigma_{\mathrm{N}}\sim\mathcal{N}(0,1)$ is i.i.d.\ chosen. 
}

  \commentout{
  For the Spotlight SAR imaging, we set 
 $ n = 625,  m_1=m_2=8, 
\nu_0 = 4, \nu_*= 5, \ell = 1, c_0=1$ so that $\nu_0\ell =4>1$ and  (\ref{303}) is satisfied with equality. The look angles are i.i.d. uniform random variables in $[0,2\pi]$ and the
spatial frequencies are i.i.d. uniform random variables in $[4, 5]$.  In this high-frequency regime, the gridding error is amplified in proportion to
the relative bandwidth $(\nu_*-\nu_0)/\nu_0$ which is 4 in this case. Consequently, we
only allow  the maximal off-grid perturbations of size $0.05\ell $ in the simulation. 

As shown in Fig.\ref{fig6}, only SCOMP can determine  the targets to the grid accuracy in this instance. 
The success rate (out of 100 independent trials) is shown in Fig.\ref{fig7}(a) where
the advantage of SCOMP appears  primarily in the ``shoulder" part of the curve. 
The nonlinear least squares (\ref{nls2}) improves the quality of the SCOMP recovery, reducing
the relative error of $26.6\%$ in Fig.\ref{fig6}(d) to $0.4\%$ in Fig.\ref{fig7}(b). 
}

%\commentout{
  \begin{figure}[t]\centering
 \subfigure[BP]{ \includegraphics[width=8cm]{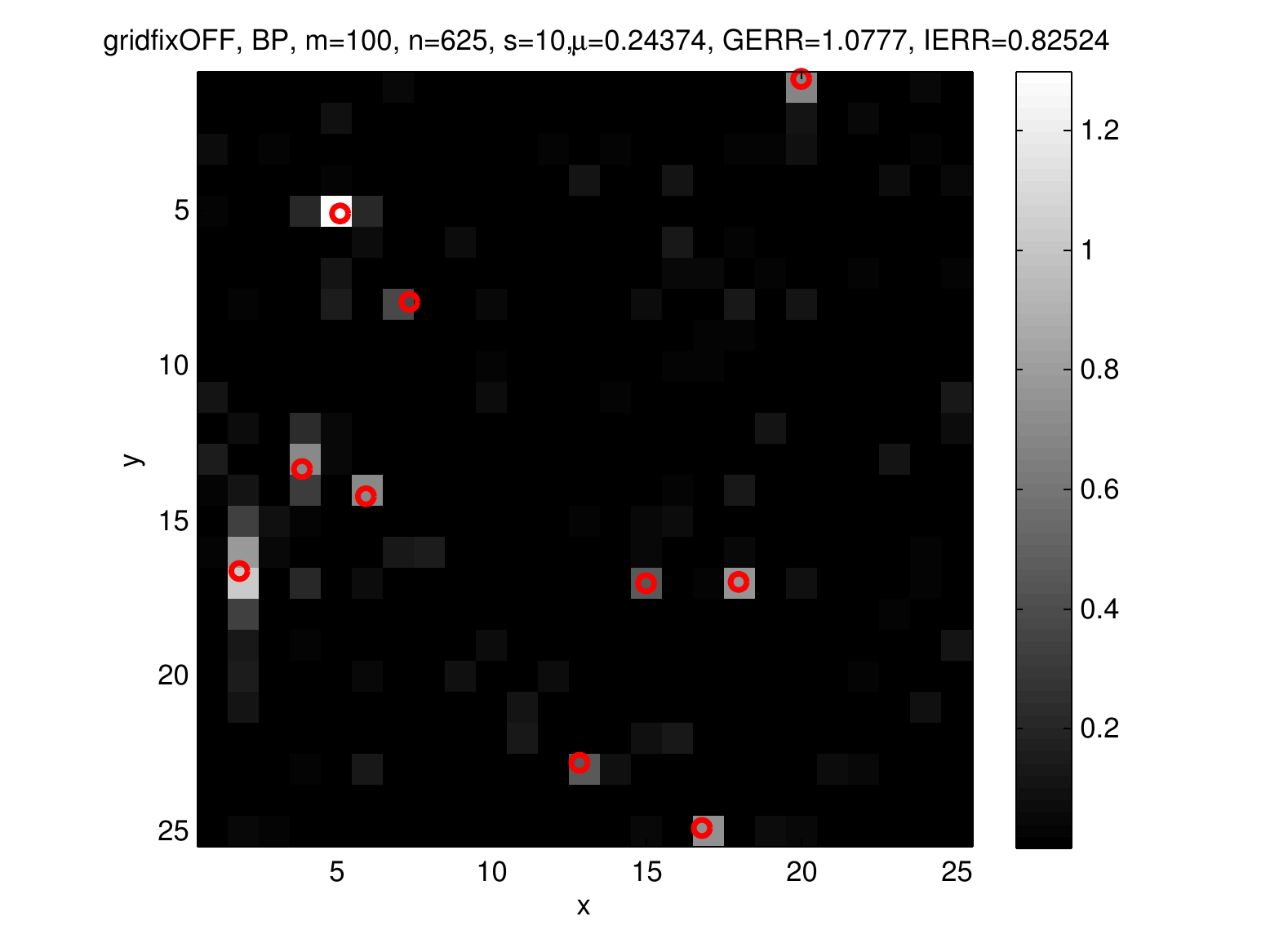}}
 \subfigure[OMP]{\includegraphics[width=8cm]{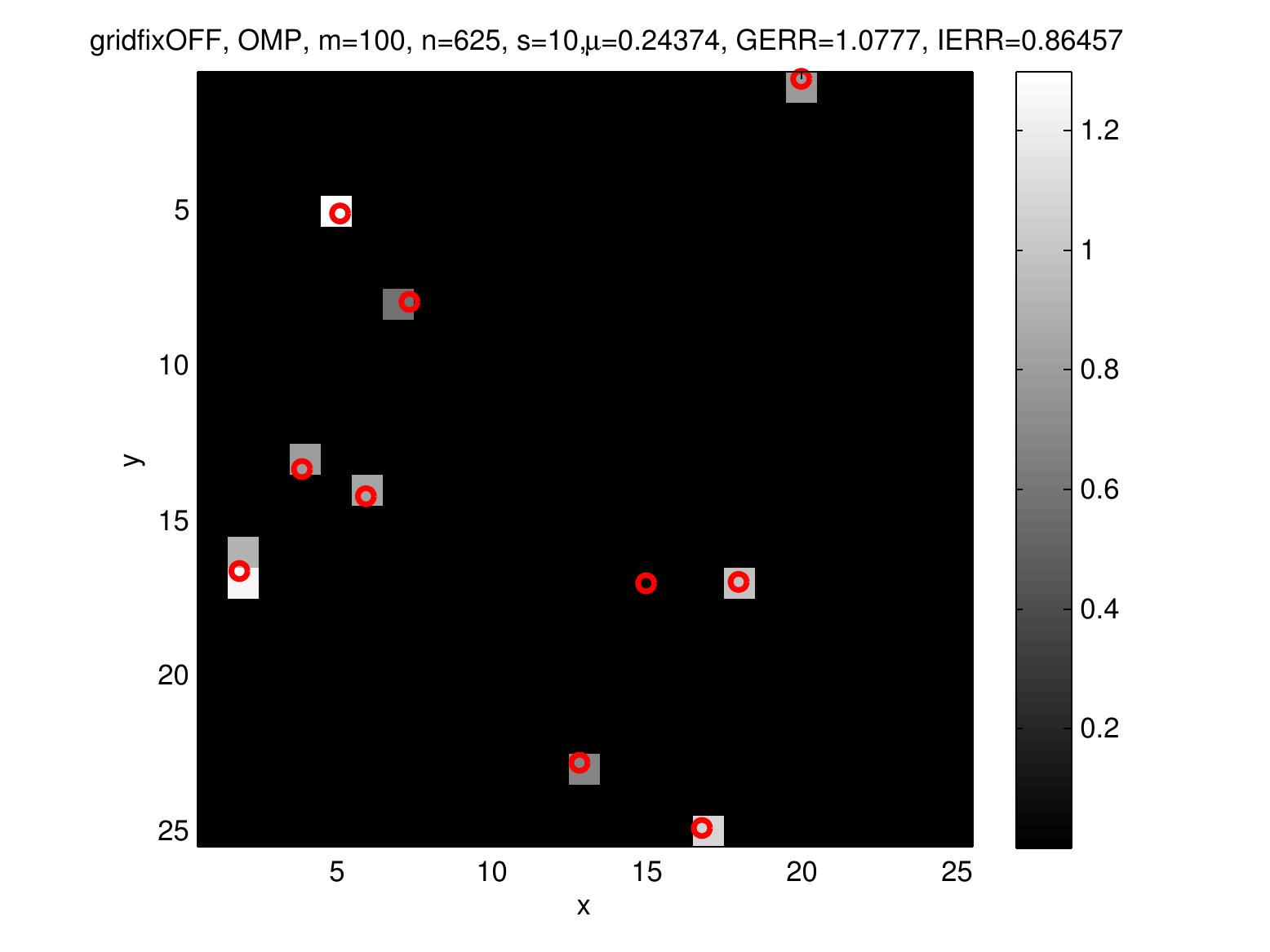}}
 \subfigure[BPLOT]{\includegraphics[width=8cm]{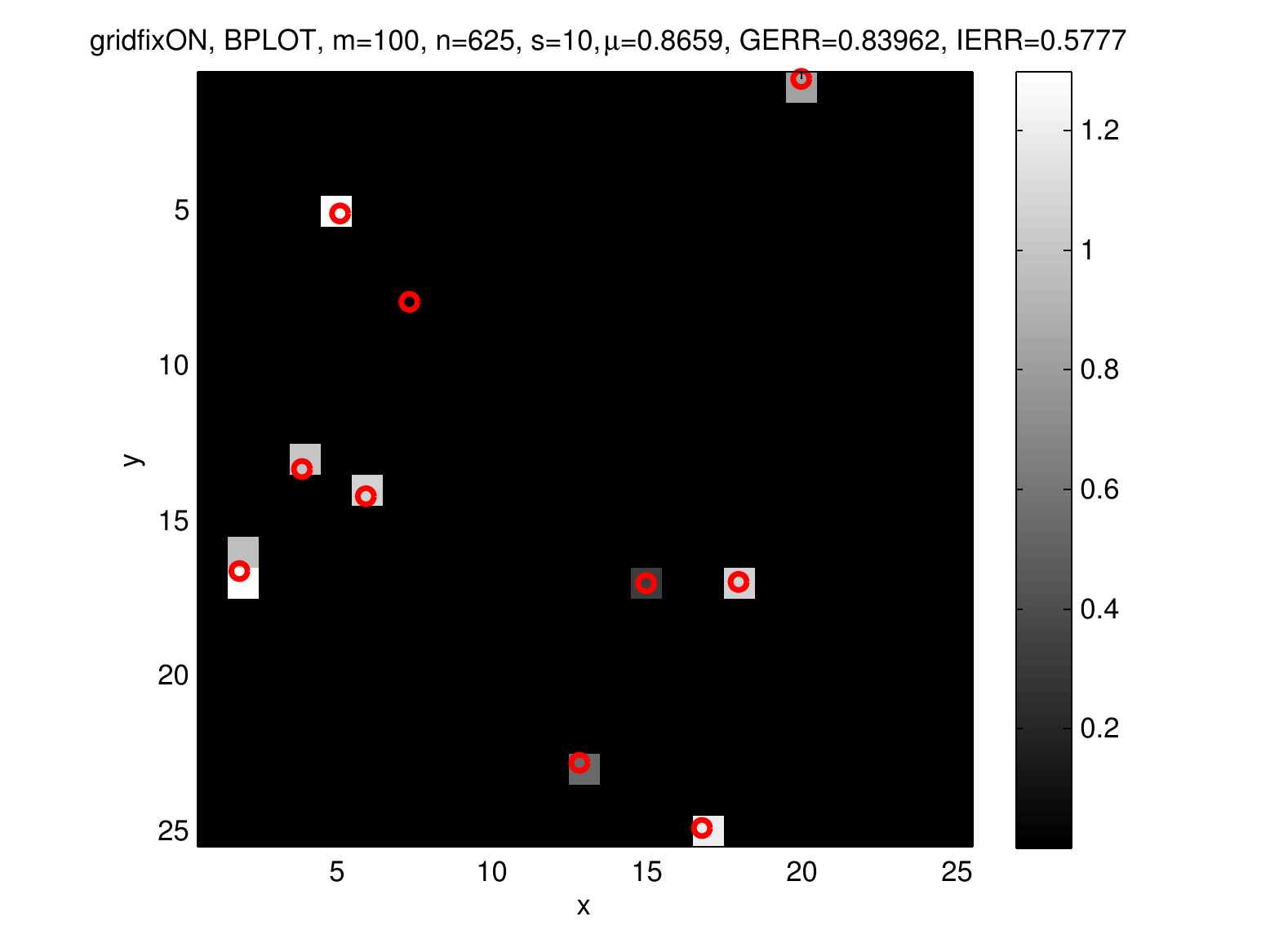}}
  \subfigure[SCOMP]{\includegraphics[width=8cm]{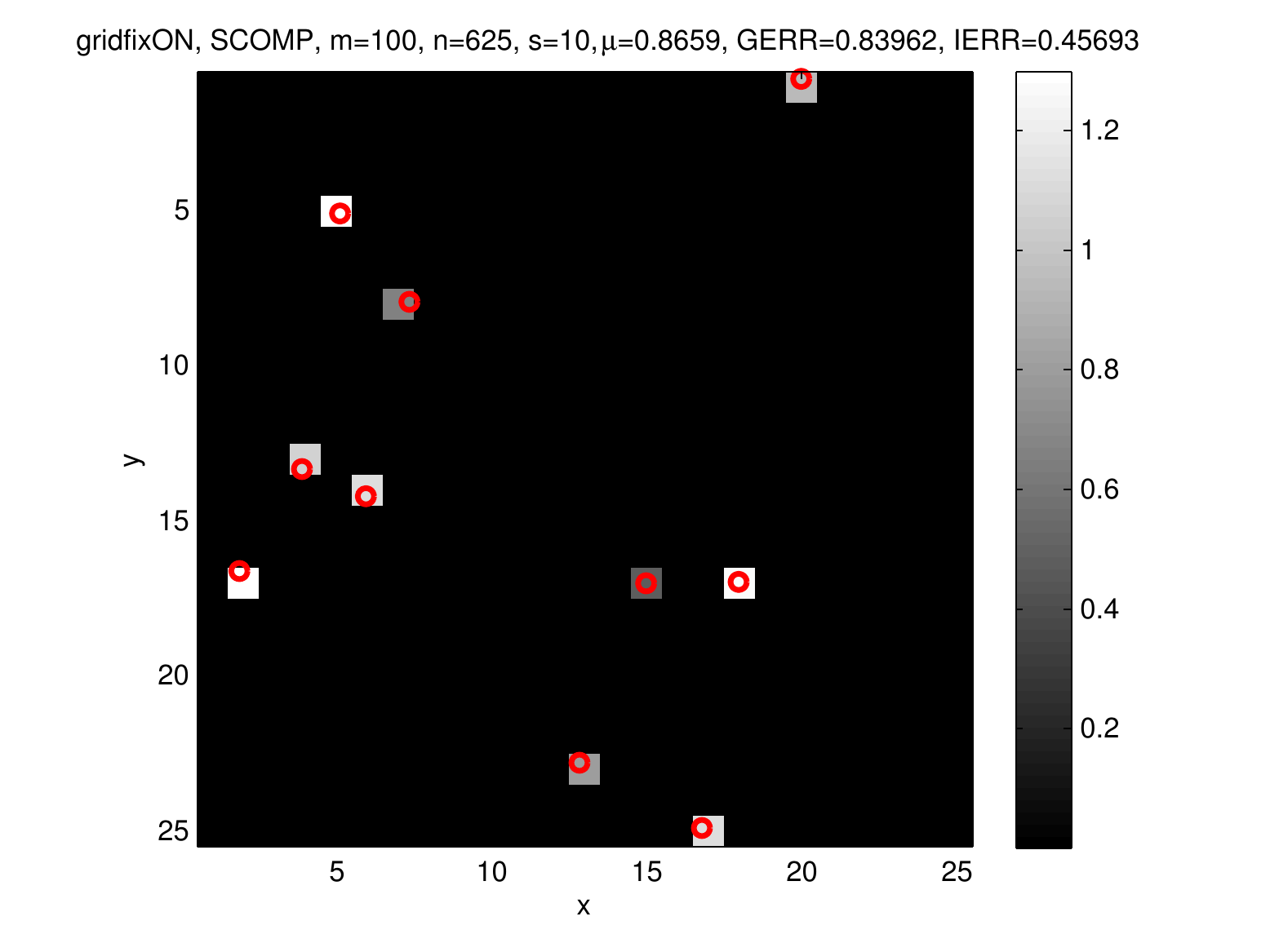}}
%  \subfigure[SCOMP with NLS]{\includegraphics[width=8cm]{new-figs/fig_SAR_NLS.pdf}}
  \caption{FDMF  SAR imaging (white spots) of off-grid targets (red circles) with SAR scheme B and $Q=1$ by (a) BP, (b) OMP, both without
  grid correction, 
   and (c) BPLOT, (d) SCOMP, both with grid correction. }
   \label{fig3-1}
  \end{figure} 
 % }

For 2D Spotlight SAR in the FDMF regime, we use SAR schemes A \& B (Sections \ref{sec:narrow}) with
  $ n = 625,  m=100, 
\nu_0 = 0, c_0=1$   (Fig.\ref{fig3-1}-\ref{fig4-2}). For $Q=1$, we set $\nu_*=1/\sqrt{2}$. For $Q=2$, we set $\nu_*=\sqrt{2}$. For SAR scheme A, we set
$\phi=1/(2\pi), g=1/(\nu_*-\nu_0)$ here and below.

First we consider the SAR scheme B. Fig.\ref{fig3-1} shows that only SCOMP locates the targets to the grid accuracy. Note that OMP and BPLOT miss the target located at around (15, 17) in (b) \& (c), respectively. 
In Fig.\ref{fig3-2} with $Q=2$, BPLOT and SCOMP have the same results, locating the targets
to the grid accuracy. The relative error is 45.7\% for Fig.\ref{fig3-1}(d) and
24.9\% for Fig.\ref{fig3-2}(d).  After applying NLS to the SCOMP estimates, the error is reduced to
0.7\% for $Q=1$ and 0.3\% for $Q=2$ (Fig.\ref{fig4-1}). This is a rate instance where the gridding and recovery errors are smaller with $Q=2$ than $Q=1$ and reminds us the subtle dependence of the gridding error on target and measurement configurations.

Fig.\ref{fig4-2} shows the success rate versus sparsity computed out of 100 independent trials.
For both $Q=1$ and $Q=2$, SCOMP has the best performance. It is also clear
from Fig.\ref{fig4-2}, the results with $Q=1$ are better than those with $Q=2$
for all tested methods, despite the fact that the former's bandwidth  $1/\sqrt{2}$ 
is smaller than the latter's $\sqrt{2}$.

Fig.\ref{fig6} shows the results of FDMF SAR  ($m_1=m_2=10, \nu_0=0,\nu_*=1$) with the  SAR scheme A
 which is easier to implement
than the SAR scheme B (Section \ref{sec:narrow}). The purpose is to compare the performance of
the two sampling schemes. From Fig.\ref{fig4-2}(a) and \ref{fig6}(b) we find that  with the SAR  scheme A,
the performance of SCOMP worsens  while the performances of  grid-corrected thresholded BP and BPLOT  improve.  Note, however, that the bandwidth ($=1$) for Fig. \ref{fig6}(b) is larger
than that ($=1/\sqrt{2}$) for Fig. \ref{fig4-2}(a). 

For PDMF Spotlight SAR in Fig.\ref{fig5-1}-\ref{fig5-2}, we use the SAR scheme A with 
  $ n = 625,  m_1=m_2=14, 
\nu_0 = 1/2, \nu_*=1, c_0=1$, resulting in the fractional bandwidth $2/3$. Note that the total number of data $m=196$ almost doubles  that for the FDMF case.  

In Fig.\ref{fig5-1},  only SCOMP manages to locate the targets to the grid accuracy (BPLOT misses
the target located around (19, 22)), yielding an error of $49.9\%$.  The error is reduced
to 25.4\% by NLS (Fig.\ref{fig5-2}(a)). The success rate plot in Fig.\ref{fig5-2}(b) shows that
BPLOT and SCOMP have a similar, best performance, with the grid-corrected thresholded BP   trailing closely
behind. 

From  Fig.\ref{fig6}(b) and \ref{fig5-2}(b), we find that grid-corrected thresholded BP, BPLOT and SCOMP have
comparable performances with the SAR scheme A. 
Also, the similarity of the success rates (for grid-corrected thresholded BP, BPLOT and SCOMP)  between Fig.\ref{fig6}(b) and \ref{fig5-2}(b) indicates that increasing the spatial diversity and the number
of data can compensate the deficiency in frequency diversity, up to a point. 

For the purpose of  comparison, Fig. \ref{fig7-2} shows the results of PDMF SAR with 
(a) $m_1=28, m_2=7$ and (b) $  m_1=7, m_2=28$, and other parameters the same as in Fig. \ref{fig5-2}. The number of  degrees of diversity (=196)
is the same  for both Fig. \ref{fig5-2} and \ref{fig7-2}.  Clearly, the performances of grid-corrected thresholded BP, BPLOT and SCOMP improve (slightly) in Fig. \ref{fig7-2}(a) but
degrade  in Fig. \ref{fig7-2}(b)
 relative to Fig. \ref{fig5-2}(b). This means that,  for a fixed bandwidth and number of degrees of diversity, there is an optimal distribution between the frequency diversity and
the angular diversity. For example, for a smaller bandwidth, the frequency diversity should be decreased (and the angular diversity be increased)
accordingly.

\commentout{
On the other hand, Fig. \ref{fig7-2}(a) is another instance where the effect of
 target configuration becomes important, especially when the success rate
 ($\leq 20\%$) is low. The gridding error and the recovery error are
 both smaller in Fig. \ref{fig7-2}(a) than Fig. \ref{fig5-2}(a). 
}

In the case of extreme deficiency in frequency diversity $(\nu_*-\nu_0)/\nu_0\ll 1$, the gridding error dominates the data and our methods eventually break down.  
  
    \begin{figure}[t]\centering
 \subfigure[BP]{ \includegraphics[width=8cm]{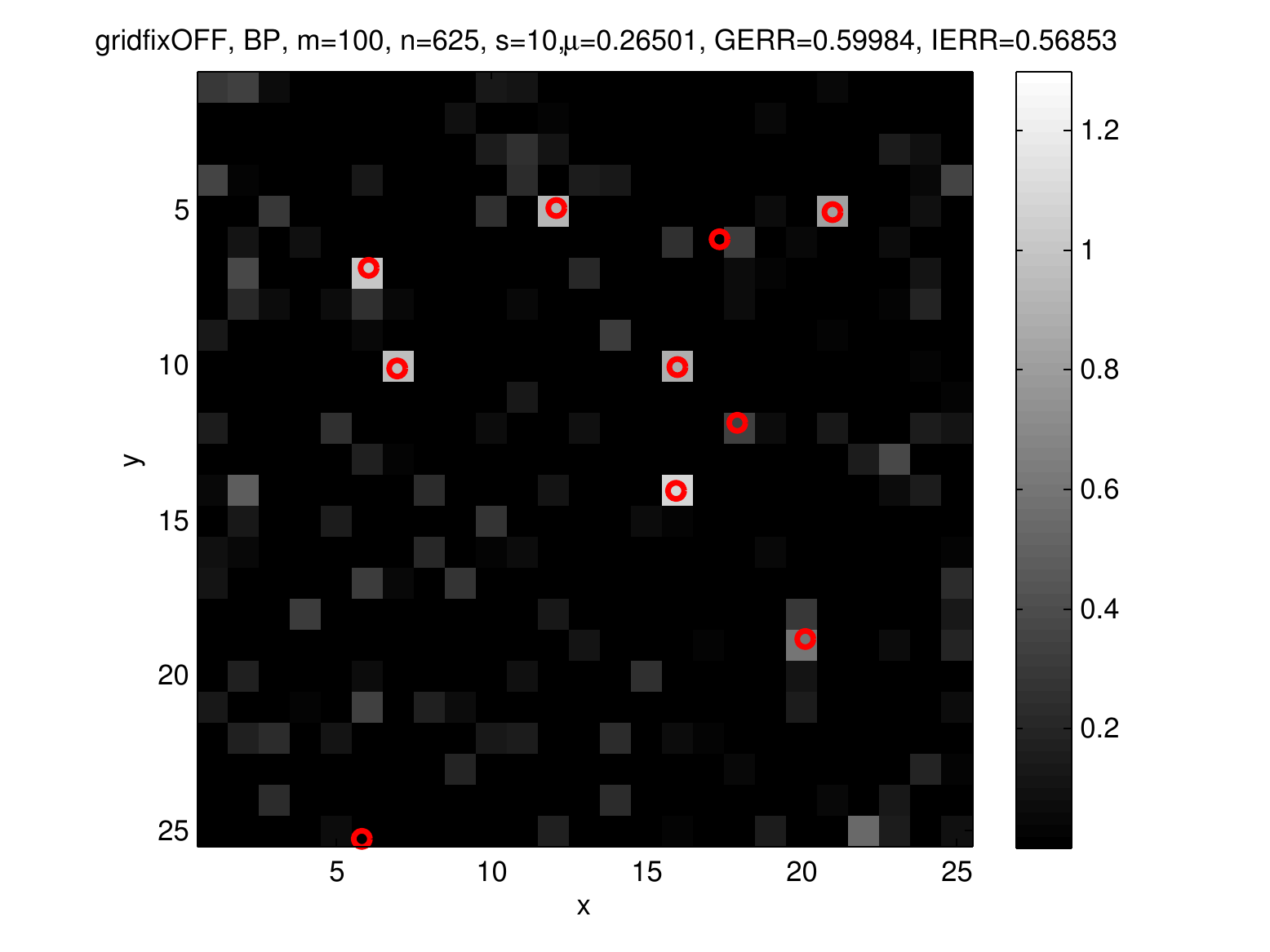}}
 \subfigure[OMP]{\includegraphics[width=8cm]{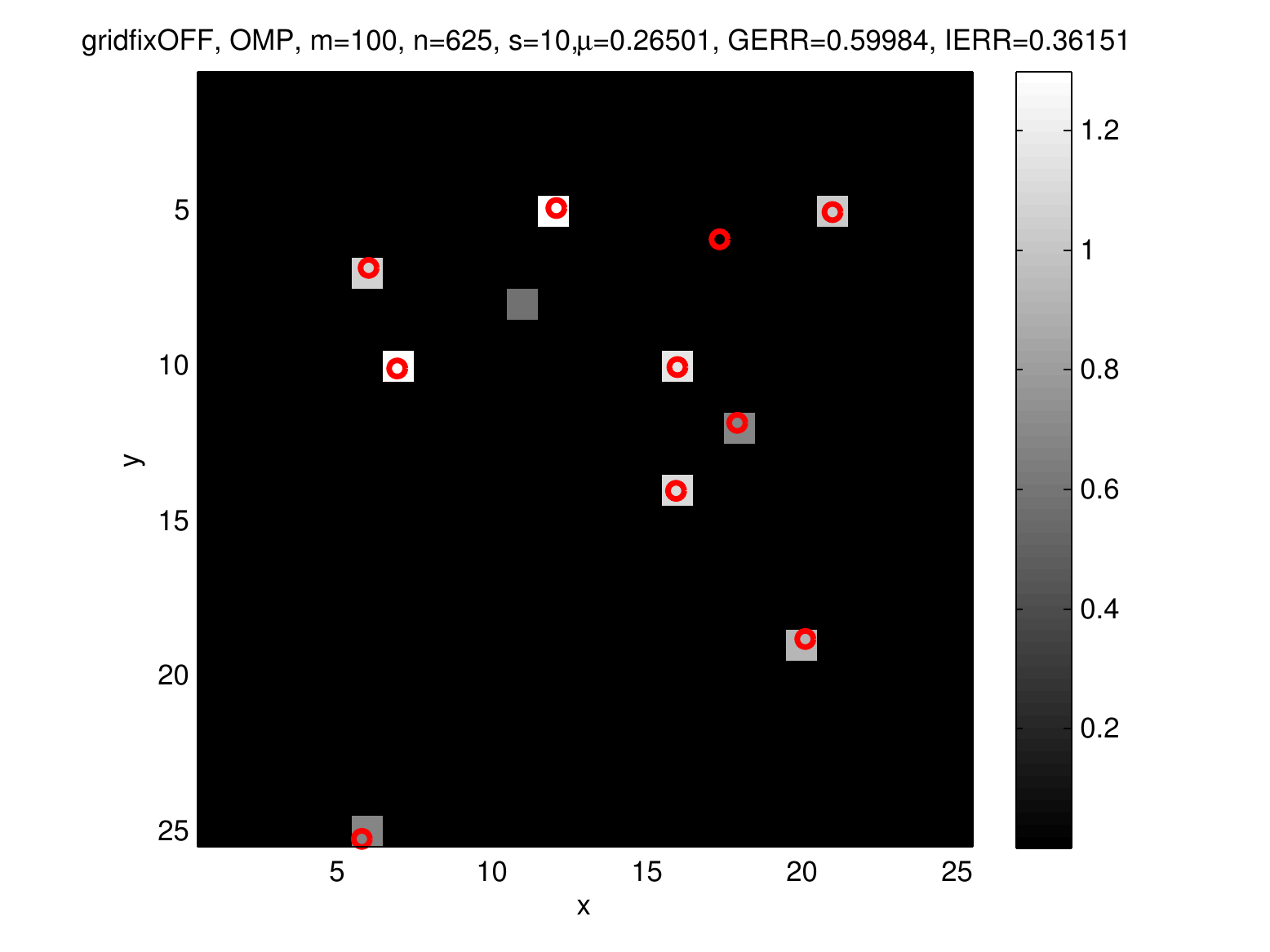}}
 \subfigure[BPLOT]{\includegraphics[width=8cm]{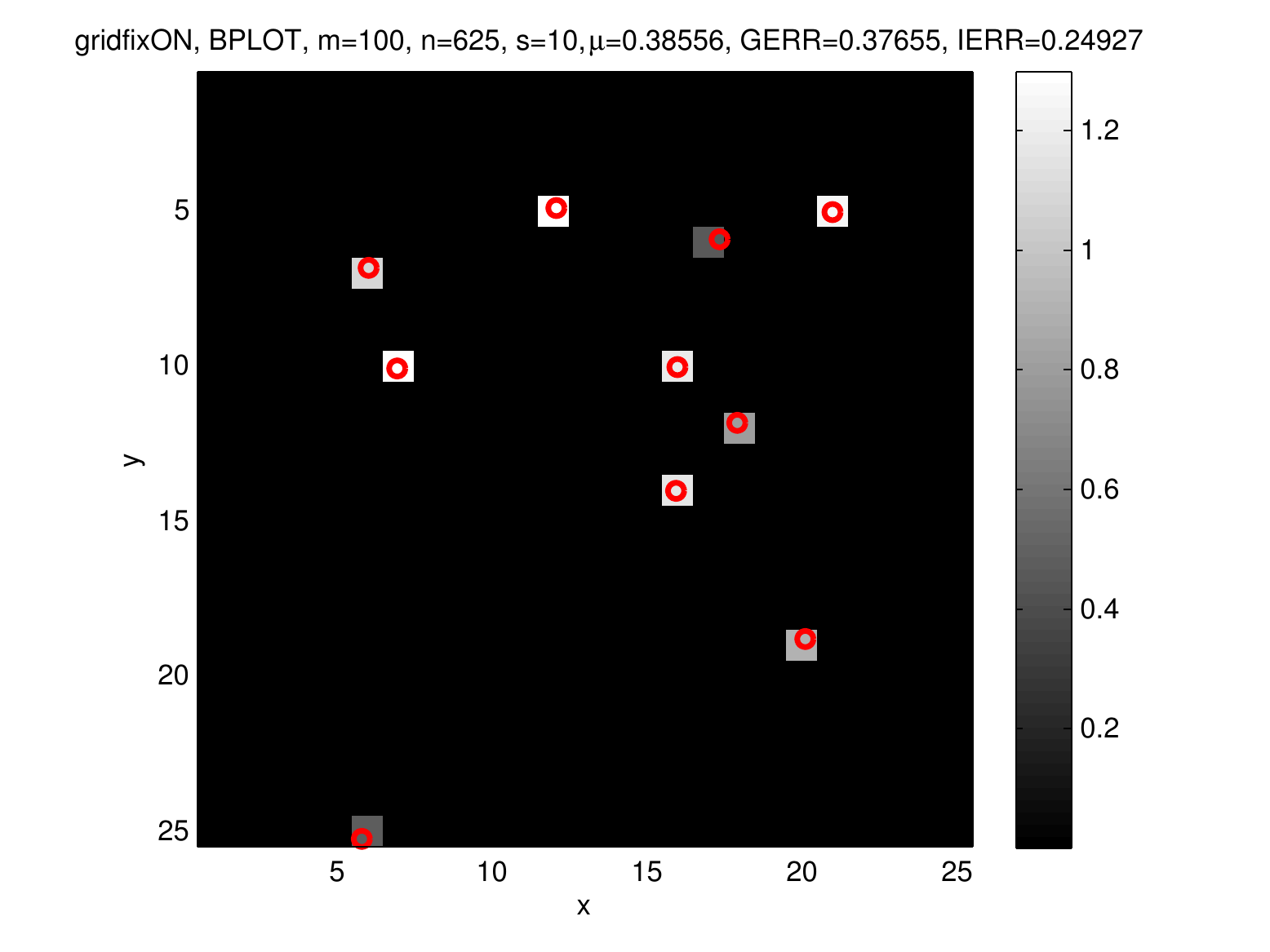}}
  \subfigure[SCOMP]{\includegraphics[width=8cm]{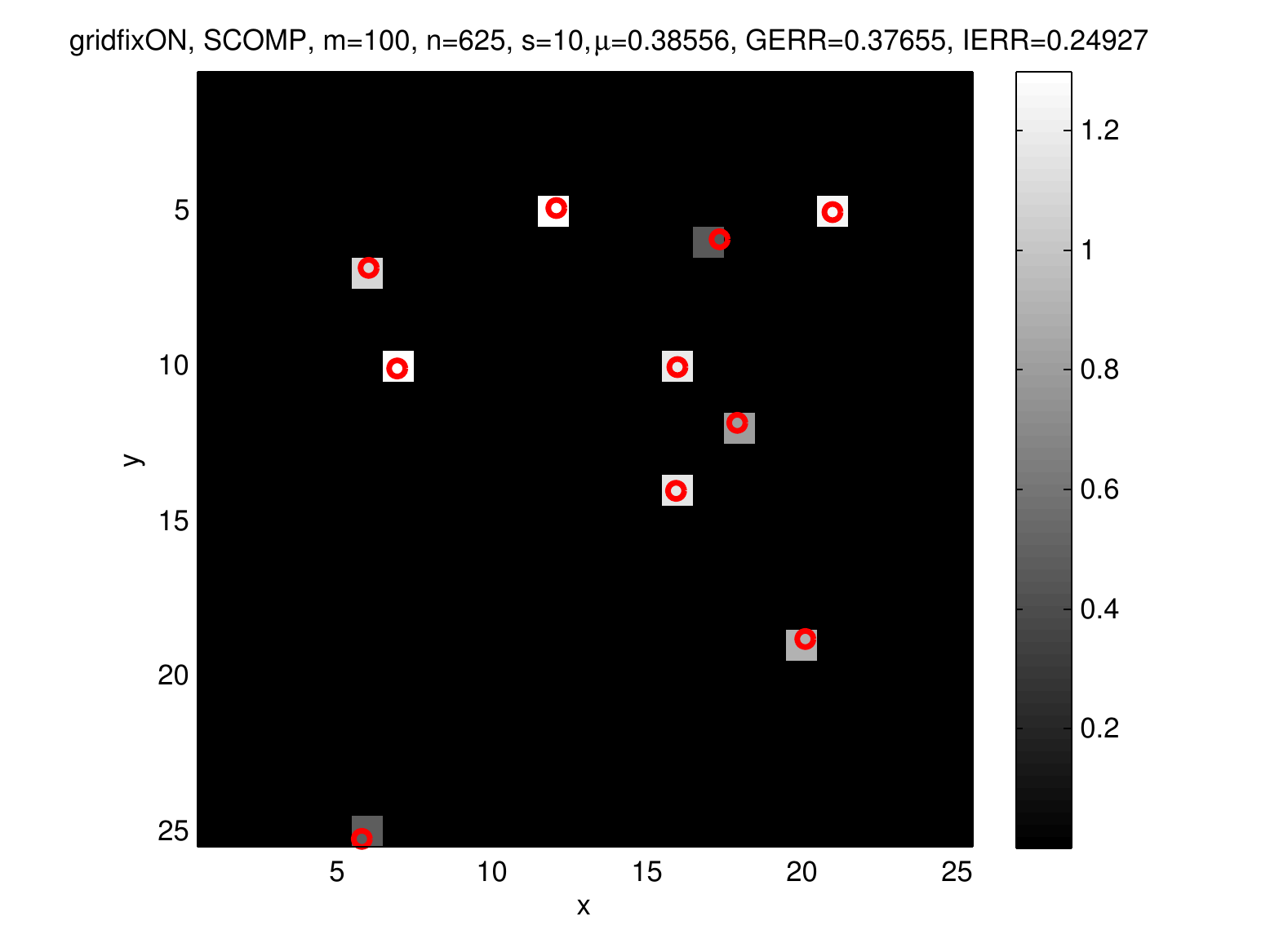}}
%  \subfigure[SCOMP with NLS]{\includegraphics[width=8cm]{new-figs/fig_SAR_NLS.pdf}}
  \caption{FDMF  SAR imaging with SAR scheme B and $Q=2$ by (a) BP, (b) OMP, without
  grid correction, 
   and (c) BPLOT, (d) SCOMP, with grid correction. }
   \label{fig3-2}
  \end{figure}

    \begin{figure}[t]\centering
          \subfigure[SCOMP-NLS with $Q=1$]{\includegraphics[width=8cm]{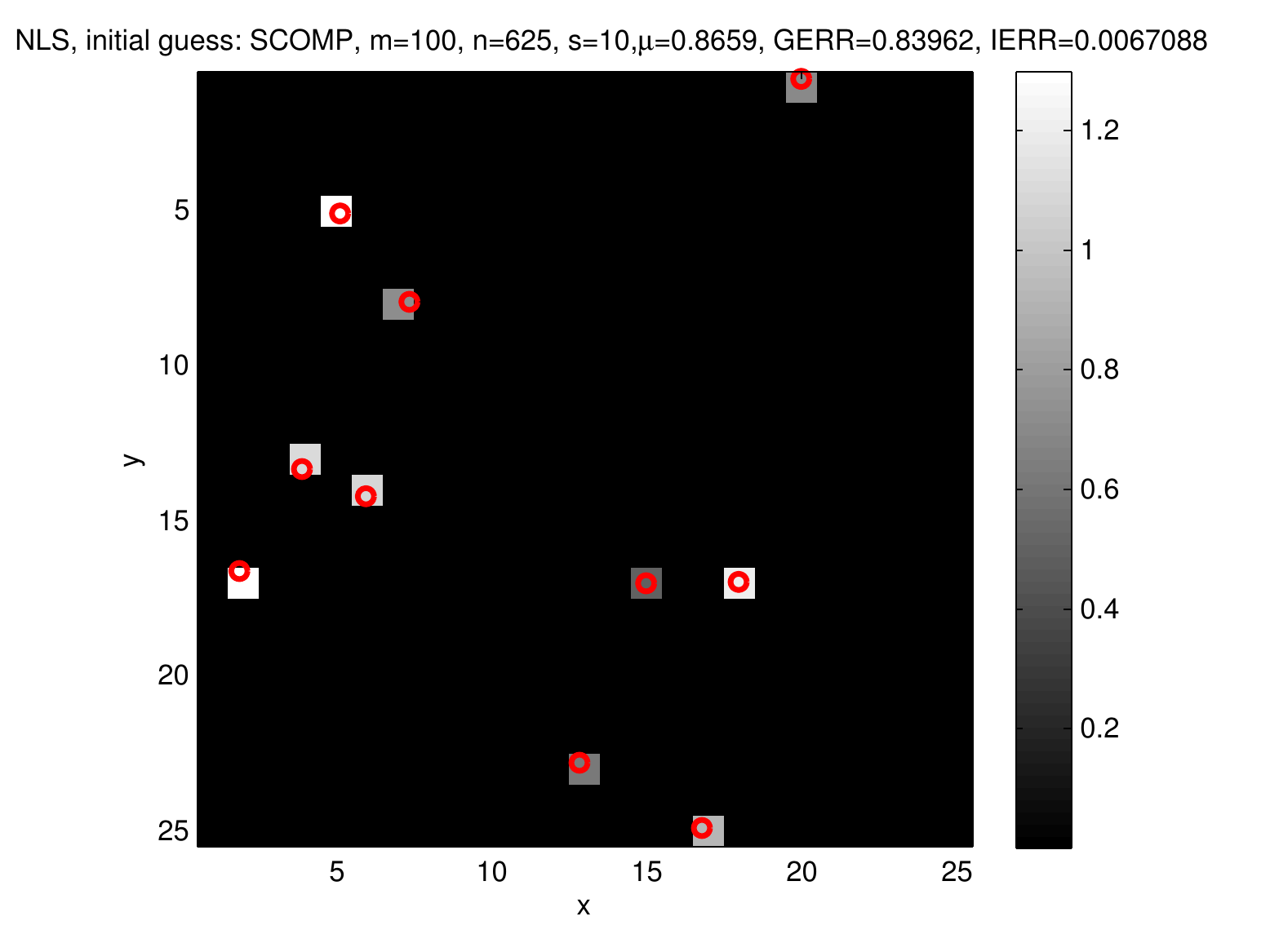}}
  \subfigure[SCOMP-NLS with $Q=2$]{\includegraphics[width=8cm]{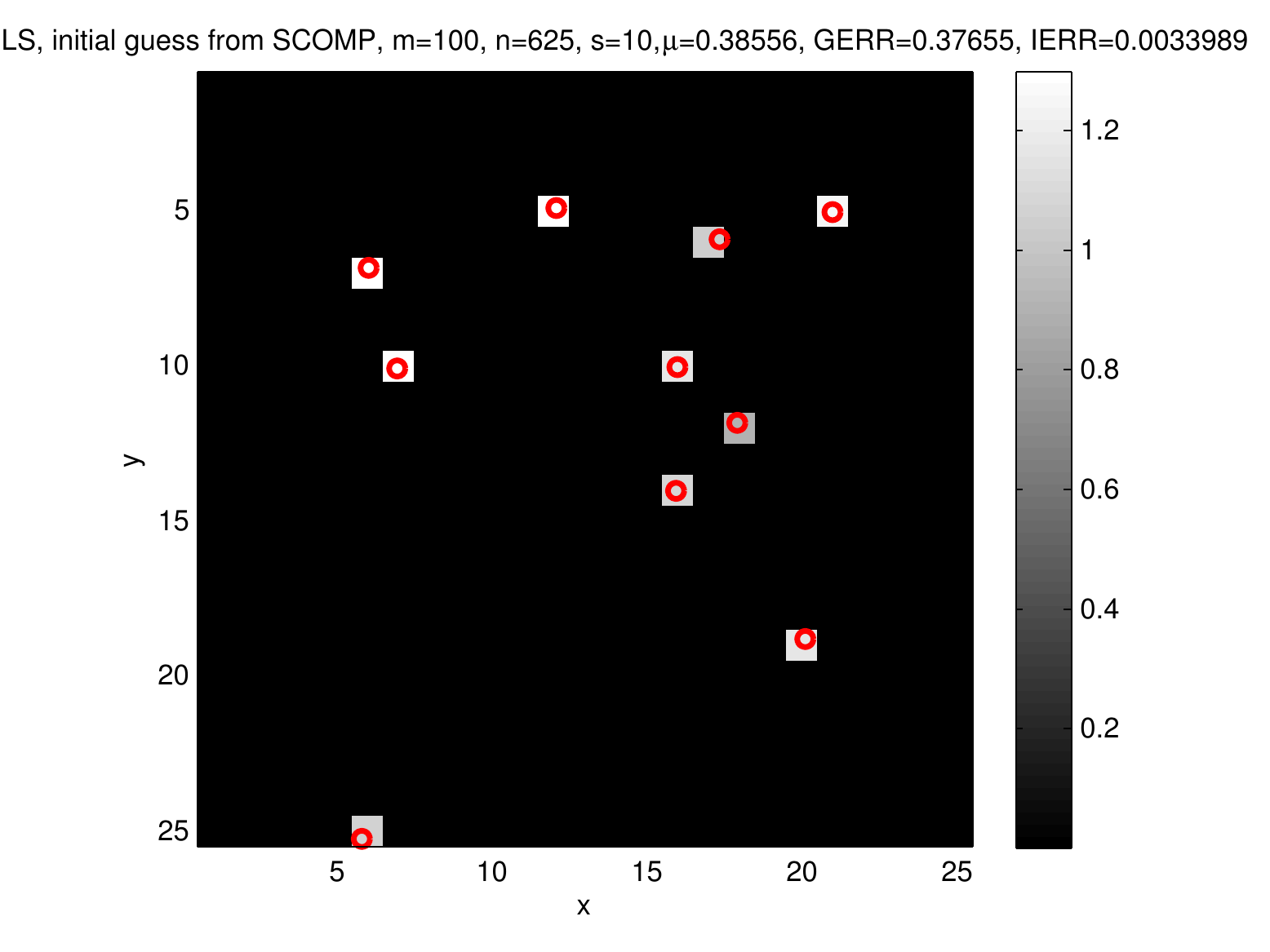}}
  \caption{  SCOMP-NLS for FDMF SAR scheme B produces error of (a) $0.7\%$ with $Q=1$ and (b) $0.3\%$ with $Q=2$. }
  \label{fig4-1}
  \end{figure}

  \begin{figure}[t]\centering
    \subfigure[$Q=1$]{\includegraphics[width=8cm]{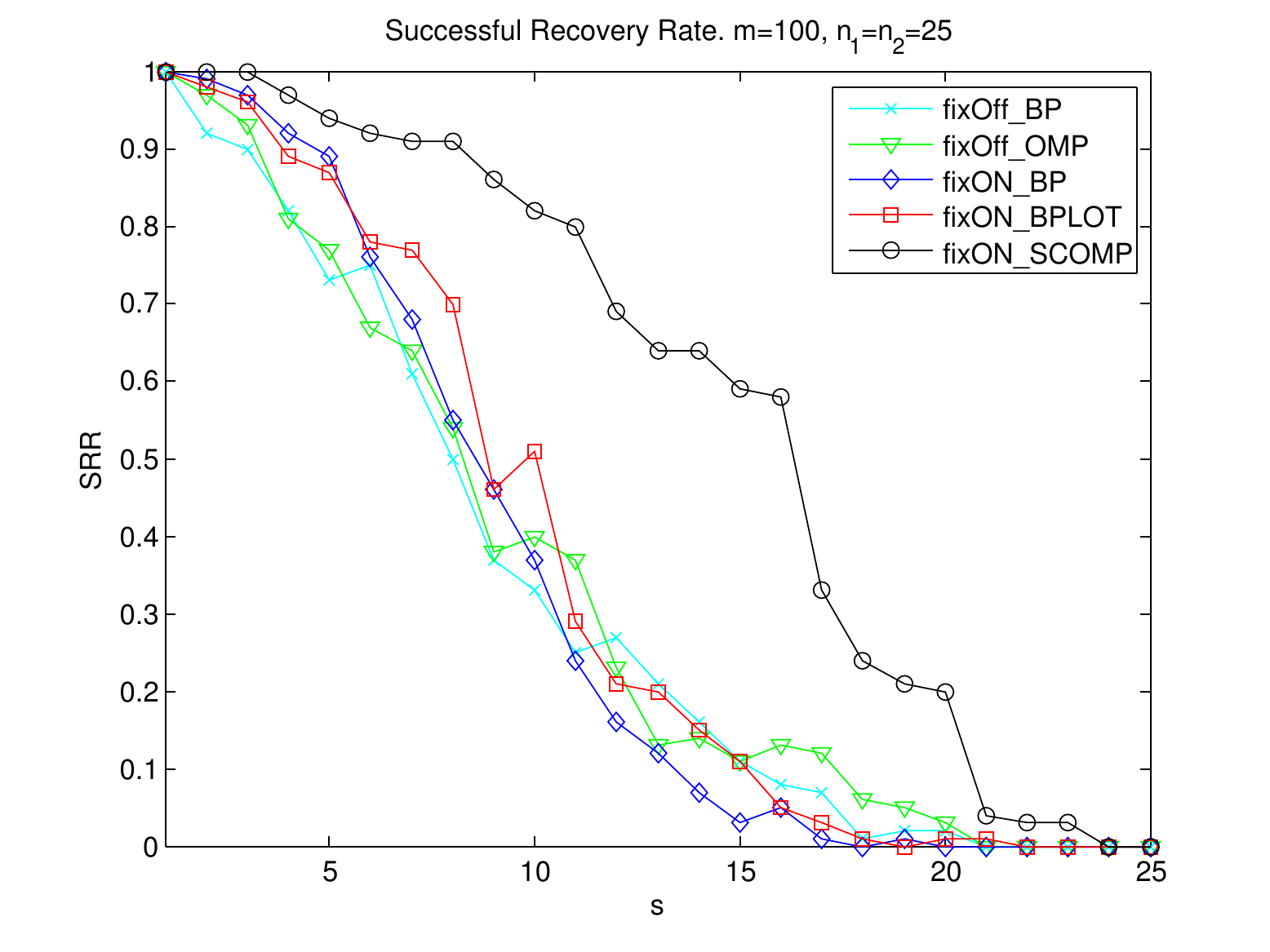}}
        \subfigure[ $Q=2$]{\includegraphics[width=8cm]{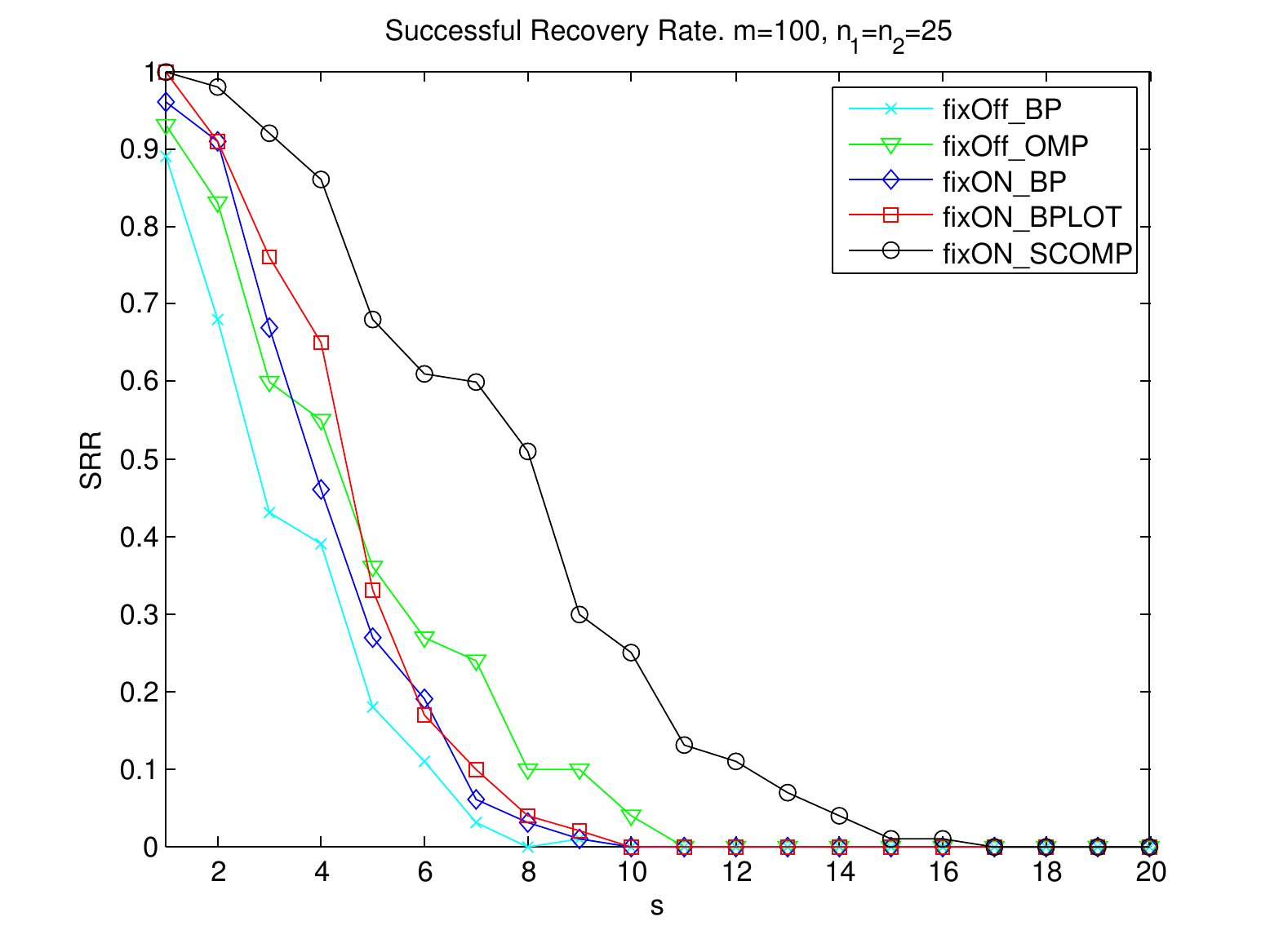}}
  \caption{ Success rate versus sparsity for FDMF SAR scheme B  with (a) $Q=1$ and (b) $Q=2$.  The legend is same as in Fig. \ref{fig-new12-2}.  }
  \label{fig4-2}
  \end{figure}
  
    \begin{figure}[t]\centering
    \subfigure[SCOMP-NLS]{\includegraphics[width=8cm]{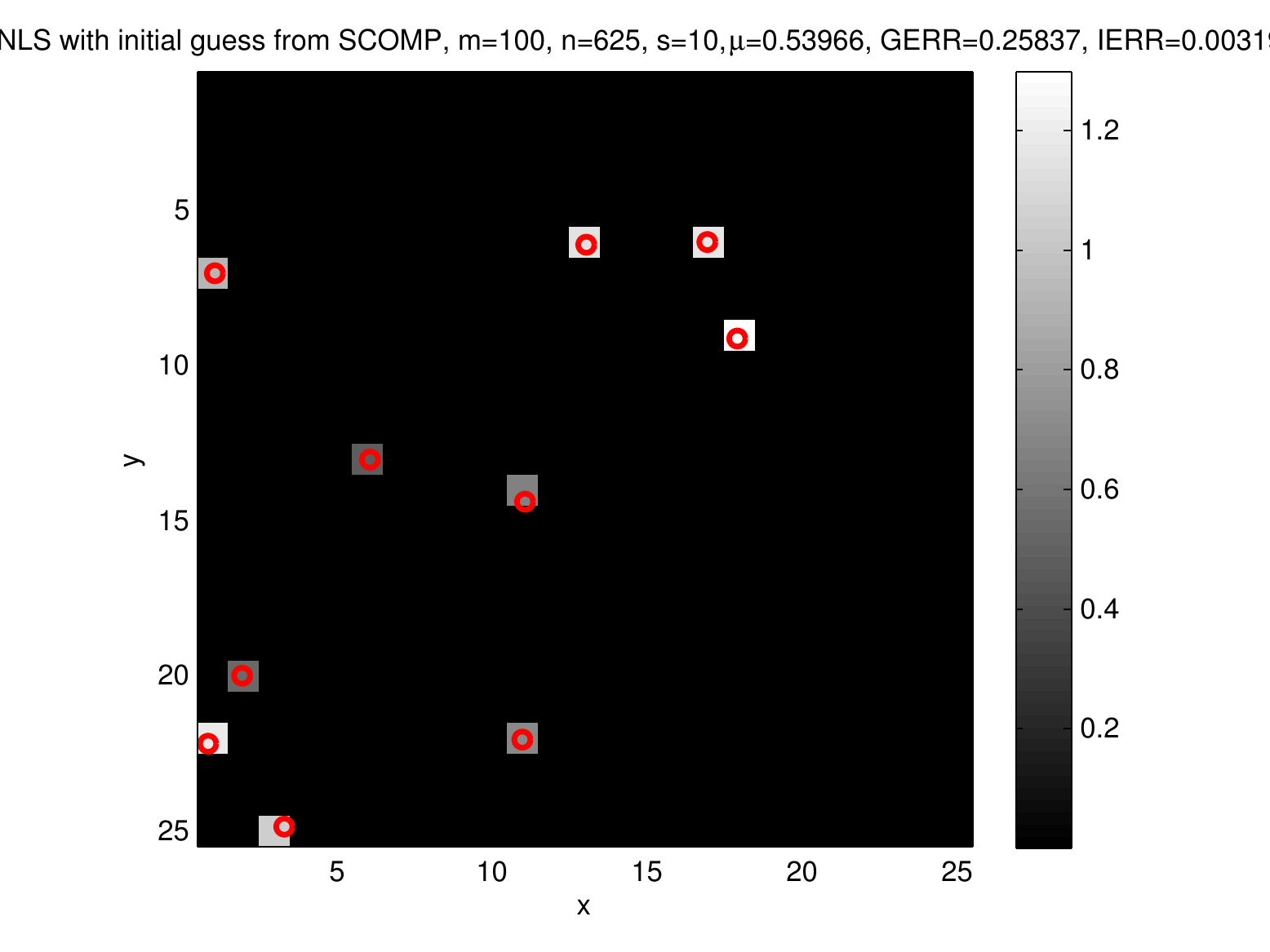}}
        \subfigure[Success rate vs. sparsity]{\includegraphics[width=8cm]{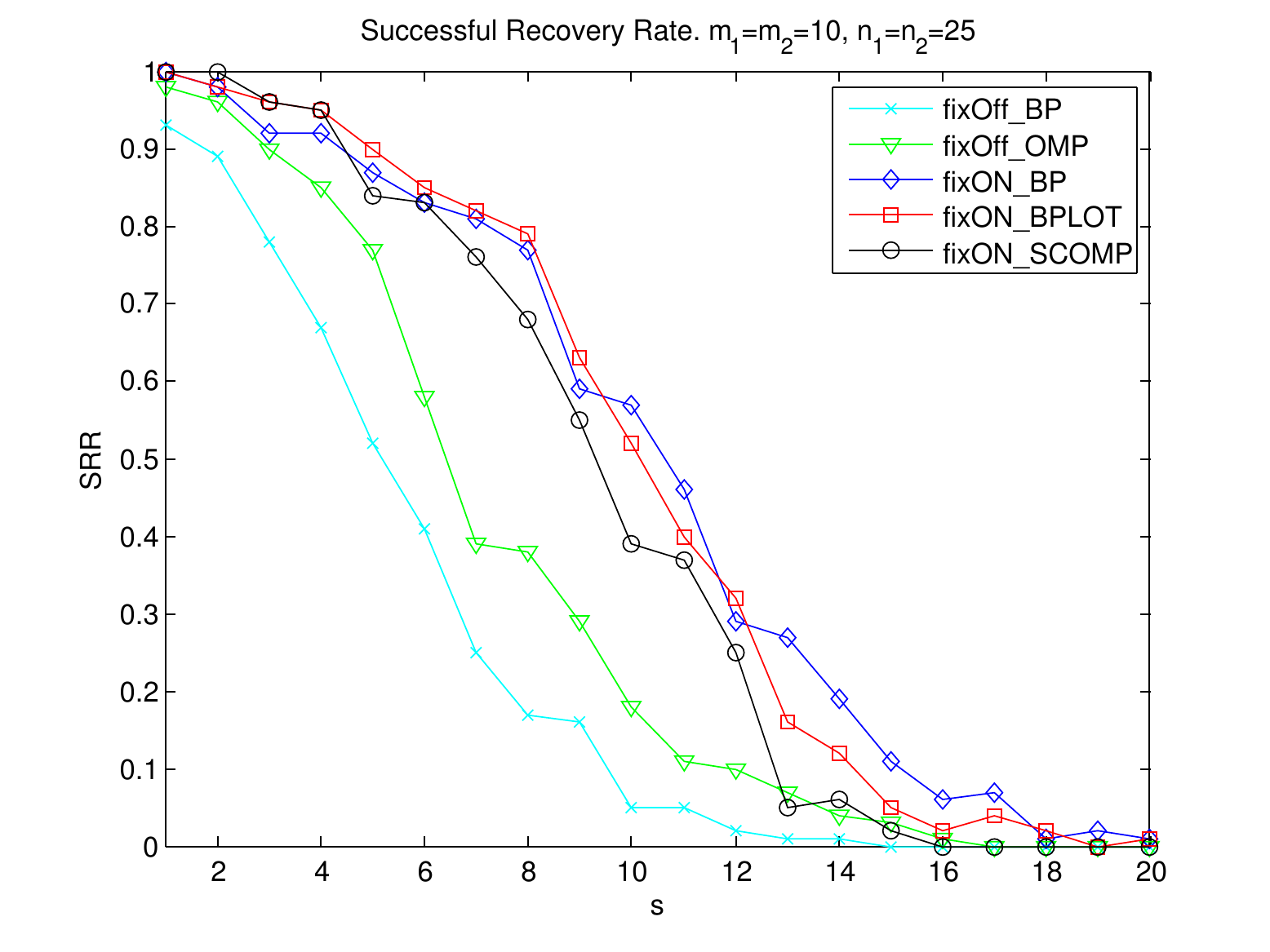}}
  \caption{ FDMF  SAR imaging with SAR scheme A ($m_1=m_2=10, \nu_0=0, \nu_*=1$).  (a) SCOMP-NLS produces error of $0.3\%$.  (b) Success rate versus sparsity. The legend is same as in Fig. \ref{fig-new12-2}.  }
  \label{fig6}
  \end{figure}

    \begin{figure}[t]\centering
    \subfigure[BP]{\includegraphics[width=8cm]{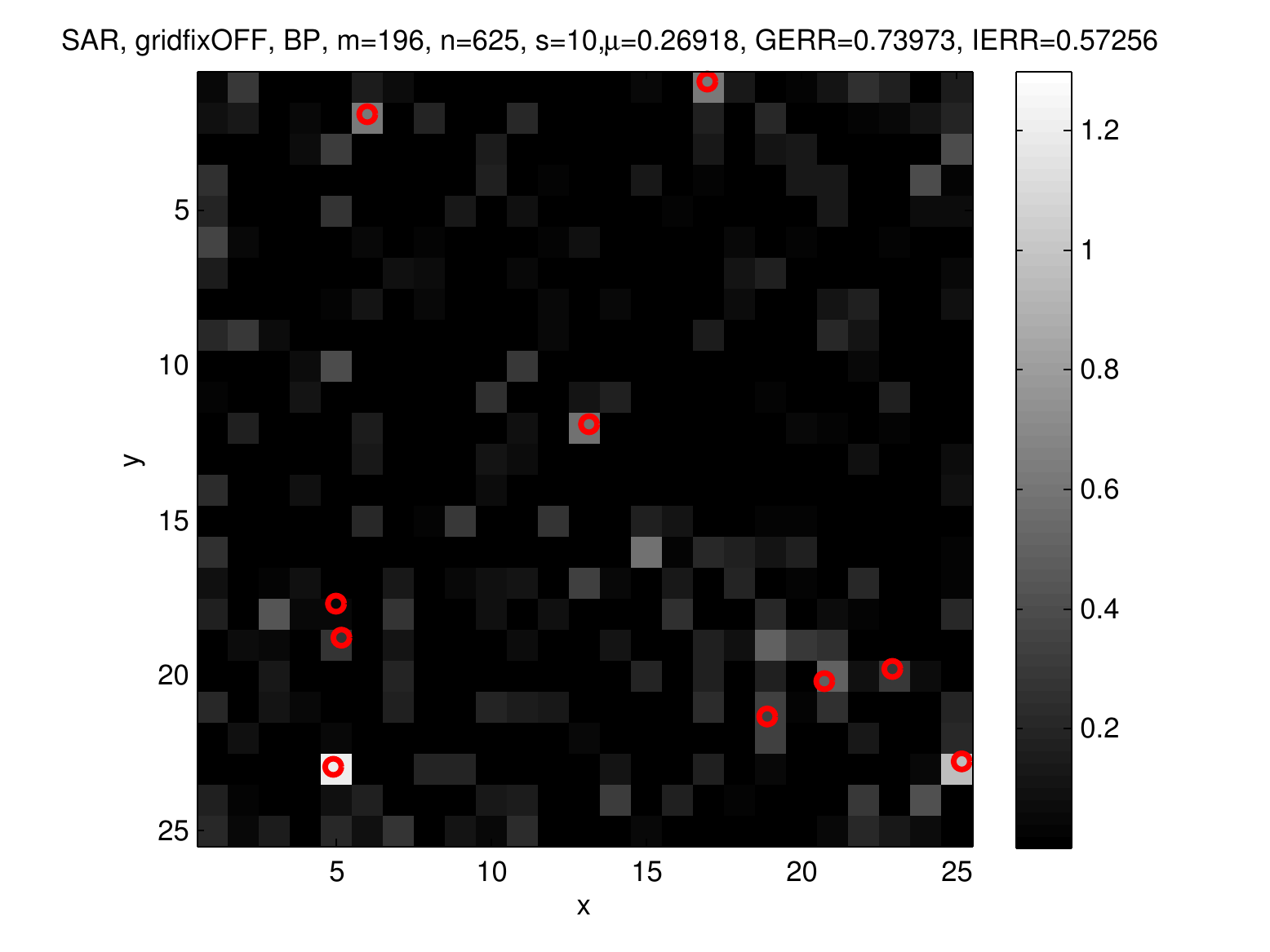}}
        \subfigure[OMP]{\includegraphics[width=8cm]{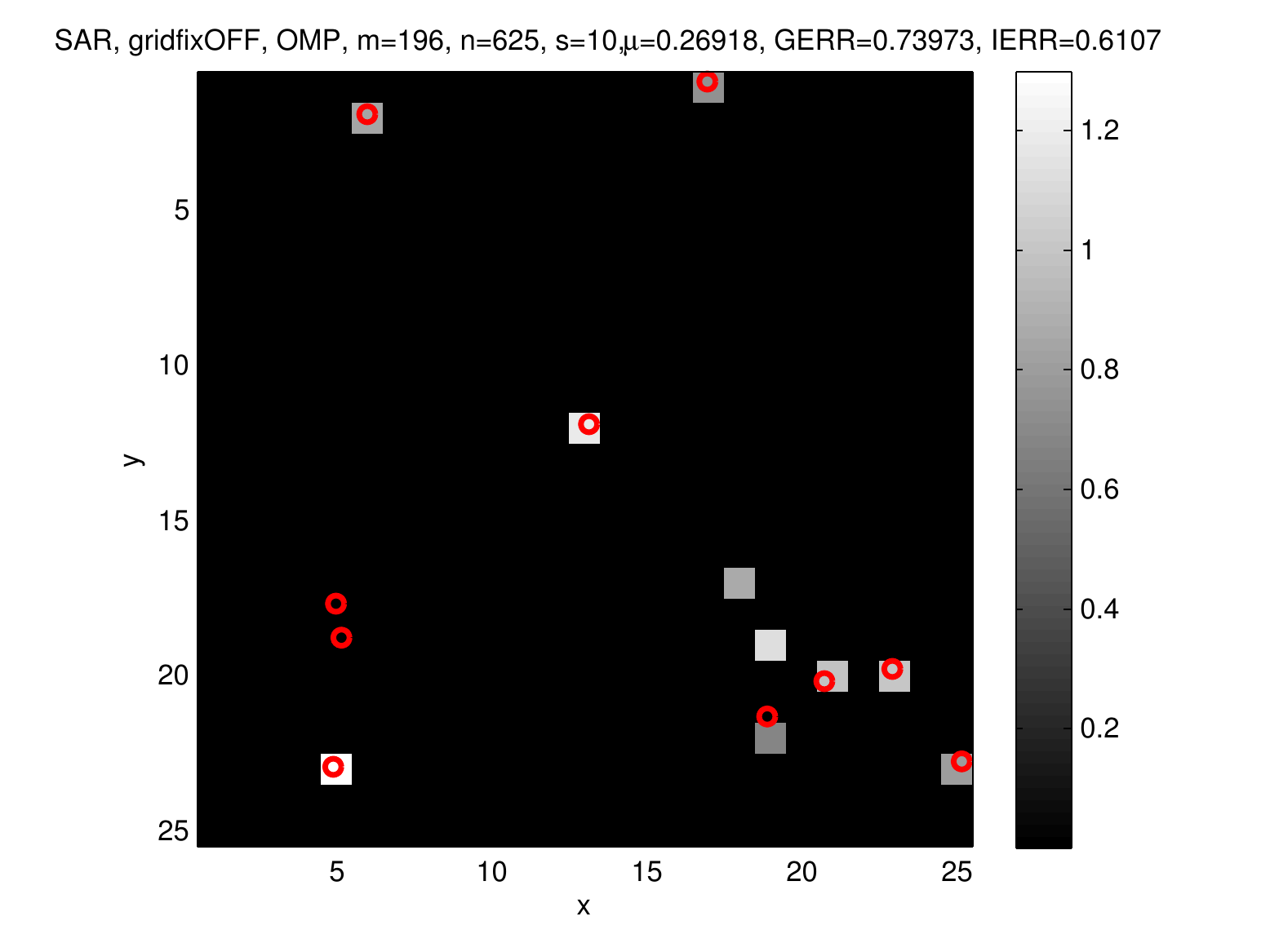}}
            \subfigure[BPLOT]{\includegraphics[width=8cm]{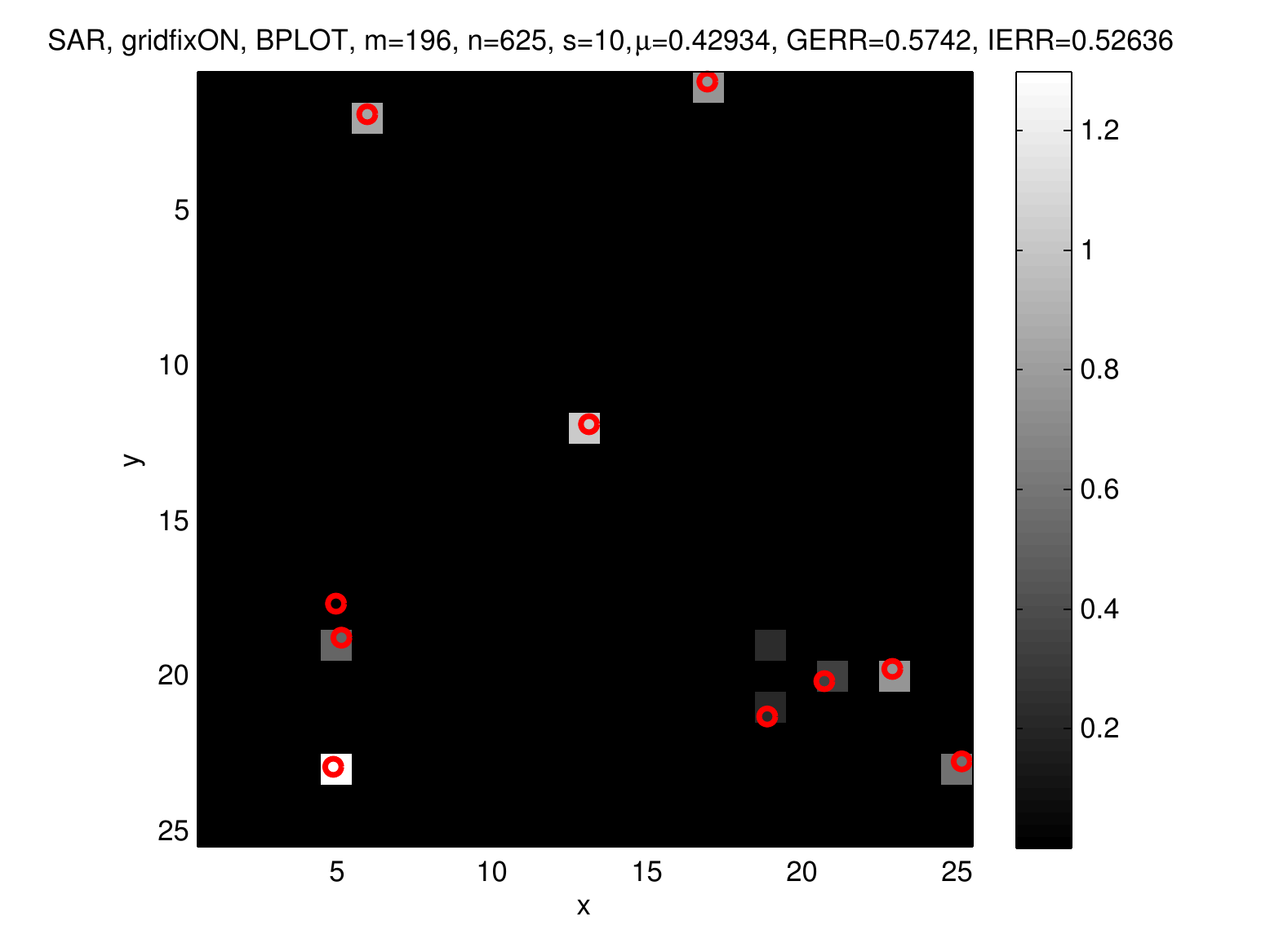}}
                \subfigure[SCOMP]{\includegraphics[width=8cm]{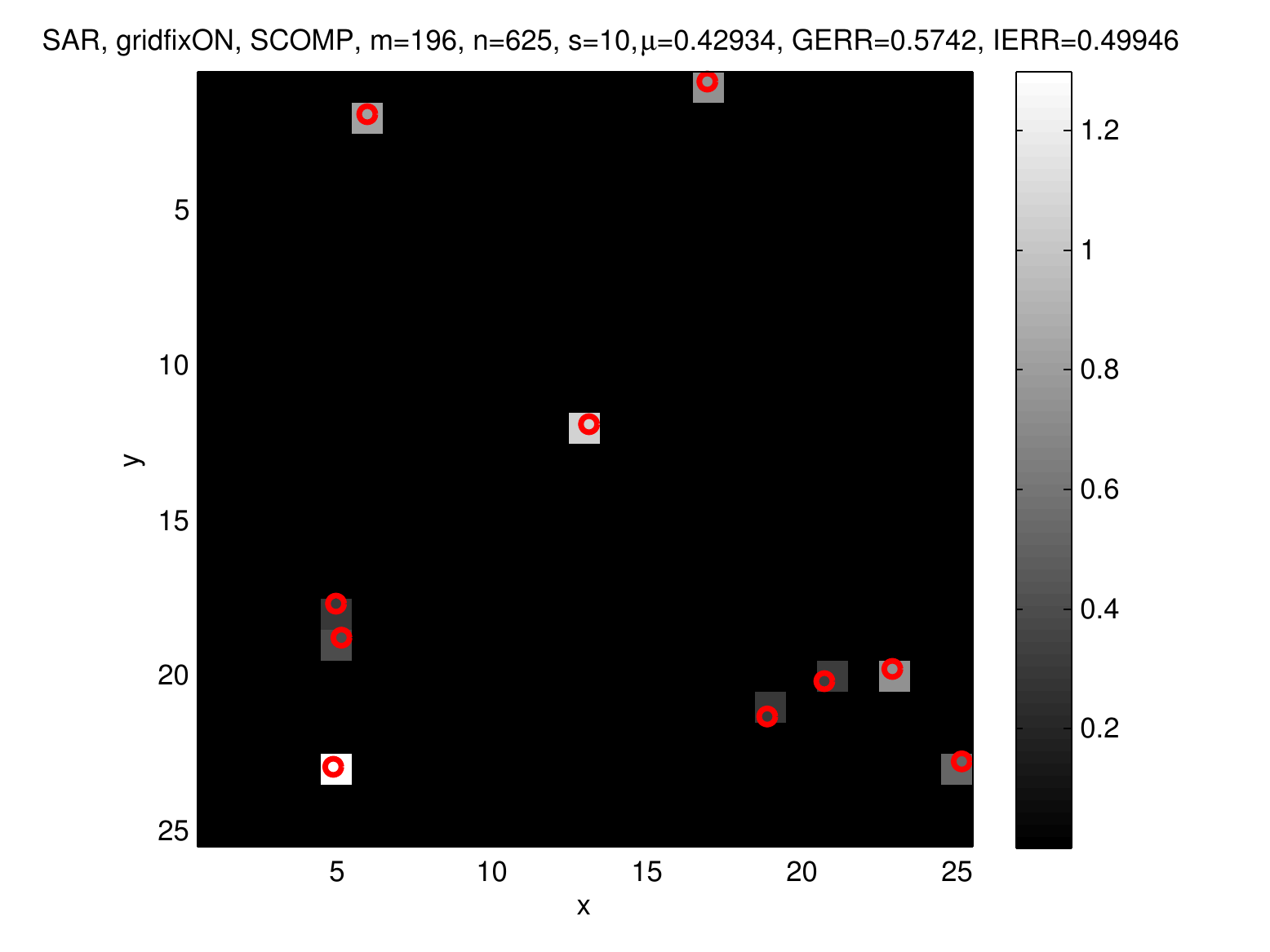}}
  \caption{ PDMF  SAR imaging  with SAR scheme A  by (a) BP, (b) OMP, without
  grid correction, 
   and (c) BPLOT, (d) SCOMP, with grid correction. }
  \label{fig5-1}
  \end{figure}
  
      \begin{figure}[t]\centering
                        \subfigure[SCOMP-NLS]{\includegraphics[width=8cm]{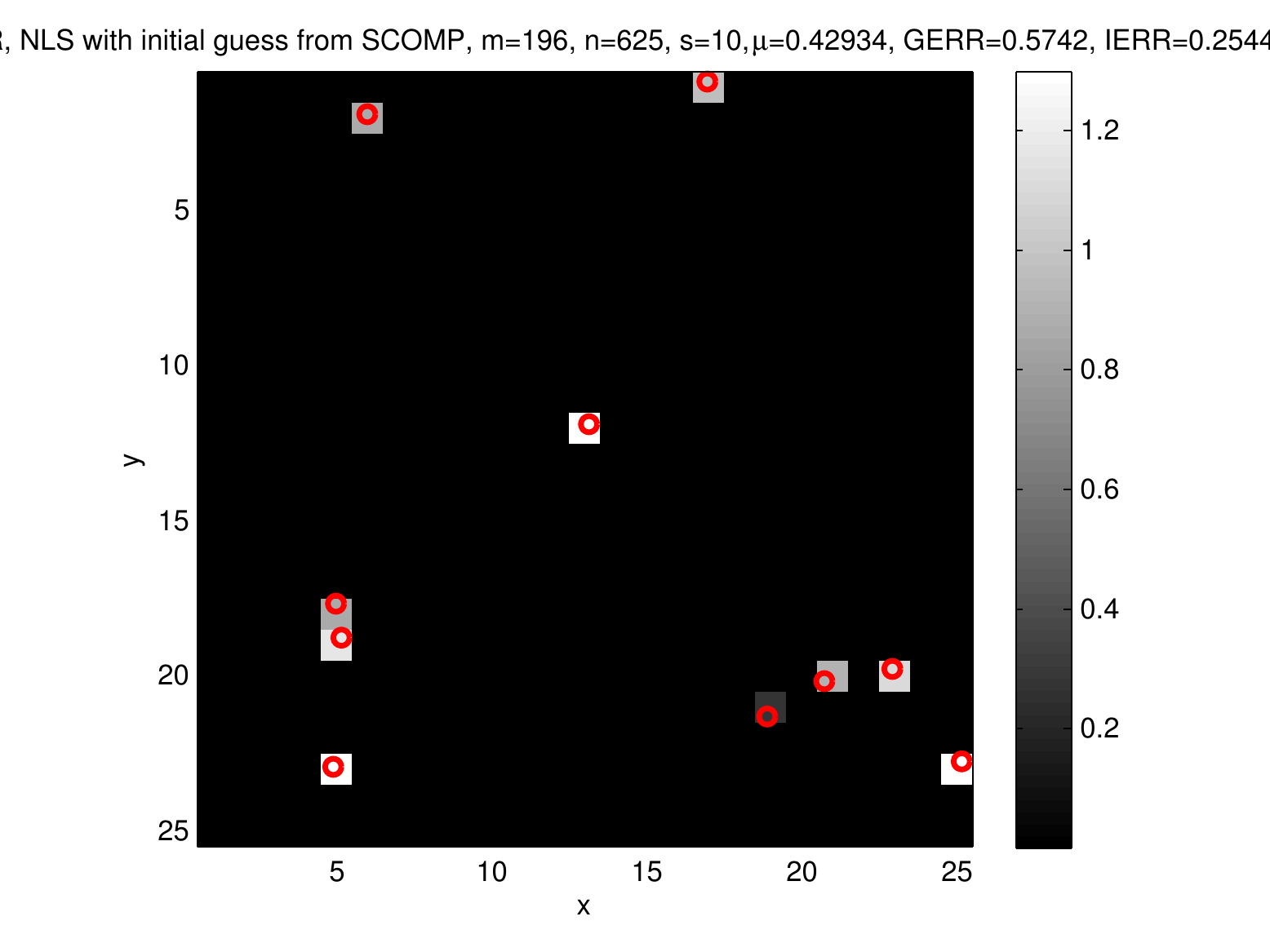}}
        \subfigure[Success rate versus sparsity]{\includegraphics[width=8cm]{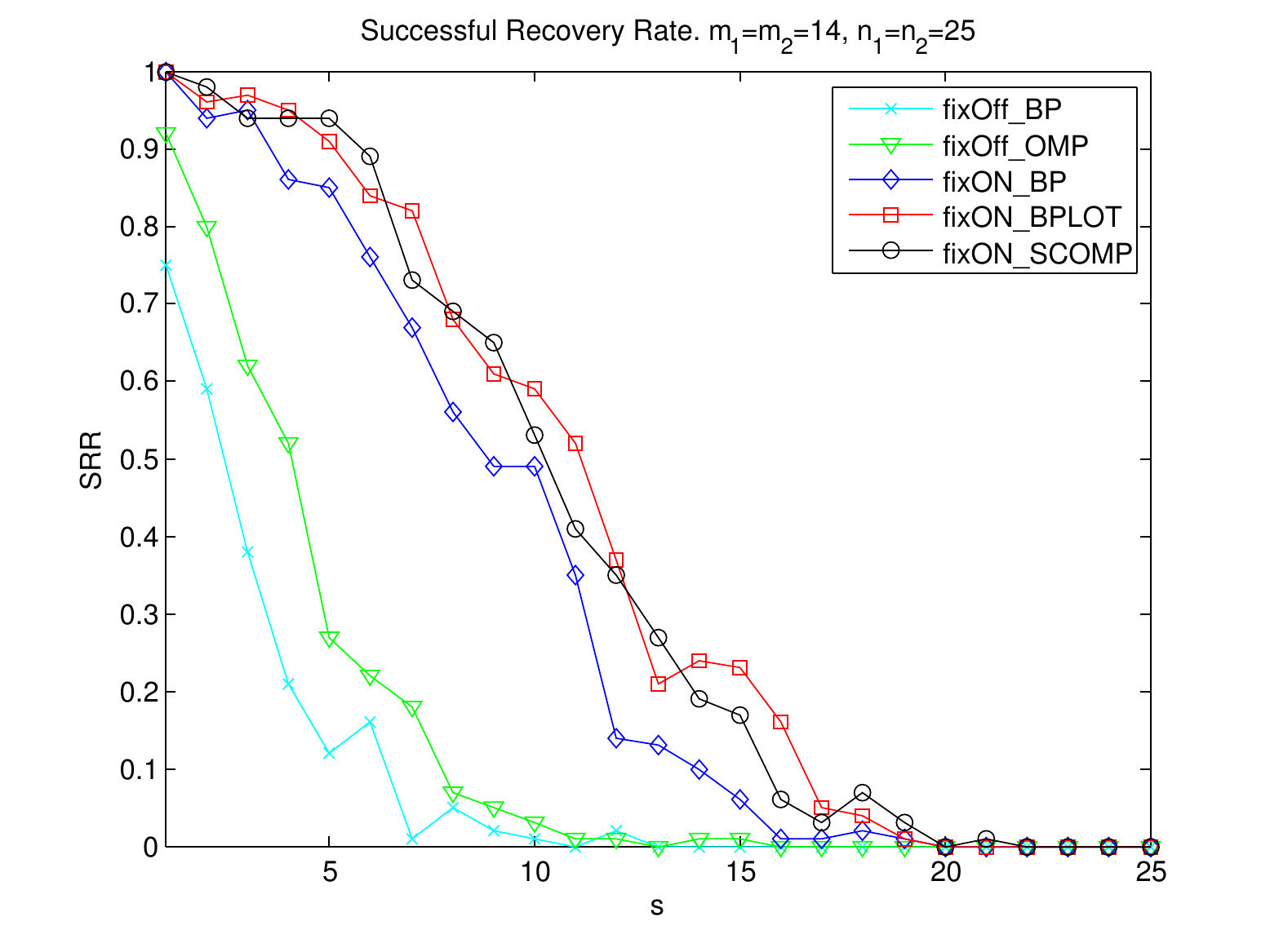}}
  \caption{PDMF  SAR imaging with scheme A ($m_1=m_2=14,\nu_0=1/2,\nu_*=1$).  (a) SCOMP-NLS produces error of $25.4\%$. (b) Success rate versus sparsity. The legend is same as in Fig. \ref{fig-new12-2}. }
  \label{fig5-2}
  \end{figure}

\commentout{
    \begin{figure}[t]\centering
    \subfigure[BP]{\includegraphics[width=8cm]{fig_SAR_m1_7_m2_28_v_05_1/fig_SAR_gridfixOFF_BP-eps-converted-to.pdf}}
        \subfigure[OMP]{\includegraphics[width=8cm]{fig_SAR_m1_7_m2_28_v_05_1/fig_SAR_gridfixOFF_OMP-eps-converted-to.pdf}}
            \subfigure[BPLOT]{\includegraphics[width=8cm]{fig_SAR_m1_7_m2_28_v_05_1/fig_SAR_gridfixON_BPLOT-eps-converted-to.pdf}}
                \subfigure[SCOMP]{\includegraphics[width=8cm]{fig_SAR_m1_7_m2_28_v_05_1/fig_SAR_gridfixON_SCOMP-eps-converted-to.pdf}}
  \caption{ PDMF  SAR  scheme A  by (a) BP, (b) OMP, without
  grid correction, 
   and (c) BPLOT, (d) SCOMP, with grid correction. }
  \label{fig7-1}
  \end{figure}
  }
      \begin{figure}[t]\centering
                        \subfigure[$m_1=28, m_2=7$ ]{\includegraphics[width=8cm]{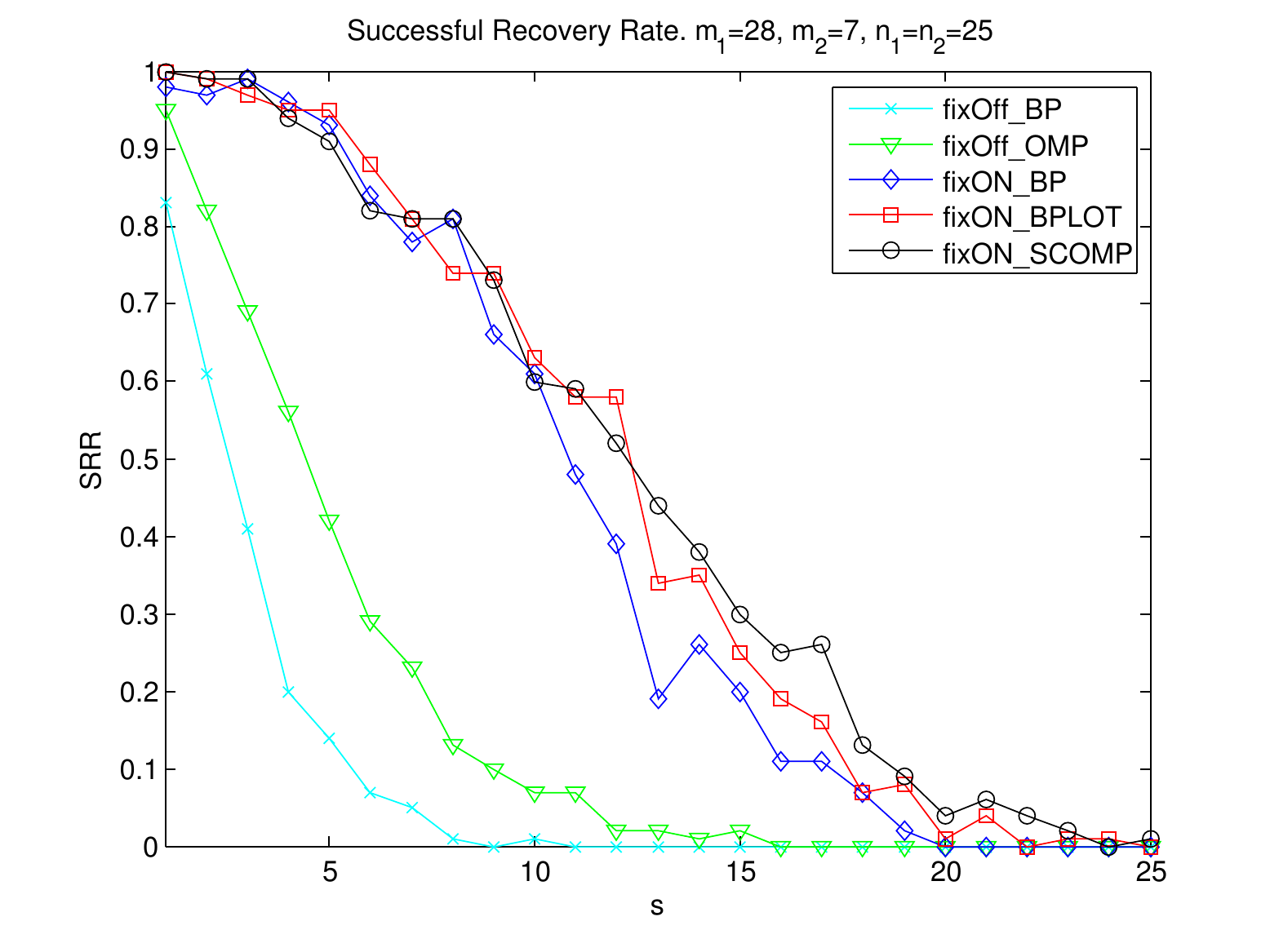}}
        \subfigure[$m_1=7, m_2=28$]{\includegraphics[width=8cm]{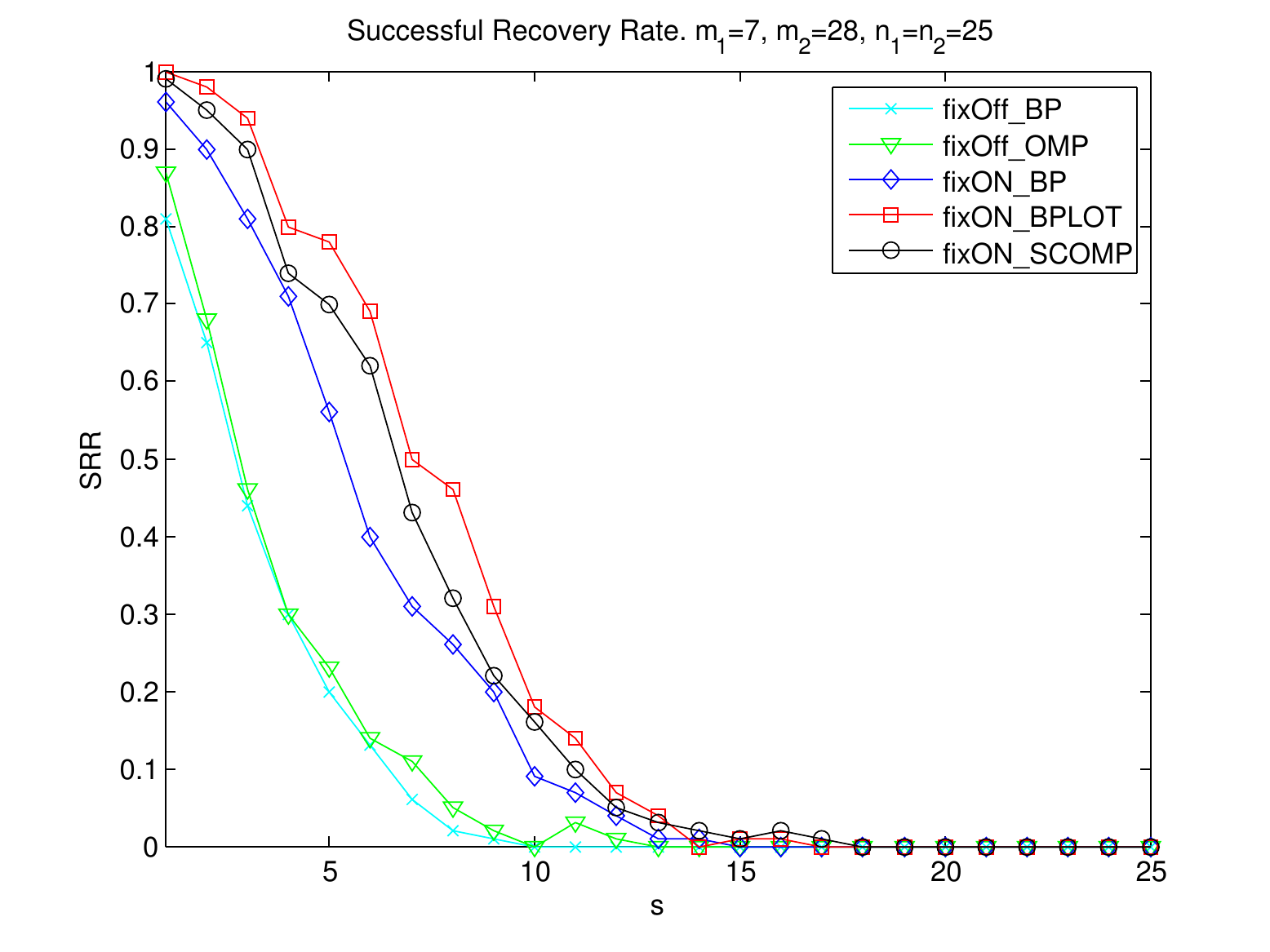}}
  \caption{Success rate of PDMF SAR scheme A with  (a) $m_1=28, m_2=7$  (b) $m_1=7, m_2=28$. The legend is same as in Fig. \ref{fig-new12-2}.  }
  \label{fig7-2}
  \end{figure}

  \section{Conclusion}\label{sec7}
We explored  compressed sensing approach to
monostatic radar with chirped
signals or  multi-frequency UNB waveforms. Particular attention is on the off-grid targets and the resulting  intrinsically nonlinear
gridding error. 

 We used the 
Taylor expansion of phase factor to approximate the signals
from the off-grid targets and reduce  the gridding error. 
We proposed a new algorithm, SCOMP, to solve the 
resulting grid-corrected system and gave
a performance guarantee (Theorem \ref{thm1}). 
Our theory, however, does  not fully account for  the numerical  performance of 
the proposed schemes, especially in the regime of low $Q$ which remains to be
further analyzed  (Remark \ref{rmk3}).  

In addition, we proposed technique (LOT) to enhance BP for the off-grid setting. The resulting method BPLOT can sometimes outperform SCOMP (Fig.\ref{fig-new12-2}(a) and \ref{fig6}(b)).  We extended SCOMP and the performance guarantee (Theorem \ref{thm22})   to 
Spotlight SAR and proposed the UNB multi-frequency version of implementation. Our numerical experiments show significant improvement over
the standard CS methods,
especially in locating sparse targets to the grid accuracy. The recovery of target amplitudes 
can be further improved by applying the nonlinear least squares 
with the SCOMP/BPLOT estimates as the initial guess. 

Our numerical study indicates that in both  radar ranging and SAR imaging,
our methods perform best  with $Q=1$. The latter corresponds to the setting where the grid spacing
is around the resolution threshold of the probe, no more no less. Excessive  bandwidth
for  the same grid spacing hinders  the radar  performance due to overall  enhanced  level of gridding error. 

When full frequency diversity is not available,  a good performance can be maintained  up to about 2/3 fractional bandwidth.
Further reduction in the probe bandwidth significantly degrades   performance. 
Therefore the signals much be of ultra-wideband (UWB), defined as at least
$1/4$ fractional bandwidth \cite{UWB},  if Spotlight SAR (\ref{45}) is to be implemented with
chirped signals and sparse measurements. 

Implementing  the proposed CS Spotlight SAR 
with multi-frequency UNB waveforms, instead of UWB pulses,  has the added  benefits  of simpler transmitters,  increased signal-to-noise ratio due to less unwanted thermal noise and increased signal-to-interference ratio due to avoiding the electromagnetic spectrum occupied by other civilian and military applications. The last of these benefits is a natural fit for the
CS paradigm which opens the door for fully diversified,  but sparse measurements
in the frequency domains. 

In the case of extreme deficiency in frequency diversity $(\nu_*-\nu_0)/\nu_0\ll 1$, the gridding error dominates the data and our methods eventually break down.  
 In this case  
 SAR  imaging of off-grid targets with {sparse measurement} requires
a different approach than the proposed methods.

We plan to extend our methodology to the case of range-Doppler radar
and SAR imaging of moving targets in the future.

 %\appendix 
\begin{appendix}
  \section{Proof of Lemma \ref{lem1}}\label{app1}
\begin{proof}  
  We prove the coherence bound for the matrix $\bA=[\bF\ \bG]$. 
  
  The $k$-th column vector $A_k$ 
  of $\bA$ is given by
  $$A_{jk}= \begin{cases}
                 \e^{-2\pi\i Q k \bar{t}_j},  &k\leq n
		\\
		( \bar{t}_j -1/2)\e^{-2\pi\i Q (k-n) \bar{t}_j}, & k>n.                \end{cases}  $$
           Note that     $ \norm{A_k}^2 = m , k\leq n$ and
           $ \Exp[\norm{A_k}^2] = {m}/{12}, k>n.$
  Consequently, the scalar product of two distinct columns of $\bA$
  has three possible forms:
  $$
b_{kk'}= \sum_{j=1}^m a(\bar{t}_j) \exp\bracket{2\pi\i Q(k-k')\bar{t}_j} 
  \ , \quad a(t_j) = 1,\ (\bar{t}_j-1/2),\text{ or } \ (\bar{t}_j-1/2)^2\ , $$
  for $k,k'=1, \ldots, n$.  
  When $a(t_j) = 1$ or $(\bar{t}_j-1/2)^2$  both columns are drawn from $\bF$ or $\bG$  and
  thus $k\neq k'$. When $a(t_j) =\bar{t}_j-1/2$, one column is drawn
  from $\bF$ and the other from $\bG$. In the last case, $k$ and $k'$
  are arbitrary.

  Let $S_m=\sum_{j=1}^m U_j$, $T_m=\sum_{j=1}^m V_j$
where
  $$
   U_j = a(\bar{t}_j)\cos\bracket{2\pi Q(k-k')\bar{t}_j}\ , \quad
   V_j = a(\bar{t}_j)\sin\bracket{2\pi  Q(k-k')\bar{t}_j}    
  $$ are independent  (for different $j$) random variables in $[-1,1]$. We have
  \begin{align*}
  \abs{b_{kk'}} 
  &\leq \abs{b_{kk'} - \Exp(b_{kk'})}
  + \abs{\Exp(b_{kk'})} \notag \\
  &=  \abs{ S_m+\i T_m - \Exp S_m-\i \Exp T_m }
  + \abs{\Exp(S_m+\i T_m)} .
  \end{align*}
Recall  the Hoeffding inequality.
  \begin{proposition}
    Let $U_1,\ldots,U_m$ be independent random variables, and $S_m=\sum_{j=1}^m U_j$.
    Assume that
    $U_j\in[u,v]$, $j=1,2,\ldots,m$ almost surely, then we have
    \begin{equation}\Pr(\abs{S_m-\Exp S_m}\geq mt)\leq 
    2\exp\bracket{-{2m^2 t^2\over\sum_j (v-u)^2}}
    \end{equation}
    for all positive $t$.
  \end{proposition}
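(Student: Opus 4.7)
The plan is to follow the standard Chernoff--Hoeffding recipe: bound the one-sided deviation $\Pr(S_m - \Exp S_m \geq mt)$ by controlling the moment generating function (MGF) of the centered sum, then symmetrize. First, for any $\lambda > 0$, Markov's inequality applied to $e^{\lambda(S_m - \Exp S_m)}$ gives
$$\Pr(S_m - \Exp S_m \geq mt) \leq e^{-\lambda mt}\, \Exp\bigl[e^{\lambda(S_m - \Exp S_m)}\bigr],$$
and by independence of the $U_j$ the MGF factorizes as $\prod_j \Exp[e^{\lambda(U_j - \Exp U_j)}]$. The problem thus reduces to bounding the MGF of each individual centered bounded variable $U_j - \Exp U_j$.

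The key technical step, which I expect to be the main obstacle, is Hoeffding's lemma: for any random variable $X$ with $\Exp X = 0$ and $X \in [a, b]$ almost surely, one has $\Exp[e^{\lambda X}] \leq \exp\bigl(\lambda^2 (b-a)^2/8\bigr)$. I would prove this by exploiting the convexity of $x \mapsto e^{\lambda x}$ on $[a,b]$, which yields the chord bound $e^{\lambda x} \leq \tfrac{b-x}{b-a} e^{\lambda a} + \tfrac{x-a}{b-a} e^{\lambda b}$; taking expectations and using $\Exp X = 0$ expresses $\log \Exp[e^{\lambda X}]$ as an explicit function $\varphi(u)$ of $u = \lambda(b-a)$. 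A short computation shows $\varphi(0) = \varphi'(0) = 0$ and $\varphi''(u) \leq 1/4$, so $\varphi(u) \leq u^2/8$ by Taylor's formula with integral remainder.

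Applying Hoeffding's lemma to each centered variable $U_j - \Exp U_j \in [u - \Exp U_j,\, v - \Exp U_j]$ (an interval of length $v - u$), I would obtain
$$\Pr(S_m - \Exp S_m \geq mt) \leq \exp\Bigl(-\lambda mt + \tfrac{\lambda^2}{8} \sum_j (v-u)^2\Bigr).$$
Optimizing the right-hand side over $\lambda > 0$ (the minimizer is $\lambda^\ast = 4mt / \sum_j (v-u)^2$) produces the bound $\exp\bigl(-2m^2 t^2 / \sum_j (v-u)^2\bigr)$.

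Finally, applying the identical argument to the sequence $-U_1, \ldots, -U_m$ yields the same bound for $\Pr(\Exp S_m - S_m \geq mt)$, and a union bound over the two one-sided events produces the factor of $2$ in the stated inequality. The only genuinely non-routine ingredient is Hoeffding's lemma; the rest is Chernoff's method plus an explicit optimization.
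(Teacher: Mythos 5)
Your proof is correct: the Chernoff bound via Markov's inequality on the exponential moment, Hoeffding's lemma for the per-term MGF, optimization at $\lambda^\ast = 4mt/\sum_j(v-u)^2$, and a union bound over the two tails is exactly the standard derivation, and your computations check out. The paper itself gives no proof of this proposition --- it is recalled as a classical result (Hoeffding's inequality) and simply invoked in the proof of Lemma \ref{lem1} --- so there is nothing to compare against; your argument is the standard one and fills the gap correctly.
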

  Choosing $t=K/\sqrt{m}$ for some constant $K$, we have
  $$\Pr(\abs{S_m-\Exp S_m}\geq \sqrt{m}K)\leq 2\exp\bracket{-K^2/2}.$$
  
  Note that the quantities $S_{m}$ depend on $
k-k'$ but 
there
are at most $n-1$ different values. 
  The union bound  yields
  $$\Pr(\max_{k\neq k'}\abs{S_m-\Exp S_m}\geq \sqrt{m}K)\leq 2(n-1)\exp\bracket{-K^2/2},$$
  and similarly
  $$\Pr(\max_{k\neq k'}\abs{T_m-\Exp T_m}\geq \sqrt{m}K)\leq 2(n-1)\exp\bracket{-K^2/2}.$$
 We have
  \begin{align*}
  & \Pr(\max_{k\neq k'}\abs{ b_{kk'} - \Exp b_{kk'} }
   < \sqrt{2m}K)
   \notag \\ = \quad &
  \Pr(\max_{k\neq k'}\abs{ S_m+\i T_m - \Exp S_m -\i\Exp T_m } < \sqrt{2m}K) 
  \notag \\  > \quad & \Big(1 - 2(n-1)\exp\bracket{-K^2/2} \Big)^2 \ > \ (1-\delta)^2
  \end{align*}
  if  $\delta>2n\exp\bracket{-K^2/2}$. 

  Now let us estimate the mean $\Exp(b_{kk'})$
  or $\Exp(S_m+\i T_m)$ for $k\neq k'$. Note that $\bar{t}_j$, $j=1,\ldots,m$, are 
  independently and uniformly distributed in
  $[0,1]$. We have three different cases:
  \begin{enumerate}
   \item For $a(\bar{t}_j)=1$, 
     \[
      \Exp(b_{kk'})= m\int_{0}^{1}\e^{2\pi\i Q(k-k')t}\d t= 0.
      \]
   \item For $a(\bar{t}_j)=\bar{t}_j-1/2$,
     \[
     \Exp(b_{kk'})= m\int_{0}^{1}(t-1/2) \e^{2\pi\i Q(k-k')t}\d  t
     = m \e^{\pi i Q(k-k')}\frac{(-1)^{(k-k')Q}}{2\pi\i (k-k') Q},\quad k\neq k',
     \] and thus 
     \[
     \abs{\Exp(b_{kk'})} \leq \frac{m}{2\pi Q},\quad k\neq k'.
     \]
     On the other hand, 
     \[
     \Exp(b_{kk}) = m\int_{0}^{1}(t-1/2)\cdot 1\d t=0.
     \]
   \item For $a(\bar{t}_j)=(\bar{t}_j-1/2)^2$, 
     \[
     \Exp(b_{kk'})= m\int_{0}^{1}(t-1/2)^2 \e^{2\pi\i Q(k-k') t}\d t
     = m \e^{\pi i Q(k-k')} \frac{(-1)^{(k-k')Q}}{(2\pi(k-k')Q)^2},\quad k\neq k'
     \]
and thus 
\[
     \abs{\Exp(b_{kk'})} \leq \frac{m}{(2\pi Q)^2}.
     \]
  \end{enumerate}
For $ k,k'\leq n$, since $ \norm{A_k}^2 = m$, 
  \beq
  \label{a1}
  b_{kk'}\leq \frac{C}{m}\sqrt{2m}K 
  = C\frac{\sqrt{2}K}{\sqrt{m}},\quad k,k'\leq n
  \eeq
  for some universal constant $C$, with probability greater than $(1-\delta)^2$.

On the other hand, for  $k> n$, 
\[
\|A_k\|^2_2=\sum_j (t_j-1/2)^2
\]
which is a sum of $m$ i.i.d. random variables  of mean $1/12$ on $[0,1/4]$. Applying Hoeffding inequality with $t={1/24}$, we have
\[
\IP\Big(|\|A_k\|^2_2-{m\over 12}|\geq {m\over 24}\Big)\leq
2e^{-{m/18}}
\]
and thus
\[
\IP\Big(\|A_k\|^2_2\leq {m\over 24}\Big)\leq 2e^{-{m/18}}.
\]
We conclude from these observations that
 \beq
 \label{a2}
 b_{kk'}&\leq& \frac{C}{m}\bracket{\sqrt{2m}K + \frac{m}{2\pi Q} }
  = C \cdot\bracket{ \frac{\sqrt{2}K}{\sqrt{m}} + \frac{1}{2\pi Q} },\quad
  k\leq n<k'\\
   b_{kk'}&\leq& \frac{C}{m}\bracket{\sqrt{2m}K + \frac{m}{(2\pi Q)^2} }
  = C \cdot\bracket{ \frac{\sqrt{2}K}{\sqrt{m}} + \frac{1}{(2\pi Q)^2} },\quad
  k, k'>n\label{a3}
  \eeq
  with probability at least $(1-\delta)^2-4e^{-m/18}$. 
  (\ref{a1})-(\ref{a3}) are what we set out to prove. 
 
 \end{proof} 
 \section{  Proof of Theorem \ref{thm:omp}}
\label{app:omp}
\begin{proof}
  We  prove the theorem  by induction. %Let $\tau$ be either 1 or 2. 2%Let $(\tau,\tau_*)=(1,\infty)$ or $(2,2)$. 
  Without loss of generality, we assume that the columns of $\bA$ have unit 2-norm. 
 
In the first step,
\beq
 \label{14'}  |F^*_{J_1}Y| + |G^*_{J_1}Y|& =&  |X_{J_1}F^*_{J_1}F_{J_1}+ X_{J_2}F_{J_1}^{*}F_{J_2}+ ... + X_{J_s}F_{J_1}^{*}F_{J_s} +\\
 &&X'_{J_1}F^*_{J_1}G_{J_1}+ X'_{J_2}F_{J_1}^{*}G_{J_2}+ ... + X'_{J_s}F_{J_1}^{*}G_{J_s} + F_{J_1}^*E|\nn \\
&&+|X'_{J_1}G^*_{J_1}G_{J_1}+ X'_{J_2}G_{J_1}^{*}G_{J_2}+ ... + X'_{J_s}G_{J_1}^{*}G_{J_s}+\nn\\
&&X_{J_1}G^*_{J_1}F_{J_1}+ X_{J_2}G_{J_1}^{*}F_{J_2}+ ... + X_{J_s}G_{J_1}^{*}F_{J_s} + G_{J_1}^*E|\nn\\  
  &\geq&  X_{\rm max} - X_{\rm max}(2s-1)\mu - 2\|E\|_2.\nn
\eeq 
On the other hand, $\forall l \notin\hbox{supp}(X)$,
\beq
\label{15}   |F^*_lY| + |G^*_lY|  & =& |X_{J_1}F^*_lF_{J_1}+ X_{J_2}F_l^{*}F_{J_2}+ ... + X_{J_s}F_l^{*}F_{J_s} +\\
 &&X'_{J_1}F^*_lG_{J_1}+ X'_{J_2}F_l^{*}G_{J_2}+ ... + X'_{J_s}F_l^{*}G_{J_s} + F_l^*E|\nn \\
&&+|X'_{J_1}G^*_lG_{J_1}+ X'_{J_2}G_l^{*}G_{J_2}+ ... + X'_{J_s}G_l^{*}G_{J_s}+\nn\\
&&X_{J_1}G^*_lF_{J_1}+ X_{J_2}G_l^{*}F_{J_2}+ ... + X_{J_s}G_l^{*}F_{J_s} + G_l^*E|\nn\\  
  & \le& 2X_{\rm max} s\mu +2\|E\|_2.\nn
\eeq
Hence,  if
 \[
 (4s-1)\mu +  \frac{4\|E\|_2}{X_{\rm max}} < 1,
 \]
 then the right hand side of (\ref{14'}) is greater than
 the right hand side of (\ref{15}) which implies that
 the first  index selected by OMP must belong to $\hbox{supp}(X)$. 

To continue the induction process, we  need the following result.

\begin{proposition}
Let  $Y=\bF X+\bG X'+E$ where $\supp (X')\subseteq\supp (X)=\cS$. Let $\cS^k$ be a set of $k$ indices
containing both $\supp{(\hat X)}$ and $\supp (\hat X')$. Define
\beq
\label{1001}
Y'=Y-\bF \hat X-\bG \hat X'.
\eeq
Clearly, 
$Y'=\bF(X-\hat X)+ \bG(X'-\hat X')+E$. 
If $\cS^k\subseteq \cS$ and  the sparsity $s$ of $X$ satisfies $4s<1+1/\mu$, then
$\bF(X-\hat X)+\bG(X'-\hat X')$ has a unique sparsest representation
$\bF Z+\bG Z'$ with  $Z=X-\hat X$ and $Z'=X'-\hat X'$.
\label{prop22}
\end{proposition}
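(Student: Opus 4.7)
The plan is to reduce the claim to the classical uniqueness-of-sparsest-representation theorem based on the spark of a dictionary. The first step is to pin down the sparsity of the candidate representation. Because $\hat X$ and $\hat X'$ are supported in $\cS^k$, and the hypothesis gives $\cS^k\subseteq\cS=\supp(X)$, we have $\supp(Z)\subseteq\cS$ and $\supp(Z')\subseteq\supp(X')\cup\cS^k\subseteq\cS$. Stacking into a single coefficient vector $\bz=\bigl[\begin{smallmatrix}Z\\ Z'\end{smallmatrix}\bigr]\in\IC^{2n}$ against the concatenated dictionary $\bA=[\bF\ \bG]$, the total number of nonzeros of $\bz$ is at most $|\cS|+|\cS|=2s$.

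The second step is the standard spark-coherence bound: for any dictionary with mutual coherence $\mu$, every collection of $\lfloor 1+1/\mu\rfloor$ columns is linearly independent, so $\mathrm{spark}(\bA)\geq 1+1/\mu$. The hypothesis $4s<1+1/\mu$ therefore translates to $\mathrm{spark}(\bA)>4s$, which is exactly the threshold that ensures the zero vector is the unique element of $\ker\bA$ of sparsity at most $4s$.

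The third step is the uniqueness argument. Suppose $(\tilde Z,\tilde Z')$ is any other pair with
\[
\bF\tilde Z+\bG\tilde Z'=\bF Z+\bG Z'\qquad\text{and}\qquad |\supp(\tilde Z)|+|\supp(\tilde Z')|\leq 2s.
\]
Stacking as $\tilde\bz$, we obtain $\bA(\bz-\tilde\bz)=0$, and the difference $\bz-\tilde\bz$ has at most $2s+2s=4s$ nonzero entries. By the spark bound established above, $\bz=\tilde\bz$, which means $\tilde Z=Z$ and $\tilde Z'=Z'$. Hence the representation in the statement is the unique sparsest one.

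The only subtle point, and the place where one has to be careful, is the bookkeeping of sparsities across the two blocks. Because $Z$ and $Z'$ occupy disjoint index ranges in the concatenated vector $\bz$, their supports add rather than merge, which gives $2s$ (not $s$) for $\bz$ and therefore $4s$ (not $2s$) for the difference $\bz-\tilde\bz$. This doubling is precisely what forces the factor $4$ in the hypothesis $4s<1+1/\mu$ versus the factor $2$ that appears in the familiar single-dictionary uniqueness statement. Once this is tracked correctly, the argument reduces to invoking $\mathrm{spark}(\bA)\geq 1+1/\mu$ and reading off uniqueness.
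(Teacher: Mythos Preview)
Your proof is correct and follows essentially the same approach as the paper's own argument. The paper's proof is terser: it records $\supp(Z),\supp(Z')\subseteq\supp(X)$, notes $\|Z\|_0+\|Z'\|_0\leq 2s<\tfrac12(1+1/\mu)$, and then invokes the classical coherence-based uniqueness result without spelling out the spark argument; you unpack that step explicitly via the spark bound and the difference-vector trick, which is exactly what underlies the cited result.
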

\begin{proof}
 Clearly $\supp (Z), \supp(Z')\subseteq \supp (X)$. 
Since 
\[
\|Z\|_0+\|Z'\|_0\leq  2s< {1\over 2} (1+{1\over \mu})
\]
we conclude that $Z$ and $Z'$ are the unique sparsest representation of $\bF(X-\hat X)+\bG(X'-\hat X')$. \end{proof}

Proposition \ref{prop22} says  that selection of a column, followed by the formation of the residual signal, leads to a situation like before, where the ideal noiseless signal has no more representing columns than before, and the noise level is the same.

Suppose
that the set $\cS^k\subseteq \supp (X)$ of $k$ distinct indices
has been selected and
that $\hat X$ in Proposition \ref{prop22} solves the following least squares problem
\beq
\label{ls}
\left(\begin{matrix}
\hat X\\
\hat X'
\end{matrix}\right)=\hbox{arg}\min\|Y-[\bF\ \bG] Z\|_{2},\quad \hbox{s.t.}\quad  \supp (Z)\subseteq \cS^k. 
\eeq

Let $\bF_{\cS^k}$ and $\bG_{\cS_k}$ be, respectively,  the column submatrices of $\bF$ and $\bG$ indexed by the set $\cS^k$. By (\ref{1001}) and (\ref{ls}),  $\bF^*_{\cS^k}Y'=\bG^*_{\cS_k}Y'=0,$ which implies
that no element of $\cS^k$ gets  selected at
the $(k+1)$-st step. 

In order to ensure that some element in $\supp (X) \setminus \cS^k$ gets selected at the $(k+1)$-st
step we only need to repeat the calculation (\ref{14'})-(\ref{15}) to obtain the condition 
\beq
\nn
(4s-1)\mu + \frac{4\|E\|_2}{|X_{J_{k+1}}|+|X'_{J_{k+1}}|} < 1
\eeq
which  follows from  
\beq
\label{snr}
(4s-1)\mu + \frac{4\ep}{X_{\rm min}} < 1.
\eeq

By the $s$-th step, all elements of the support set
are selected and by the nature of the least squares
solution the $2$-norm of the residual is at most $\ep$. 
Thus the stopping criterion is met and the iteration
stops after $s$ steps.

On the other hand, it follows from the calculation
\beqn
2\|Y'\|_2 &\geq & \big| F^*_{J_{k+1}} Y'\big|+\big|G^*_{J_{k+1}}Y'\big|\\
&=&  |X_{J_{k+1}}+ \sum_{i=k+2}^sX_{J_i}F_{J_{k+1}}^{*}F_{J_i}+\sum_{i=k+1}^sX'_{J_i}F_{J_{k+1}}^{*}G_{J_i}+ F_{J_{k+1}}^*E|\nn \\
&+& |X'_{J_{k+1}}+ \sum_{i=k+1}^sX_{J_i}G_{J_{k+1}}^{*}F_{J_i}+\sum_{i=k+2}^sX'_{J_i}G_{J_{k+1}}^{*}G_{J_i}+ G_{J_{k+1}}^*E|\nn \\
&\geq&|X_{J_{k+1}}|+|X'_{J_{k+1}}|-(|X_{J_{k+1}}|+|X'_{J_{k+1}}|)\mu\nn\\
&&- 2(s-k-1)\mu(|X_{J_{k+2}}|+|X_{J_{k+2}}|)-2\|E\|_2\\&\geq& (1-\mu (2s-2k-1))(|X_{J_{k+1}}|+|X'_{J_{k+1}}|)-2\|E\|_2
\eeqn
and (\ref{snr})
that $\|Y'\|_{2}> \ep$ for $k=0,1,\cdots, s-1$. Thus
the iteration does not stop until $k=s$. 

By 
(\ref{ls}), we have 
\[
\|Y-\bF \hat X-\bG \hat X'\|_{2}\leq \|Y-\bF X-\bG X'\|_{2}  \leq\ep
\]
 and
\[
\|\bF(X-\hat X)+\bG(X'-\hat X')\|^2_{2}\leq 2 \|Y-\bF X-\bG X'\|^2_{2}+2\|Y-\bF \hat X-\bG \hat X'\|^2_{2}\leq 2 \ep^2
\]
implying that 
\[
\|\hat X-X\|^2_{2}+\|\hat X'-X'\|_2^2\leq 2\ep^2/\lambda^2_{\rm min}
\]
where 
\[
\lambda_{\rm min}=  \hbox{\rm the $2s$-th largest  singular
value of $\bA$.}
\].

The desired error bound can now be obtained from
 the following result (Lemma 2.2, \cite{DET06}). 
\begin{proposition} 
 Suppose $2s<1+\mu^{-1}$. Every $m\times (2s)$ column submatrix of $\bA$ has the $2s$-th singular value bounded below by $ \sqrt{1-\mu (2s-1)}$.
 \label{prop4}
\end{proposition}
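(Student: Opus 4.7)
The plan is to establish this as a standard Gershgorin-type bound on the Gram matrix of an arbitrary column submatrix. First I would fix an $m\times(2s)$ column submatrix $\bB$ of $\bA$ and note that, without loss of generality (as assumed already in Theorem \ref{thm:omp}), the columns of $\bA$, and hence of $\bB$, have unit $2$-norm. The singular values of $\bB$ are the nonnegative square roots of the eigenvalues of the Hermitian positive semidefinite Gram matrix $\bM=\bB^{*}\bB$, so it is enough to show that every eigenvalue of $\bM$ is at least $1-(2s-1)\mu$; the conclusion then follows by taking square roots.

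Next I would exploit the structure of $\bM$. Since the columns of $\bB$ are unit vectors, all diagonal entries of $\bM$ equal $1$. Each off-diagonal entry is of the form $B_{i}^{*}B_{j}$ with $i\neq j$, and by the definition of mutual coherence $\mu=\mu(\bA)$, we have $|B_{i}^{*}B_{j}|\leq \mu$. Each row of $\bM$ therefore contains $2s-1$ off-diagonal entries, so the absolute off-diagonal row sum is at most $(2s-1)\mu$.

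Then I would invoke the Gershgorin circle theorem: every eigenvalue $\lambda$ of $\bM$ lies in a disk of the form $|z-1|\leq (2s-1)\mu$, which gives
\begin{equation*}
\lambda \;\geq\; 1-(2s-1)\mu \;>\; 0,
\end{equation*}
where positivity is guaranteed by the hypothesis $2s<1+\mu^{-1}$. Consequently the smallest singular value of $\bB$ is at least $\sqrt{1-(2s-1)\mu}$, as claimed. Since the submatrix $\bB$ was arbitrary, the same bound holds uniformly over all $m\times(2s)$ column submatrices.

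I do not expect a genuine obstacle: the argument is entirely standard and relies only on the coherence bound for inner products and on Gershgorin. The only subtle point is the unit-norm normalization of the columns, which is consistent with the convention already used in Appendix \ref{app:omp}; in a non-normalized setting one would instead work with $\bB^{*}\bB$ relative to a diagonal weighting and obtain the same bound after renormalizing.
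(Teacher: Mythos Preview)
Your argument is correct and is exactly the standard Gershgorin bound on the Gram matrix used to prove this fact. Note that the paper does not actually prove Proposition~\ref{prop4}; it simply quotes it as Lemma~2.2 of \cite{DET06}, whose proof is precisely the one you outline.
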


By Proposition \ref{prop4},  $\lambda_{\rm min}
\geq  \sqrt{1-\mu (2s-1)}$ and thus the desired estimate (\ref{err:omp}) follows.
%}
\end{proof}

\commentout{
 \section{Proof of Lemma \ref{lem2}} \label{app2}
 \begin{proof}
Observe that $\bA\bA^*=\bF\bF^*+\bG\bG^*$  with
  \begin{align*}
   (\bF\bF^\ast)_{i,j} &
   = \sum_{k=1}^{n} \e^{2\pi\i Q(\bar{t}_j-\bar{t}_i)k} 
   = \e^{2\pi\i Q(\bar{t}_j-\bar{t}_i)}\cdot {1-e^{2\pi i Q (\bar{t}_j-\bar{t}_i) n}\over 1-e^{2\pi i Q (\bar{t}_j-\bar{t}_i)}} %\e^{-\i\pi Q(\bar{t}_j-\bar{t}_i)}\frac{\sin(n\pi Q(\bar{t}_j-\bar{t}_i))}{\sin(\pi Q(\bar{t}_j-\bar{t}_i))} 
   \\
   (\bG\bG^\ast)_{i,j} &
   = \sum_{k=1}^{n}( \bar{t}_i-1/2)(\bar{t}_j-1/2)\e^{2\pi\i Q(\bar{t}_j-\bar{t}_i)k} 
   = ( \bar{t}_i-1/2)(\bar{t}_j-1/2)\e^{2\pi\i Q(\bar{t}_j-\bar{t}_i)}{1-e^{2\pi i Q (\bar{t}_j-\bar{t}_i) n}\over 1-e^{2\pi i Q (\bar{t}_j-\bar{t}_i)}}.%\bar{t}_i\bar{t}_j \e^{-\i\pi Q(\bar{t}_j-\bar{t}_i)}\frac{\sin(n\pi Q(\bar{t}_j-\bar{t}_i))}{\sin(\pi Q(\bar{t}_j-\bar{t}_i))} 
  \end{align*}
 % with the help of the formula  $\sum_{k=-n/2}^{n/2-1}\e^{\i x k} = \e^{-\i x/2}\sin(nx/2)/\sin(x/2)$  provided $x\notin\pi\mathbb{Z}$.
 Thus  
  \beq
  \label{30}
   |(\bA\bA^\ast)_{i,j}| = \big(1+( \bar{t}_i-1/2)(\bar{t}_j-1/2)\big) \left|{\sin{[\pi Q (\bar{t}_j-\bar{t}_i) n]}\over \sin{[\pi Q (\bar{t}_j-\bar{t}_i)]}}\right| %\e^{-\i Z}\frac{\sin(nZ)}{\sin(Z)}
  \eeq
which can be controlled if the random variables
 $ Z_{ij} =Q(\bar{t}_j-\bar{t}_i)$ are bounded away from integers. 
 $ Z_{ij}$ has the density 
 \[
 f_Z = \ind_{[0, Q]}\ast \ind_{[0,Q]}
 \]
  since
  $\bar{t}_i, \bar{t}_j$ are independently and uniformly distributed in $[0,1]$. Clearly $f_Z$ is a triangular function on $[0,2Q]$ and satisfies $|f_Z|\leq 1/Q$. 
 Let 
  $$
   \zeta= \min_{ \substack{ i\neq j}}
   \min_{k\in\mathbb{Z}}\set{\abs{Z_{ij}-k}} .
  $$
  Hence, for small $b>0$,
  $$
   \Pr(\zeta > b) \ > \ (1-C_1 b)^{m\cdot(m-1)}
  $$
  where the power counts for all possible pairs $(\bar{t}_i,\bar{t}_j)$, $1\leq i\neq j\leq m$.
  Thus, with probability greater than $(1-C_1 b)^{m\cdot(m-1)}$ we have
  \[
  \abs{(\bA\bA^*)_{i,j}} < \frac{2}{\pi b},\quad i\neq j. 
  \]
   By choosing $b = \frac{2(m-1)}{n\pi}$ and applying
   Gershgorin circle theorem, 
  we have $ \norm{\bA\bA^\ast - n\bI_m}_2 < n$, or 
  equivalently $\norm{\bA}_2^2 \leq 2n$.
\end{proof}
}

\commentout{
\section{Proof of Lemma \ref{lemm4}}
\label{app4}
\begin{proof}
  The scalar product of two distinct columns of $\bA=[\bF\ \bG \ \bH]$
is of the form:
\beqn
%\label{1.20-1}
b_{\bp\bp'}&=& \sum_{j=1}^{m_2}\sum_{k=1}^{m_1} a_{j}(\theta_k) e^{-2\pi\i\ell\nu_{j}\bdhat_k\cdot (\bp-\bp')} 
 \eeqn
 where
 \beqn
a_{j}(\theta)&\in& \{1,\ \nu_j\cos\theta/\sigma_1,\ \nu_j\sin\theta/\sigma_2,\ \nu_j^2\cos^2\theta/\sigma_1^2,\ 
  \nu_j^2\sin^2\theta/\sigma_2^2,\ \nu_j^2\cos\theta\sin\theta/(\sigma_1\sigma_2)\}.\nn
  \eeqn

Consider the summation over $k$ in $b_{\bp\bp'}$. Applying the Hoeffding inequality as in the proof of Lemma \ref{lem1}
we obtain
\commentout{
Consider the first summation over $k=1,...,p$. Let
\[
P_k=\cos{(2\pi \ell\nu_j\bdhat_k\cdot(\bp-\bp'))},\quad
Q_k=\sin{(2\pi\ell\nu_j\bdhat_k\cdot(\bp-\bp'))}
\]
and
\[
S_{m_1}=\sum_{k=1}^{m_1} P_k,\quad T_{m_1}=\sum_{k=1}^{m_1} Q_k.
\]
Then  the summation can be
bounded by
\beq
\label{57'}
\lt|\sum_{k=1}^{m_1}e^{-2\pi\i\ell\nu_j\bdhat_k\cdot (\bp-\bp')}\rt|\leq
\sqrt{|S_{m_1}-\IE S_{m_1} |^2+|T_{m_1} -\IE T_{m_1}|^2}+\sqrt{|\IE S_{m_1}|^2+|\IE T_{m_1}|^2}
\eeq

We apply the Hoeffding inequality to both $S_{m_1}$ and $T_{m_1}$. With
\[
t=K/\sqrt{m_1},\quad K>0
\]
we obtain 
\beq
\label{hoeff2}
\IP\lt[m_1^{-1}\lt|S_{m_1} -\IE S_{m_1}\rt|\geq K/\sqrt{m_1}\rt]
&\leq& 2e^{-{K^2/2}}\\
\IP\lt[m_1^{-1}\lt|T_{m_1} -\IE T_{m_1}\rt|\geq K/\sqrt{m_1}\rt]
&\leq& 2e^{-{K^2/2}}\label{hoeff22}.
\eeq

Note that the quantities $S_{m_1}, T_{m_1}$ depend on $
\bp-\bp'$ but they possess the symmetry:  $S_{m_1}(\bp-\bp')
=S_{m_1}(\bp'-\bp), T_{m_1}(\bp-\bp')=-T_{m_1}(\bp'-\bp)$. 
Furthermore, a moment of reflection reveals that thanks
to the square symmetry of the lattice 
there
are at most $n-1$ different values $|S_{m_1}|$ and $|T_{m_1}|$
among the $n(n-1)/2$  pairs  of $(\bp, \bp')$.

We use (\ref{hoeff2})-(\ref{hoeff22}) and the union bound to obtain
\beqn
&&{\IP\lt[\max_{\bp\neq \bp'}m_1^{-1}\lt|S_{m_1}-\IE S_{m_1}\rt|\geq K/\sqrt{m_1}\rt]} \leq 2(n-1) \cdot \e^{-K^2/2}\\
&&{\IP\lt[\max_{\bp\neq \bp'}m_1^{-1}\lt|T_{m_1}-\IE T_{m_1}\rt|\geq K/\sqrt{m_1}\rt]}\leq 2(n-1) \cdot \e^{-K^2/2}
\eeqn
where the factor $4 n$ is due to the structure of square lattice.
Hence, by (\ref{57'}) 
\beq
\lefteqn{{\IP\lt[\max_{\bp\neq \bp'}m_1^{-1}\lt|\sum_{k=1}^{m_1}e^{-2\pi\i \ell\nu_j\bdhat_k\cdot (\bp-\bp')}- \IE\lt[\sum_{k=1}^{m_1}e^{-2\pi\i\ell\nu_j\bdhat_k\cdot (\bp-\bp')}\rt] \rt|< \sqrt{2\over m_1} K\rt]}}\\
&>&(1-2(n-1) \e^{-K^2/2})^2.\label{10.1}
\eeq
By  (\ref{m-2})
 the right hand side of (\ref{10.1})
is greater than $(1-\delta)^2$. 
}
\beq
{{\IP\lt[\max_{\bp\neq \bp'}{1\over m_1}\lt|\sum_{k=1}^{m_1}a_{j}(\theta_k)e^{-2\pi\i \ell\nu_j\bdhat_k\cdot (\bp-\bp')}- \IE\lt[\sum_{k=1}^{m_1}a_{j}(\theta_k)e^{-2\pi\i\ell\nu_j\bdhat_k\cdot (\bp-\bp')}\rt] \rt|< {\sqrt{2}K\over \sqrt{m_1}}\rt]}}
 > (1-\delta)^2.\label{10.1}\nn
\eeq

To estimate 
\beq
{1\over m_1}\IE\lt[\sum_{k=1}^{m_1}a_{j}(\theta_k)e^{-2\pi\i\ell\nu_j \bdhat_k\cdot (\bp-\bp')}\rt]
&=&\int_0^{2\pi} a_j(\theta)\e^{-2\pi\i\ell\nu_j \bdhat\cdot (\bp-\bp')} \phi(\theta) d\theta,\quad \bdhat=(\cos\theta,\sin\theta). \label{39-3}\nn
\eeq
\commentout{
we expand $f$ in the Fourier series
\[
f(\theta)=\sum_{l} c_le^{\i l\theta}
\]
and denoting the angle of $\bp'-\bp$ by $\theta_{\bp'-\bp}$
we can write (\ref{39-3}) as
\beq
&&\sum_l c_l \e^{\i l \theta_{\bp'-\bp}} \int^{2\pi }_0e^{\i l\theta}
e^{2\pi\i \ell \nu_j |\bp-\bp'|\cos\theta }d\theta\nn\\
&=&{2\pi}\sum_l c_l \e^{\i l (\theta_{\bp'-\bp}+\pi/2)} J_l(2\pi\ell\nu_j)
\label{302}
\eeq
where $J_l$ is the Bessel function of order $l$. 
\commentout{
which can be written as  a finite sum of integrals of the form
\beq
\label{42-5}
\int_a^b \e^{-2\pi\i\ell\nu_j\bdhat\cdot (\bp-\bp')} \phi(\theta) d\theta,\quad \phi(\theta)\neq 0,\forall\theta\in (a,b).
\eeq
}The large argument asymptotic for $J_l$
\[
J_l(z)=\sqrt{2\over \pi z}\lt\{ \cos{(z-l\pi/2- \pi/4)}
+\cO(|z|^{-1})\rt\},\quad z\gg 1
\]
suggests the decay estimate (\ref{21-4}). 
}
%Alternatively and more directly, 
%the asymptotic of (\ref{39-3})  can be
%derived by
we use  the method
of stationary phase (Theorem XI. 14 and XI. 15 of \cite{RS}).

\begin{proposition} 
\label{prop:sph}
Let $g_{\bp, \bp'}(\theta)=\bdhat\cdot (\bp-\bp')/|\bp-\bp'|$
which is in $ C^\infty([-\pi,\pi]),\forall\bp,\bp'\in\cL$. 

(i) Suppose  ${d\over d\theta}g_{\bp,\bp'}(\theta)\neq 0, \forall
\theta \in [a,b],\forall \bp,\bp'\in\cL$.
Then for all $\phi\in C^h_0([a,b])$
\beq
\lt|\int \e^{-2\pi\i\ell\nu_j|\bp-\bp'| g_{\bp,\bp'}(\theta)} \phi(\theta) d\theta\rt|
\leq c_h(1+\om|\bp-\bp'|)^{-h} \|\phi\|_{h,\infty}\label{91}
\eeq
for some constant $c_h$ independent of $\phi$.
Moreover, since $\{g_{\bp,\bp'}:\bp,\bp'\in \cL\}$ 
is a compact subset of $C^{h+1}([a,b])$, the constant $c_h$
can be chosen uniformly for all $\bp,\bp'\in \cL$. 

(ii) Suppose ${d\over d\theta} g_{\bp,\bp'}(\theta)$  vanishes at $\theta_*\in (a,b)$. Since $ {d^2\over d\theta^2} g_{\bp,\bp'}(\theta_*)\neq 0$,  there exists a constant $c_t, t>1/2$
such that  
\beq
\lt|\int \e^{-2\pi\i\ell\nu_j |\bp-\bp'| g_{\bp,\bp'}(\theta)}\phi(\theta)d\theta
\rt|\leq c_t (1+\om|\bp-\bp'|)^{-1/2}\|\phi\|_{t,\infty} \label{92}
\eeq
where the constant $c_t$ is independent of $ \bp,\bp'\in\cL$. 

 \end{proposition}
Note that the estimates (\ref{91}) and (\ref{92}) can be made independent of $\nu_j\in [\nu_0,\nu_*]$
by slight adjustment of the numerical constants.

 Combining the above estimates with proper normalization of columns we 
 obtain (\ref{mut})
with probability greater than $1-2\delta m_2$ since the estimates are independent of $\nu_j$. 

\end{proof}
}

\end{appendix}

\bigskip

{\bf Acknowledgement.} We thank an anonymous referee for a suggestion that inspired
our formulation of Algorithm 3 (LOT) in Section 6.  The research is partially supported by
the NSF grant DMS-0908535.

\bigskip
 \bibliographystyle{amsalpha}
%\bibliography{arXiv,book,journal}

\end{document}